\documentclass[12pt,reqno]{amsart}
\usepackage{amsmath,amsfonts,amssymb,amscd,amsthm,amsbsy,color}
\usepackage[dvips]{graphicx}
\usepackage{comment}

\usepackage{wrapfig}
\usepackage{fancybox}
\usepackage[maxfloats=100]{morefloats}

\usepackage{graphicx,epsfig}

\textwidth=6.25truein
\textheight=8.5truein
\hoffset=-.75truein
\voffset=-.75truein
\footskip=18pt

\numberwithin{equation}{section}

\newtheorem{theorem}{Theorem}
\newtheorem{meta-thm}[theorem]{Meta-Theorem}

\newtheorem{proposition}[theorem]{Proposition}

\newtheorem{remark}[theorem]{Remark}
\newtheorem{definition}[theorem]{Definition}


\newcommand\beq[1]{ \begin{equation}\label{#1} }
\newcommand{\eeq}{ \end{equation} }

\newcommand\beqa[1]{ \begin{eqnarray} \label{#1}}

\newcommand{\red}{\textcolor{black}}

\newcommand{\eeqa}{ \end{eqnarray} }
\newcommand{\beqano}{ \begin{eqnarray*} }
\newcommand{\eeqano}{ \end{eqnarray*} }
\newcommand\equ[1]{{\rm (\ref{#1})}}

\def\F{{\mathcal F}}

\def\G{{\mathcal G}}

\def\H{{\mathcal H}}

\def\integer{{\mathbb Z}}


\begin{document}

\title[Resonances in the Earth's space environment]
{Resonances in the Earth's space environment}

\author[A. Celletti]{Alessandra Celletti}

\address{
Department of Mathematics, University of Roma Tor Vergata, Via
della Ricerca Scientifica 1, 00133 Roma (Italy)}
\email{celletti@mat.uniroma2.it}

\author[C. Gales]{Catalin Gales}

\address{
Department of Mathematics, Al. I. Cuza University, Bd. Carol I 11,
700506 Iasi (Romania)}
\email{cgales@uaic.ro}

\author[C. Lhotka]{Christoph Lhotka}

\address{
Space Research Institute, Austrian Academy of Sciences
Schmiedlstrasse 6, A-8042 Graz, Austria}
\email{christoph.lhotka@oeaw.ac.at}

\thanks{Corresponding author: \sl E-mail address: \rm celletti@mat.uniroma2.it (Alessandra Celletti)}
\thanks{A.C. was partially supported by GNFM-INdAM, MIUR-PRIN 20178CJA2B ``New Frontiers of Celestial Mechanics: theory and Applications'' and
acknowledges the MIUR Excellence Department Project awarded to the
Department of Mathematics, University of Rome Tor Vergata, CUP
E83C18000100006.
 C.G. was partially supported by CNCS-UEFISCDI, project number
PN-III-P1-1.1-TE-2016-2314. C.L. was partially supported by
the Austrian Science Fund (FWF) with project number P30542-N27.
A.C. and C.G. acknowledge the EU-ITN Stardust-R}


\baselineskip=18pt              




\begin{abstract}
We study the \red{presence} of resonances in the region of space
around the Earth. We consider a massless body (e.g, a dust
particle or a small space debris) subject to different forces: the
gravitational attraction of the geopotential, the effects of Sun
and Moon. We distinguish different types of resonances: tesseral
resonances are due to a commensurability involving the revolution
of the particle and the rotation of the Earth, \red{semi-secular
resonances include the rates} of variation of the mean anomalies
of Moon and Sun, \red {while} secular resonances just
\red{depend} on the rates of variation of the arguments of
perigee and the longitudes of the ascending nodes \red{of the
perturbing bodies}.  We \red{characterize} such resonances,
giving precise statements on the regions where the resonances can
be found and \red{provide} examples of some specific
commensurability relations.
\end{abstract}

\keywords{Resonances, Secular resonances, Satellite dynamics,
Space debris, Geopotential, Equilibria, FLI}

\maketitle

\tableofcontents

\section{Introduction}\label{sec:intro}

The dynamics of an object around the Earth has revealed many
interesting aspects, thanks to the wide variety of behaviors
occurring at different altitudes from the Earth. The space around
our planet is usually split into three different regions,
characterized by the acronyms LEO, MEO and GEO, where LEO stands
for Low-Earth-Orbit ranging up to 2\,000 km of altitude, MEO
stands for Medium--Earth--Orbit between 2\,000 and 30\,000 km, GEO
stands for \red{Geostationary--Earth--Orbit}. Objects in these
regions feel different effects: Earth's gravitational potential
(including both the Keplerian part and the potential due to the
non-spherical shape), gravitational influence of Sun and Moon,
effect of Solar radiation pressure, dissipation due to the
atmospheric drag (acting only in LEO), \red{tides, or reradiation
from the atmosphere}. In this work we will not be concerned with
the dynamics in LEO (which requires the analysis of the
atmospheric drag), but we are rather concentrated on the analysis
of orbits in MEO and GEO \red{neglecting all non-conservative
forces}. Even more specifically, we focus on resonant motions
occurring in MEO and GEO. Due to the high complexity of the
system, there are several kinds of resonances, which involve the
rates of variation of different variables associated to the
object, Earth, Sun and Moon, each variable having its own
time-scale.

More precisely, the angular variables associated to the dynamics
of an object around the Earth are the mean motion, the argument of
perigee and the longitude of the ascending node. Analogous
variables are used to describe the dynamics of Sun and Moon, while
the rotational motion of the Earth is described by the sidereal
time. Associated to these variables, we have three different time
scales:

\begin{description}
        \item[$(i)_t$] \sl short periodic terms, \rm involving the mean anomaly and the sidereal time, which are
        fast angles with periods of days;
        \item[$(ii)_t$] \sl semi-secular terms, \rm involving the mean anomalies of Moon and/or Sun,
        which are semi-fast angles with periods of one month and one
        year;
        \item[$(iii)_t$] \sl secular terms, \rm involving the arguments of perigee and the longitudes of the ascending \red{nodes of the
        perturbers},
        which are slow angles with periods of several years.
\end{description}

As a consequence, we have \red{a hierarchy of resonances},
involving combinations of the angles \red{evolving on different
time scales: fast angles, like the mean motion of the object and
the sidereal time of the Earth, slow angles, like the arguments of
perigee or the longitudes of the ascending node (either of the
object, Sun or Moon), intermediate angles (entering the
semi-secular resonances), like the mean motions of the Moon or the
Sun.} We will give precise definitions in the following sections,
but we anticipate here the terminology:

\begin{description}
        \item[$(i)_r$] \sl tesseral resonances, \rm occurring when there is a commensurability relation between
        the rates of variation of the mean motion of the object and of the sidereal time;
                \red{such \red{a} relation will involve also the rates of variation of the argument of perigee and
        the longitude of the ascending node of the object;}
        \item[$(ii)_r$] \sl Lunar or Solar semi-secular resonances, \rm occurring when there is a commensurability relation
        between the rates of variation of the argument of perigee
        of the object, the longitude of the ascending node of the
                object, \red{and} the mean motion of \red{the} Moon or \red{the} Sun;
        \item[$(iii)_r$] \sl Lunar or Solar secular resonances, \rm occurring when there is a commensurability relation between
        the rates of variation of the arguments of perigee and the longitudes of the ascending node of
        the object and the Moon, or the object and the Sun.
\end{description}

The most \red{deeply investigated} examples of resonances in the Earth's space
environment are the 1:1 and 2:1 tesseral resonances, \red{where}
the orbital frequency of the object is equal or twice that of
the Earth; in the 1:1 resonance we find geostationary satellites,
while in the 2:1 resonance we find GPS satellites (see, e.g.,
\cite{ASRV}, \cite{EH}, \cite{GCPE}, \cite{GC}, \cite{Klinkrad},
\cite{PCDL}, \cite{SARV}, \cite{SRT}). The 1:1 and 2:1 resonances are located,
respectively, at 42\,164 km and 26\,560 km from the center of the
Earth.

Other tesseral resonances might be equally relevant \red{for
spacecraft operations}, \red{while lunisolar, semisecular and
secular resonances, are important in designing disposal
strategies. The study of resonances} deserves much attention,
especially in the context of space debris, which are small
remnants of satellites, left in space after collisions, explosions
or when a satellite becomes non-operative. Space debris represent
a great threat due to the damages that their collision can provoke
with operational or manned satellites.

\red{Being covered by a web-like structure of resonances that
induce various effects at different time scales, the
circumterrestrial space is a dynamically complex environment.
Inside resonance regions, the dynamical behavior of an
uncontrolled object depends on the time scale at which the motion
is studied and its initial position in the phase space. This
remark motivates the need for a {\it systematic classification} of
the {\it different types of resonances}, which manifest at
different time scales, and a thorough investigation of their
locations. Such a task involves explicit definitions and formulas
for all constraints, a characterization of the main dynamical
effects of resonances from the same category, a study of the
location of resonances as a function of various parameters.}

In this work we make a comprehensive investigation of the
occurrences of the different kinds of resonances described in
$(i)_r$-$(ii)_r$-$(iii)_r$ above. The natural tool is to adopt
Hamiltonian formalism using a model that includes the geopotential
(expanded in spherical harmonics and limited to a finite number of
terms), the Lunar and Solar potentials. Within this setting, we
give a characterization of the tesseral, Lunar and Solar
semi-secular resonances, and Lunar and Solar secular resonances.
In particular, we give results on the regions in the orbital
elements space (semimajor axis, eccentricity and inclination)
\red{where the resonances can be found and how their location
changes with the variation of the semimajor axis, eccentricity and
inclination. We emphasize the fact that, even when large variations
of the eccentricity and inclination induce only a small change in
the location of the resonance, such a change may be important when
compared to the size of the separatrices, dividing resonant
and non-resonant kinds of motions in phase space. Moreover, we
give a brief description of the main dynamical effects and
phenomena of resonances from the same category. If tesseral
resonances induce a variation of the semimajor axis on time scales
of the order of hundreds of days, lunisolar resonances provoke
variations of eccentricity and inclination  on much longer time
scales, of the order of tens or hundreds of years.} The study of
resonances depends, of course, on the (integer) values of the
coefficients providing the resonance relation. The results \red{of
this work} can be used as a guideline to study the geography of
the different kinds of resonances, hence affecting the choice of
the location of satellites or rather giving information on
possible locations of disposal orbits.

This article is organized as follows. In
Section~\ref{sec:hamiltonian} we describe the model (including the
geopotential, Lunar and Solar disturbing functions) and we give
the definitions of tesseral, semi-secular and secular resonances.
The quadrupolar approximation is described in
Section~\ref{sec:quadrupolar}. The study of tesseral resonances is
performed in Section~\ref{sec:tesseral}, while semi-secular
resonances are investigated in Section~\ref{sec:semi}, and secular
resonances are described in Section~\ref{sec:secular}. \red{Some
conclusions are drawn in Section~\ref{sec:conclusions}. A short
survey of the chaos indicator used in this work (the Fast Lyapunov
Indicator) is given in the Appendix.}

\section{The model and the resonances}\label{sec:hamiltonian}

We consider a massless body, say $S$, orbiting around the Earth in
a region that includes both MEO and GEO; in our analysis, we do
not include the LEO region in which the dissipative effect due to
the atmospheric drag should be considered, thus changing the
dynamics from conservative to dissipative. Moreover, we also
neglect non-gravitational forces, like \red{reradiation of the
atmosphere, tides,} the interaction with the Solar wind, or Solar
radiation (acting also in GEO, see \cite{Lho2016}). We study the
dynamics of $S$ under the effects of the oblateness of the Earth,
and the Lunar and Solar attractions.

This study is accomplished by adopting the Hamiltonian formalism and by introducing the action--angle Delaunay variables, denoted as
$(L,G,H,M,\omega,\Omega)$. We remind that the action variables $(L,G,H)$ are related to the orbital
elements $(a,e,i)$ by the expressions
\begin{equation}\label{LGH_aei}
L=\sqrt{\mu_E a}\,,\qquad  G=L \sqrt{1-e^2}\,, \qquad H=G \cos
i\,,
\end{equation}
where $a$ is the semimajor axis, $e$ the eccentricity, $i$ the
inclination. As for the angle variables, the physical meaning is
the following: $M$ is the mean anomaly, $\omega$ is the argument of
perigee, $\Omega$ is the longitude of the ascending node.
The quantity $\mu_E$ is equal to $\mu_E=\G m_E$ with $\G$ the gravitational constant and $m_E$ the
mass of the Earth. The orbital elements of the small body are referred to the celestial equator.

\red{The corresponding Hamiltonian can be written as
\beq{ham}
\mathcal{H}=-{\mu^2_E\over {2L^2}}+\mathcal{H}_{Earth}(\Upsilon,\theta)
-\mathcal{R}_{Sun}(\Upsilon,\Upsilon_S)-\mathcal{R}_{Moon}(\Upsilon, \Upsilon_M)\ ,
\eeq
where $\theta$ denotes the sidereal time, we denote by $\Upsilon=(a, e, i, M,\omega,\Omega)$,
$\Upsilon_S=(a_S, e_S, i_S, M_S,$ $\omega_S, \Omega_S)$, $\Upsilon_M=(a_M, e_M, i_M, M_M, \omega_M, \Omega_M)$
the orbital elements of the massless body,  Sun and Moon, while $\mathcal{H}_{Earth}$, $\mathcal{R}_{Sun}$, $\mathcal{R}_{Moon}$ describe the perturbations due to the Earth, Sun and Moon, respectively.}

\subsection{The geopotential part}\label{sec:geopotential_Ham}
Denoting by $R_E$ the Earth's radius, following \cite{Kaula} we expand $\mathcal{H}_{Earth}$ as
\beq{Rearth}
\mathcal{H}_{Earth}=- {{\mu_E}\over a}\ \sum_{n=2}^\infty \sum_{m=0}^n \Bigl({R_E\over a}\Bigr)^n\ \sum_{p=0}^n F_{nmp}(i)\
\sum_{q=-\infty}^\infty G_{npq}(e)\ S_{nmpq}(M,\omega,\Omega,\theta)\ ,
\eeq
where the functions $F_{nmp}$, $G_{npq}$, $S_{nmpq}$ are given by classical relations (\cite{Kaula}), which are given
below for completeness. The expression of $F_{nmp}$ is given by
\beqa{Ffun}
F_{nmp}(i)&=&\sum_w {{(2n-2w)!}\over {w!(n-w)!(n-m-2w)!2^{2n-2w}}} \sin^{n-m-2w}i\ \sum_{s=0}^m\left(\begin{array}{c}
  m \\
  s \\
 \end{array}\right)
 \cos^si\nonumber\\
 &&\times \sum_c \left(\begin{array}{c}
  n-m-2w+s \\
  c \\
 \end{array}\right)
\left(\begin{array}{c}
  m-s \\
  p-w-c \\
 \end{array}\right)
 (-1)^{c-k}\ ,
\eeqa
with $k=[{{n-m}\over 2}]$, $w$ is summed from zero to the $\min(p,k)$,
$c$ is taken over all values such that the binomial coefficients are not
identically zero. The expression for $G_{npq}$ is given by
\beq{Gfun}
G_{npq}(e)=(-1)^{|q|}(1+\beta^2)^n\beta^{|q|}\ \sum_{k=0}^\infty P_{npqk}Q_{npqk}\beta^{2k}\ ,
\eeq
where
$$
\beta={e\over {1+\sqrt{1-e^2}}}\ ,
$$

$$
P_{npqk}=\sum_{r=0}^h \left(\begin{array}{c}
  2p'-2n \\
  h-r \\
 \end{array}\right)
{{(-1)^r}\over r!}
({{(n-2p'+q')e}\over {2\beta}})^r\ ,
$$
with $h=k+q'$ when $q'>0$ and $h=k$ when $q'<0$, and
$$
Q_{npqk}=\sum_{r=0}^h \left(\begin{array}{c}
  -2p' \\
  h-r \\
 \end{array}\right)
{1\over r!}
({{(n-2p'+q')e}\over {2\beta}})^r\ ,
$$
where $h=k$ when $q'>0$ and $h=k-q'$ when $q'<0$,
$p'=p$ and $q'=q$ when $p\leq n/2$, $p'=n-p$ and $q'=-q$ when $p> n/2$.
The expression of $S_{nmpq}$ is given by
\beq{S}
S_{nmpq}=\left[%
\begin{array}{c}
  C_{nm} \\
  -S_{nm} \\
 \end{array}%
\right]_{n-m \ odd}^{n-m \ even} \cos \Psi_{nmpq}+ \left[%
\begin{array}{c}
  S_{nm} \\
  C_{nm} \\
 \end{array}%
\right]_{n-m \ odd}^{n-m \ even} \sin \Psi_{nmpq}\ ,
\eeq
where
\beq{psi}
\Psi_{nmpq}=(n-2p) \omega+(n-2p+q)M+m(\Omega-\theta)\ .
\eeq
We next introduce the coefficients $J_{nm}$ and the quantities $\lambda_{nm}$ so that
$$
J_{nm} = \sqrt{C_{nm}^2+S_{nm}^2}   \quad \textrm{if} \ m\neq 0\ , \qquad    J_{n0} \equiv J_n= -C_{n0}    \ ,$$
$$
C_{nm}=-J_{nm} \cos(m \lambda_{nm}) \ , \qquad S_{nm}=-J_{nm} \sin(m \lambda_{nm}) \ .
$$
With this notation, we can express $S_{nmpq}$ in the form
$$ S_{nmpq}=\left\{%
\begin{array}{cc}
  -J_{nm}  \cos \Psi_{nmpq}  & \textrm{if} \ n-m\  \textrm{is even} \\
  -J_{nm}  \sin \Psi_{nmpq}  & \textrm{if} \ n-m\ \textrm{is odd}\ . \\
 \end{array}%
\right. $$
We remark that the Fourier series associated to $\mathcal{H}_{Earth}$ in \equ{Rearth} contains an infinite number of
terms. Nevertheless, the long term variation of the orbital elements is governed by the
secular and resonant terms. Moreover, we discussed in \cite{CGmajor}, \cite{CGLEO} how to
reduce the series to a finite number of terms, namely the terms which turn out to be the most relevant ones for the
description of the dynamics.

\vskip.1in

We remark that the angle $\Psi_{nmpq}$ in \equ{psi} depends on linear combinations of different quantities,
including the mean anomaly of the small particle and the sidereal \red{time, hence there} can occur
resonances of the form
\beq{T}
(n-2p+q)\dot M-m\dot\theta+(n-2p)\dot\omega+m\dot\Omega=0\ .
\eeq
\red{Setting $\ell=n-2p+q$ and $m=j$, we can re-write \equ{T} as
\beq{T2}
\ell\ \dot M-j\ \dot\theta+j\ \dot\Omega+\ell\ \dot\omega-q\ \dot\omega=0\ .
\eeq
Let us introduce the \red{resonance angle}
\beq{sigma}
\sigma_{j\ell}=\ell \ M-j\ \theta+j\ \Omega+\ell\ \omega\ ,
\eeq
so that \equ{T2} represents the rate of variation of
\beq{sigma}
\sigma_{j\ell}-q\ \omega\ .
\eeq
This discussion motivates the following definition.}

\vskip.1in

\red{
\begin{definition}\label{def:resonance}
A tesseral resonance of order $j:\ell$ with $j$, $\ell\in\integer\backslash\{0\}$
occurs whenever the orbital period of the massless particle, the rotational period of the Earth,
the rates of variation of the argument of perigee and the longitude of the ascending node
of the massless particle satisfy the relation
\beq{TR}
\ell\ \dot{M}-j\ \dot{\theta} +j\ \dot\Omega+\ell\ \dot\omega= 0\ , \qquad j,\ell \in \integer\backslash\{0\}\ .
\eeq
\end{definition}
}

\red{
\begin{remark}
We notice that for $J_2=0$, \red{we have $\dot\omega=\dot\Omega=0$ (see \equ{elements} below); hence,} the tesseral resonance reduces to a commensurability relation
between the orbital period of the massless particle and the rotational period of the Earth,
namely
$$
\ell\ \dot{M}-j\ \dot{\theta} = 0\ , \qquad j,\ell \in \integer\backslash\{0\}\ .
$$
\end{remark}
}

\red{
\noindent
For $q\not=0$, the term $q\ \dot\omega$ \red{in \equ{T2}} generates a multiplet of resonances (\cite{CGFrontiers}),
thus leading to the following definition of multiplet tesseral resonance, that extends the
Definition~\ref{def:resonance} of tesseral resonances.
}

\red{
\begin{definition}\label{def:multiplet_resonance}
A multiplet tesseral resonance of order $j:\ell:q$ for $j,\ell,q\in\integer\backslash\{0\}$ occurs whenever
the following relation is satisfied
\beq{multipletcond}
\ell\ \dot M-j\ \dot\theta+j\ \dot\Omega+\ell\ \dot\omega-q\ \dot\omega=0\ .
\eeq
\end{definition}
}

Retaining only a finite number, say $N\in\integer_+$, of terms, we can approximate $\mathcal{H}_{Earth}$ by
\beq{RE}
\mathcal{H}_{Earth}=\mathcal{H}^{sec}_{Earth}+\mathcal{H}_{Earth}^{res}+\mathcal{H}_{Earth}^{nonres}\cong
\sum_{n=2}^N \sum_{m=0}^n \sum_{p=0}^n \sum_{q=-\infty}^{\infty} \mathcal{T}_{nmpq} \ ,
\eeq
where $\mathcal{H}^{sec}_{Earth}$, $\mathcal{H}_{Earth}^{res}$, $\mathcal{H}_{Earth}^{nonres}$ are the secular, resonant and non--resonant
parts of the Earth's potential (\cite{CGmajor}); the sums have been truncated
to a suitable finite order $N\in\integer_+$,  and the coefficients $\mathcal{T}_{nmpq}$ are defined by:
\begin{equation}\label{T_nmpq_term}
\mathcal{T}_{nmpq}=-\frac{\mu_E R_E^n}{a^{n+1}}\ F_{nmp}(i)G_{npq}(e) S_{nmpq}(M, \omega, \Omega , \theta)\ .
\end{equation}

\subsubsection{The secular and resonant parts of the geopotential}\label{sec:secular1}
The secular part of the geopotential expanded in \equ{Rearth} is
obtained as the \red{average} over the fast angles $M$
and $\theta$.

Taking into account the expression of $S_{nmpq}$ in \equ{S}-\equ{psi}, we notice that
the secular terms are those associated to the indexes $m=0$ and $n-2p+q=0$.

For our purposes, it will be convenient to approximate the secular part of \equ{Rearth} as (\cite{CGmajor})
\beqa{Rsec}
\red{\mathcal{H}}_{Earth}^{sec}&\cong&\frac{\mu_E R^2_E J_{2}}{a^3} \Bigl(\frac{3}{4} \sin^2 i -\frac{1}{2}\Bigr) (1-e^2)^{-3/2} \nonumber\\
&+&\frac{2\mu_E R^3_E J_{3}}{a^4} \Bigl(\frac{15}{16} \sin^3 i -\frac{3}{4} \sin i\Bigr) e (1-e^2)^{-5/2} \sin \omega \nonumber \\
&+&\frac{\mu_E R^4_E J_{4}}{a^5} \Bigl[\Bigl(-\frac{35}{32} \sin^4 i +\frac{15}{16} \sin^2 i\Bigr) \frac{3e^2}{2}(1-e^2)^{-7/2} \cos(2\omega) \nonumber \\
&+&
\Bigl(\frac{105}{64} \sin^4 i -\frac{15}{8} \sin^2 i+\frac{3}{8}\Bigr) (1+\frac{3e^2}{2})(1-e^2)^{-7/2} \Bigr]\ .
\eeqa
For the Earth, it turns out that $J_2\gg J_n$ for all $n \in \mathbb{N}$, $n>2$; hence,
the most important harmonic of the secular Hamiltonian is that corresponding to $J_2$.
When the expansion of $\mathcal{H}_{Earth}^{sec}$ is limited to the $J_2$ term, we will use the terminology
of \sl quadrupolar \rm approximation of the Hamiltonian.

As for the resonant part of the geopotential,
given the expressions \equ{S}-\equ{psi} for the quantity $S_{nmpq}$,
the terms corresponding to a tesseral resonance of order $j:\ell$
are \red{those containing the angle $\sigma_{j\ell}$ as in \equ{sigma}
with $\ell=n-2p+q$ and $m=j$.}
The resonant part of the geopotential will then be a sum of terms with resonant arguments.

\subsubsection{Solar and Lunar disturbing functions} \label{sec:SunMoon}

The expressions for the Solar and Lunar disturbing functions are obtained as follows.
We assume that the Solar elements, say $(a_S,e_S,i_S,M_S,\omega_S,\Omega_S)$, are referred
to the equatorial frame (\cite{Kaula1962}). We also assume that the Sun moves on a Keplerian ellipse with
semimajor axis $a_S=1\, AU$, eccentricity $e_S=0.0167$, inclination $i_S=23^{\circ} 26' 21.406''$,
argument of perigee $\omega_S=282.94^{\circ}$,
longitude of the ascending node $\Omega_S=0^{\circ}$; the rate of variation of the mean anomaly is
assumed to be $\dot M_S\simeq 1^{\circ}/day$. Then, denoting by $m_S$ the mass of the Sun,
the disturbing function due to the Sun can be written as
\beqa{RSUN}
\mathcal{R}_{Sun}&=&\mathcal{G} m_S\sum_{l=2}^{\infty}\sum_{m=0}^l \sum_{p=0}^l \sum_{h=0}^l \sum_{q=-\infty}^\infty \sum_{j=-\infty}^\infty {a^l\over a_S^{l+1}}
\ \epsilon_m\, {{(l-m)!}\over {(l+m)!}}\nonumber\\
&&\times\ \F_{lmph}(i,i_S) \mathcal{H}_{lpq}(e)\, \mathcal{G}_{lhj}(e_S)\ \cos(\varphi_{lmphqj})\ ,
\eeqa
where
\beqano
\F_{lmph}(i,i_S)&\equiv&F_{lmp}(i)\ F_{lmh}(i_S)\ ,\nonumber\\
\varphi_{lmphqj}&\equiv& (l-2p)\omega+(l-2p+q)M-(l-2h)\omega_S-(l-2h +j)M_S+m(\Omega-\Omega_S)\ ;
\eeqano
the quantities $\epsilon_m$ take the values
$$
    \epsilon_m = \left\{ \begin{array}{cl} 1 & \text{if } m = 0\ , \\ 2 & \text{if } m \in\integer\backslash\{0\}\ , \end{array} \right.
$$
the functions $ \mathcal{H}_{lpq}(e)$ and $\mathcal{G}_{lhj}(e_S)$
correspond to the Hansen coefficients $X_{l-2p+q}^{l,l-2p}(e)$,
$X_{l-2h+j}^{-(l+1),l-2h}(e_S)$, while
$F_{lmp}(i)$ and $F_{lmh}(i_S)$ are as in \equ{Ffun}.\\

The contribution due to the Moon is conveniently formulated taking the elements
of the satellite with respect to the equator and the elements of the Moon with respect to the ecliptic plane
(\cite{Lane1989}). With this choice, the inclination $i_M$ of the Moon
is nearly constant, the rates of variation of the argument of perihelion
$\omega_M$ of the Moon and that of the longitude of the ascending node $\Omega_M$ are nearly linear.

We assume that the Moon moves on a Keplerian ellipse with semimajor axis $a_M=384\,748\, km$, eccentricity
$e_M=0.0549$ and inclination $i_M=5^{\circ}15'$. Let $m_M$ be the mass of the Moon.
Then, the Lunar disturbing function takes the form (\cite{CGPR2017}, \cite{Lane1989}):
\beqa{RMOON}
    \mathcal{R}_{Moon}
    \nonumber
    & = &\mathcal{G} m_M\sum\limits_{l \geq 2} \sum\limits_{m = 0}^l \sum\limits_{p = 0}^l \sum\limits_{s = 0}^l
        \sum\limits_{q = 0}^l \sum\limits_{j = -\infty}^{+\infty} \sum\limits_{r = -\infty}^{+\infty}
        (-1)^{m+s}\ (-1)^{k_1} \frac{ \epsilon_m \epsilon_s}{2 a_M} \frac{(l - s)!}{(l + m)!}
        \left( \frac{a}{a_M} \right)^l \\
    \nonumber
    &\times& F_{lmp} (i) F_{lsq} (i_M)
        \mathcal{H}_{lpj} (e) \mathcal{G}_{lqr} (e_M) \\
    \label{eq:lane}
    &\times& \left\{ (-1)^{k_2} U_l^{m, -s}
        \cos \left( \bar{\theta}_{lmpj} + \bar{\theta}_{lsqr}^\prime - y_s \pi \right)
        + (-1)^{k_3} U_l^{m, s} \cos \left( \bar{\theta}_{lmpj} - \bar{\theta}_{lsqr}^\prime - y_s \pi \right)  \right\}\ ,\nonumber\\
\eeqa
where $y_s=0$ for $s$ even, $y_s=1/2$ for $s$ odd, $k_1=[m/2]$, $k_2 = t (m + s - 1) + 1$, $k_3 = t (m + s)$ with $t=(l-1)$ mod 2,
the terms $\bar{\theta}_{lmpj}$, $\bar{\theta}_{lsqr}^\prime$ are given by
\beqano
    \bar{\theta}_{lmpj} & = & (l - 2p) \omega + (l - 2p + j) M + m \Omega\ , \nonumber\\
    \bar{\theta}_{lsqr}^\prime & = & (l - 2q) \omega_M + (l - 2q + r) M_M + s (\Omega_M - \pi/2)\ ,
\eeqano
the functions $U_l^{m,s}$ have the following expressions (compare with \cite{CGPR2017})
\beqano
U_l^{m,s}&=&\sum_{r =\max(0,-(m+s))}^{\min(l-s,l-m)} (-1)^{l-m-r}
\left(\begin{array}{c}
  l+m \\
  m+s+r \\
\end{array}\right)\
\left(\begin{array}{c}
  l-m \\
  r \\
\end{array}\right)\
\cos^{m+s+2r}({\varepsilon\over 2})\sin^{-m-s+2(l-r)}({\varepsilon\over 2})\ ,\nonumber\\
\eeqano
where $z=\cos^2 ({\varepsilon\over 2})$ and $\varepsilon$ denotes the obliquity of the ecliptic.
The functions $\mathcal{H}_{lpj}(e)$ and $\mathcal{G}_{lqr}(e_S)$ represent the Hansen coefficients
$X_{l-2p+j}^{l,l-2p}(e)$, $X_{l-2q+r}^{-(l+1),l-2q}(e_M)$.

\vskip.1in

Given the expansions above of the Solar and Lunar perturbing functions, we are led
to introduce the following definitions of Solar and Lunar secular resonances.

\begin{definition} \label{def:secres}
A Solar gravity secular resonance occurs whenever there exists
an integer vector $(k_1,k_2,k_3,k_4)\in\integer^4\backslash\{0\}$, such that
\beq{secressun}
k_1\dot\omega+k_2\dot\Omega+k_3\dot\omega_S+k_4\dot\Omega_S=0\ .
\eeq
A Lunar gravity secular resonance occurs whenever there exists an integer vector
$(k_1,k_2,k_3,k_4)\in\integer^4\backslash\{0\}$, such that
\beq{secresmoon}
k_1\dot\omega+k_2\dot\Omega+k_3\dot\omega_M+k_4\dot\Omega_M=0\ .
\eeq
\end{definition}

We stress that the above definition of secular resonances is as general as possible. However, given the fact that
we will consider Solar
and Lunar expansions truncated to the second order in the ratio of the semi--major axes,
in view of \eqref{RSUN} and \eqref{RMOON}, the Hamiltonian $\H$ is independent of $\omega_M$ and $\omega_S$.
Therefore, for all resonances studied here, one has $k_3=0$. Moreover, since $\dot{\Omega}_S \simeq 0$,
the relations \eqref{secressun} and \eqref{secresmoon} may be rewritten in the particular form:
$$
(2-2p)\dot\omega+m\dot\Omega=0\ , \qquad m,p=0,1,2\,,
$$
and
$$
(2-2p)\dot\omega+m\dot\Omega+\kappa \dot\Omega_M=0\ , \qquad m,p=0,1,2, \quad \kappa=-2,-1,0,1,2\,,
$$
respectively.

\vskip.1in

According to the classification of the harmonic terms of the expansions
\eqref{RSUN} and \eqref{RMOON}, we define the Solar and Lunar semi--secular resonances as follows (compare with \cite{HughesI}).

\begin{definition}
A Solar semi--secular resonance occurs whenever
$$
(l-2p)\dot\omega+m\dot\Omega-(l-2h+j) \dot{M}_S=0\ , \qquad l \in \mathbb{Z}_+\,,\ m,p,h=0,1,2,...,l\,,\ j \in \mathbb{Z}.
$$
We have a Lunar semi--secular resonance whenever
\beqano
(l-2p)\dot\omega+m\dot\Omega \pm [(l-2q) \dot{\omega}_M+(l-2q+r) \dot{M}_M+s\dot{\Omega}_M]=0\ ,\nonumber\\
\qquad\qquad\qquad\qquad\qquad\qquad\qquad\qquad\ l \in \mathbb{Z}_+\,,\ m,p,q,s=0,1,2,...,l\,,\ r \in \mathbb{Z}.
\eeqano
\end{definition}

By taking a quadrupolar approximation of the expansions \eqref{RSUN} and \eqref{RMOON}, namely considering $l=2$,
it follows that the possible resonances have the form:
$$
\alpha \dot\omega+ \beta \dot\Omega-\gamma \dot{M}_S=0\ , \qquad \alpha \in \{\pm 2, 0\}\,, \quad\beta \in \{\pm 2, \pm 1, 0\}\,, \quad
\gamma \in \mathbb{Z}\backslash\{0\}
$$
for the Sun and
\begin{equation}\label{lunar2_semi_secular_res}
\begin{split}
& \alpha \dot\omega+ \beta \dot\Omega +\alpha_M \dot\omega_M+ \beta_M \dot\Omega_M-\gamma \dot{M}_M=0\ , \qquad \alpha\,, \alpha_M \in \{\pm 2, 0\}\,,\nonumber\\
& \hspace{7cm} \beta, \beta_M \in \{\pm 2, \pm 1, 0\}, \quad \gamma \in \mathbb{Z}\backslash\{0\}\nonumber
\end{split}
\end{equation}
for the Moon. We remark that Lunar semi--secular resonances occur
at very low altitudes and therefore their interest is limited.

\section{Quadrupolar approximation}\label{sec:quadrupolar}

The quadrupolar approximation is obtained by summing the Keplerian part $-\mu_E^2/(2L^2)$ and the
secular part (namely, the term of the series \eqref{RE} for which $n=2$, $m=0$, $p=1$, $q=0$).
Since  $F_{201}(i)=0.75 \sin^2 i -0.5$ and $G_{210}(e)=(1-e^2)^{-3/2}$ (see \eqref{Ffun} and \eqref{Gfun}),
from \eqref{Rsec} we obtain the Hamiltonian
\beq{quad}
\H_{Kepler+J_2}=-\frac{\mu_E^2}{2 L^2}+{{R_E^2 J_2 \mu_E^4}\over {4}}\ {{1}\over {L^3G^3}}\ \Bigl(1-3{H^2\over G^2}\Bigr)\ .
\eeq
Since the angles in $\H_{Kepler+J_2}$ are ignorable, it follows that $L$, $G$ and $H$ are constant, while the Delaunay
angle variables $M$, $\omega$ and $\Omega$ \red{evolve} linearly in time with rates \red{of variation}:
\beqa{elements}
\dot{M}  &=&  \frac{\mu_E^2}{L^3}-\frac{3 R_E^2 J_2 \mu_E^4}{4} \frac{1}{L^4 G^3} \Bigl(1-3 \frac{H^2}{G^2}\Bigr)\,,\nonumber\\
\dot{\omega} &=&  \frac{3 R_E^2 J_2 \mu_E^4}{4} \frac{1}{L^3 G^4} \Bigl(-1+5 \frac{H^2}{G^2}\Bigr)\,,\nonumber\\
\dot{\Omega} &=& - \frac{3 R_E^2 J_2 \mu_E^4}{2} \frac{H}{L^3 G^5}\ .
\eeqa
By using the following data: $m_E=5.972\cdot 10^{24}$ $kg$, $\G=6.67408\cdot 10^{-11}$ $m^3/(kg\, s^2)$,
$R_E=6378.137\cdot 10^3$ $m$, $J_2=1082.6261\cdot 10^{-6}$, and using the
conversion factor $60\cdot 60\cdot 24\cdot 360/(2\pi)$, one obtains that
$$
{{(\G m_E)^{1\over 2}}\over {R_E^{3\over 2}}}=6135.7^o/day\ ,\qquad
{{3J_2(\G m_E)^{1\over 2}}\over {4R_E^{3\over 2}}}=4.98^o/day\ .
$$
Hence, we can \red{approximate} the
relations \equ{elements} in terms of the orbital elements as:
\begin{equation}
\begin{split} \label{MomeagaOmega_var}
& \dot{M}  \simeq 6135.7 \Bigl(\frac{R_E}{a}\Bigr)^{3/2}- 4.98 \Bigl(\frac{R_E}{a}\Bigr)^{7/2} (1-e^2)^{-3/2} (1-3 \cos^2 i) \ ^o/day\,,\\
& \dot{\omega} \simeq  4.98 \Bigl(\frac{R_E}{a}\Bigr)^{7/2} (1-e^2)^{-2} (5 \cos^2 i-1) \ ^o/day\,,\\
& \dot{\Omega} \simeq - 9.97 \Bigl(\frac{R_E}{a}\Bigr)^{7/2} (1-e^2)^{-2}  \cos i \ ^o/day\,. \\
\end{split}
\end{equation}
Therefore, we are led to summarize as follows the main effects of $J_2$: a slow change of the rate of the mean anomaly, a precession of the perigee and a secular regression of the orbital node.


\red{Our results can be generalized to higher orders in a straightforward way:} let us write the quadrupolar Hamiltonian in compact form as
$$
\red{\mathcal{H}}_Q=-{{\mu_E^2}\over {2L^2}}+J_2\ F_2(L,G,H)
$$
for a suitable function $F_2$ whose expression can be obtained from \equ{quad}. If we include other
spherical harmonic coefficients, e.g. $J_3$ and $J_4$, pertaining to the secular part of the Hamiltonian, we are led to
add to $\red{\mathcal{H}}_Q$ terms of the form
$$
\red{\mathcal{H}}_{add}=J_3\ F_3(L,G,H,\omega)+J_4\ F_4(L,G,H,\omega)
$$
for suitable functions $F_3$ and $F_4$.
Hence, for the new system the rate of variation of the mean anomaly is given by
$$
\dot M={{\partial (\red{\mathcal{H}}_Q+\red{\mathcal{H}}_{add})}\over {\partial L}}\ ,
$$
which makes $\dot M$ depending also on $\omega$. This might generate a superposition of
resonances and, therefore, a characterization of chaos in a model which is not just the quadrupolar
approximation.

\section{A characterization of tesseral resonances}\label{sec:tesseral}

\red{
In this Section we study tesseral resonances, which involve the rates of variation of the mean anomaly of the object, the
rate of the sidereal time and the rates of variation of the argument of perigee and the longitude of the ascending
node of the object. The main results of this Section are the following: in Proposition~\ref{ref:pro1} we fix $a$, $e$ and find that the resonance relation gives
an expression for the inclination; in Proposition~\ref{ref:pro2} we fix $e$, $i$ and find an equation for the
semimajor axis; in Proposition~\ref{ref:pro3} we fix $a$, $i$ and find an expression for the eccentricity.
}

Within the quadrupolar approximation, using \equ{MomeagaOmega_var}
a tesseral resonance of order $j:\ell$, $j,\ell\in\integer\backslash\{0\}$, occurs whenever
the semimajor axis, eccentricity and inclination satisfy the following relation:
\red{
\beqa{quatess}
&&\Big\{\ell\ \Big[6135.7 ({R_E\over a})^{3\over 2}-4.98 ({R_E\over a})^{7\over 2} (1-e^2)^{-{3\over 2}}(1-3\cos^2 i)
+4.98 ({R_E\over a})^{7\over 2} (1-e^2)^{-2}(5\cos^2 i-1)\Big] \nonumber \\
&&-j\ 9.97({R_E\over a})^{7\over 2} (1-e^2)^{-2}\red{\cos i}\Big\}
\ {1\over {360}}\ {365.242196\over {366.242196}}=j\ ,\nonumber\\
\eeqa where the factor $365.242196/366.242196$ is introduced to
transform from mean Solar to sidereal days}.

\begin{remark}
We notice that if $J_2=0$, then $\dot\omega=\dot\Omega=0$ and the first in
\equ{elements} reduces to
$$
\dot M=6135.7\ ({R_E\over a})^{3\over 2}\ .
$$
Hence, tesseral resonances can occur for any inclination and eccentricity, provided
the semimajor axis satisfies the relation
$$
6135.7\ ({R_E\over a})^{3\over 2}\ {1\over {360}}\ {{365.242196}\over {366.242196}}={j\over \ell}\ .
$$
Therefore, the effect of $J_2$ is to introduce a dependence of the tesseral resonances on eccentricity
and inclination as well as a precession of $\omega$ and $\Omega$.
\end{remark}

The following result provides the condition under which the inclination satisfies a tesseral resonance
of order $j:\ell$, whenever the semimajor axis and eccentricity are fixed.

\begin{proposition}\label{ref:pro1}
Within the quadrupolar approximation \equ{quad}, for a fixed value of $a$ and $e$, denoting by $A$, $B$,
\red{$C$}, $\tau$
the quantities
\red{
\beqano
A&=&6135.7({R_E\over a})^{3\over 2}-4.98 ({R_E\over a})^{7\over 2} (1-e^2)^{-{3\over 2}}
-4.98 ({R_E\over a})^{7\over 2} (1-e^2)^{-2}\ ,\nonumber\\
B&=& 14.94({R_E\over a})^{7\over 2} (1-e^2)^{-{3\over 2}}
+24.90({R_E\over a})^{7\over 2} (1-e^2)^{-2}\ ,\nonumber\\
C&=&9.97 ({R_E\over a})^{7\over 2} (1-e^2)^{-2}\ ,\nonumber\\
\tau&=&{{365.242196}\over {366.242196}}\ ,
\eeqano
}
a tesseral resonance
of order $j:\ell$ occurs for values of the inclination such that
\red{
$$
        \cos i=\frac{C\ j \pm \sqrt{C^2\ j^2-4B\ell\left(\ell A - 360\ j/\tau\right)}}{2B\ell}\ ,
$$
}
provided the following conditions are satisfied:
\red{
\beqano
&&C^2\ j^2-4B\ell\left(\ell A - 360\ j/\tau\right)\ge 0 \ , \nonumber \\
&&\Bigl|\frac{C\ j \pm \sqrt{C^2\ j^2-4B\ell\left(\ell A - 360\ j/\tau\right)}}
        {2B\ell}\Bigr| \leq 1\ .
\eeqano
}
\end{proposition}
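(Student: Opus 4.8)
The plan is to regard the resonance relation \equ{quatess} as an algebraic equation in the single unknown $x=\cos i$, holding $a$ and $e$ fixed, and to recognize that it is quadratic in $x$. First I would substitute $x=\cos i$ into \equ{quatess} and examine the inclination dependence term by term, using the rates \equ{MomeagaOmega_var}. The Keplerian contribution $6135.7(R_E/a)^{3/2}$ to $\dot M$ is independent of $x$; the $J_2$ correction to $\dot M$ carries the factor $1-3\cos^2 i=1-3x^2$; the $\dot\omega$ term carries $5\cos^2 i-1=5x^2-1$; and the $\dot\Omega$ term is linear, carrying a bare $\cos i=x$. Hence the left-hand side of \equ{quatess} is a polynomial of degree at most two in $x$, and solving for the inclination reduces to solving a quadratic.

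Next I would clear the prefactor and collect powers of $x$. Multiplying \equ{quatess} through by $360/\tau$ and transposing the right-hand side, the quadratic, linear, and constant coefficients can be read off directly. The terms proportional to $x^2$ come from $\ell$ times $3\cdot 4.98(R_E/a)^{7/2}(1-e^2)^{-3/2}$ (arising from $-4.98(\cdots)(1-3x^2)$) plus $\ell$ times $5\cdot 4.98(R_E/a)^{7/2}(1-e^2)^{-2}$ (arising from $4.98(\cdots)(5x^2-1)$), which assemble into $\ell B$ with $B$ exactly as defined, since $3\cdot 4.98=14.94$ and $5\cdot 4.98=24.90$. The $x$-independent terms of the bracket assemble into $\ell A$ with $A$ as defined, and the linear term is $-jC\,x$ with $C$ as defined. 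Thus \equ{quatess} is equivalent to
\beqno
\ell B\,x^2-jC\,x+\Bigl(\ell A-\frac{360\,j}{\tau}\Bigr)=0\ ,
\eeqno
and applying the quadratic formula gives the asserted expression for $\cos i$. The reality condition (nonnegative discriminant) together with the geometric constraint $|\cos i|\le 1$ then yields precisely the two stated conditions.

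I expect the only real obstacle to be careful sign and coefficient bookkeeping rather than any conceptual difficulty. In particular, one must track that the factor $1-3\cos^2 i$, which enters $\dot M$ with a minus sign, contributes $+3\cdot 4.98$ to the $x^2$ coefficient; that the $\dot\Omega$ contribution is multiplied by $-j$ rather than by $\ell$, so that the linear coefficient is exactly $-jC$; and that the division by $\tau/360$ places the term $360\,j/\tau$ in the constant coefficient with the correct sign. One should also note that the formula requires $\ell B\neq 0$ in the denominator, which holds since $\ell\in\integer\backslash\{0\}$ by hypothesis and $B>0$ for $0\le e<1$.
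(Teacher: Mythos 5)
Your proposal is correct and follows essentially the same route as the paper, which disposes of Proposition~\ref{ref:pro1} in one line by noting that it suffices to insert the rates \equ{MomeagaOmega_var} into Definition~\ref{def:resonance}, yielding the quadratic $\ell B\cos^2 i-jC\cos i+(\ell A-360\,j/\tau)=0$ whose roots and admissibility conditions (nonnegative discriminant, $|\cos i|\leq 1$) are exactly those stated. Your explicit coefficient bookkeeping and the remark that $\ell B\neq 0$ merely spell out details the paper leaves implicit.
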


The proof of Proposition~\ref{ref:pro1}, as well as those of Propositions~\ref{ref:pro2} and \ref{ref:pro3} below,
is elementary, since it suffices to insert in Definition~\ref{def:resonance} the value for $\dot M$
in \equ{MomeagaOmega_var}.

\red{If we} fix the eccentricity and the inclination, we obtain the following solutions for the semimajor axis.

\begin{proposition}\label{ref:pro2}
Within the quadrupolar approximation \equ{quad}, for a fixed value of $e$ and $i$, denoting by
$A$, $B$, \red{$C$}, $\rho$
the quantities
\red{
\beqano
A&=&6135.7\ ,\nonumber\\
        B&=& 4.98(1-e^2)^{-{3\over 2}}\ (1-3\cos^2i) \red{-}4.98(1-e^2)^{-2}\ (5\cos^2i-1)\ ,\nonumber\\
        C&=& 9.97(1-e^2)^{-2}\red{\cos{\left(i\right)}}\ ,\nonumber\\
\rho&=&\Bigl({R_E\over a}\Bigr)^{1\over 2} \ , \nonumber\\
\red{\tau}&=&\red{{{365.242196}\over {366.242196}}}\ ,
\eeqano
}
a tesseral resonance of order $j:\ell$ occurs for values of the semimajor axis such that
\red{
$$
\ell\ A\rho^3-(\ell B+j C)\rho^7={360\over \tau} j\ .
$$
}
\end{proposition}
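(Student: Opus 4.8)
The plan is to recognize that the proposition is an identity obtained by specialization, so I would start from the quadrupolar tesseral resonance condition \equ{quatess} and simply rewrite it as a polynomial equation in the single variable $\rho=(R_E/a)^{1/2}$. Since $e$ and $i$ are now held fixed, every eccentricity and trigonometric factor becomes a constant, and the only dependence on $a$ enters through the two powers $(R_E/a)^{3/2}=\rho^3$ and $(R_E/a)^{7/2}=\rho^7$. First I would substitute these two monomials into \equ{quatess} (equivalently, insert the rates $\dot M$, $\dot\omega$, $\dot\Omega$ from \equ{MomeagaOmega_var} into Definition~\ref{def:resonance}); the sidereal--rotation contribution $-j\dot\theta$ supplies the right-hand side, since $\dot\theta=360/\tau$ in degrees per mean Solar day, so that $j\dot\theta=360\,j/\tau$.

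The second step is the only place requiring any care: collecting the coefficients of $\rho^3$ and $\rho^7$. The coefficient of $\rho^3$ comes solely from the Keplerian part of $\dot M$, giving $\ell\cdot 6135.7=\ell A$. The coefficient of $\rho^7$ receives three contributions, namely the $J_2$ correction in $\ell\dot M$, the term $\ell\dot\omega$, and the term $j\dot\Omega$. I would group the first two, which both carry the factor $\ell$, into $-\ell B$ with
\[ B=4.98(1-e^2)^{-3/2}(1-3\cos^2 i)-4.98(1-e^2)^{-2}(5\cos^2 i-1), \]
and absorb the $\dot\Omega$ contribution into $-jC$ with $C=9.97(1-e^2)^{-2}\cos i$. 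The sign bookkeeping here is the entire content of the argument: inside the bracket of \equ{quatess} the $\dot M$ correction enters with a minus sign while $\ell\dot\omega$ enters with a plus sign, and factoring $-\rho^7$ out of their sum reproduces exactly the advertised $B$.

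Combining the two steps yields $\ell A\rho^3-(\ell B+jC)\rho^7=360\,j/\tau$, which is the claimed relation. I do not expect a genuine obstacle, since the statement is an algebraic identity and the ``hard part'' amounts only to verifying that the two $\rho^7$ terms arising from $\dot M$ and $\dot\omega$ combine, with the correct signs, into the single quantity $B$. I would remark in passing that the resulting equation is a trinomial of degree seven in $\rho$ (only the powers $\rho^7$, $\rho^3$ and $\rho^0$ occur), so that, in contrast with Proposition~\ref{ref:pro1}, one does not obtain a closed-form expression for the semimajor axis but rather an algebraic equation whose admissible positive roots $\rho$ determine the resonant values of $a$.
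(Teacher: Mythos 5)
Your proposal is correct and coincides with the paper's proof, which is stated to be elementary: one inserts the rates $\dot M$, $\dot\omega$, $\dot\Omega$ from \equ{MomeagaOmega_var} into Definition~\ref{def:resonance} (equivalently, rearranges \equ{quatess}) and collects the coefficients of $\rho^3$ and $\rho^7$, with the sign bookkeeping for $B$ and $C$ exactly as you describe. Your closing remark that the result is a trinomial equation of degree seven in $\rho$, rather than a closed-form solution for $a$, is also consistent with the paper's phrasing that one obtains an \emph{equation} for the semimajor axis.
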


Finally, if we fix the semimajor axis and the inclination, we obtain the following result for the eccentricity.

\begin{proposition}\label{ref:pro3}
Within the quadrupolar approximation \equ{quad}, for a fixed value of $a$ and $i$, denoting by
\red{$\varepsilon$}, $A$, $B$, \red{$C$}, \red{$D$}, the quantities
\red{
\beqano
        \red{\varepsilon}&=&(1-e^2)^{-1}\ ,\nonumber\\
A&=&6135.7({R_E\over a})^{3\over 2}\ ,\nonumber\\
B&=& 4.98({R_E\over a})^{7\over 2}\ (1-3\cos^2i)\ ,\nonumber\\
C&=& 4.98({R_E\over a})^{7\over 2}\ (5\cos^2i-1)\ ,\nonumber\\
D&=& 9.97({R_E\over a})^{7\over 2}\red{\cos i }\ , \nonumber\\
\red{\tau}&=&\red{{{365.242196}\over {366.242196}}}\ ,
\eeqano
}
a tesseral resonance of order $j:\ell$ occurs for values of the eccentricity \red{(namely of $\varepsilon$)
which are solutions of the following equation:}

\red{
$$
(jD-\ell C)\varepsilon^2+\ell B\varepsilon^{3\over 2}+({{360}\over \tau}j-\ell A)=0\ .
$$
}
\end{proposition}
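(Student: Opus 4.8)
The plan is to substitute the quadrupolar rates of variation \equ{MomeagaOmega_var} into the tesseral resonance relation \equ{quatess} and to read off the resulting equation in the single unknown $e$, for the fixed values of $a$ and $i$. The only genuine idea is the change of variable $\varepsilon=(1-e^2)^{-1}$, which renders the whole condition a relation in the powers $\varepsilon^{3/2}$ and $\varepsilon^2$ alone: indeed $(1-e^2)^{-3/2}=\varepsilon^{3/2}$ and $(1-e^2)^{-2}=\varepsilon^2$, so that the eccentricity enters solely through $\varepsilon^{1/2}$.

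With $a$ and $i$ fixed, the coefficients $A$, $B$, $C$, $D$ introduced in the statement absorb all the dependence on $(R_E/a)$ and $\cos i$, so that \equ{MomeagaOmega_var} takes the compact form
\[
\dot M=A-B\,\varepsilon^{3/2}\,,\qquad \dot\omega=C\,\varepsilon^2\,,\qquad \dot\Omega=-D\,\varepsilon^2\,.
\]
Here I use that the factor $(1-3\cos^2 i)$ appears with a minus sign in $\dot M$, matching the sign built into $B$; that $\dot\omega$ carries the factor $(5\cos^2 i-1)$ collected in $C$; and that the overall minus sign of $\dot\Omega$ is precisely why the coefficient $D$ is defined to be positive.

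Inserting these expressions into \equ{quatess}, which in the quadrupolar approximation is equivalent to $\ell(\dot M+\dot\omega)+j\,\dot\Omega=(360/\tau)\,j$, I would obtain
\[
\ell A-\ell B\,\varepsilon^{3/2}+\ell C\,\varepsilon^2-j D\,\varepsilon^2=\frac{360}{\tau}\,j\,.
\]
Collecting the two $\varepsilon^2$ contributions, transposing every term to one side and multiplying by $-1$ then yields exactly
\[
(jD-\ell C)\,\varepsilon^2+\ell B\,\varepsilon^{3/2}+\Bigl(\frac{360}{\tau}\,j-\ell A\Bigr)=0\,,
\]
as claimed. The computation is entirely elementary; the only points requiring care are the bookkeeping of signs when combining the $\varepsilon^2$ terms arising from $\ell\dot\omega$ and $j\dot\Omega$, and the consistent use of the conversion factor $\tau$ from mean solar to sidereal days, already fixed in \equ{quatess}. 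I expect no conceptual obstacle: once the substitution $\varepsilon=(1-e^2)^{-1}$ is made, the statement reduces to rearranging a single algebraic identity.
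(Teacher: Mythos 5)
Your proposal is correct and follows exactly the route the paper takes: the paper dispatches Proposition~\ref{ref:pro3} in one line, observing that it suffices to insert the quadrupolar rates \equ{MomeagaOmega_var} into the resonance relation of Definition~\ref{def:resonance} (equivalently \equ{quatess}), which is precisely your substitution $\dot M=A-B\,\varepsilon^{3/2}$, $\dot\omega=C\,\varepsilon^{2}$, $\dot\Omega=-D\,\varepsilon^{2}$ followed by elementary rearrangement. Your sign bookkeeping (the minus signs absorbed into $B$ and $D$) and the identification $\dot\theta=360/\tau$ degrees per mean solar day are both handled correctly, so there is nothing to add.
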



The dependency of the location of $j:\ell$ resonances on the
orbital parameters is shown in Figure~\ref{f:tes}. The grey shaded
regions provide the values in which \equ{quatess} can be solved,
contours are marked for different values of orbital parameters as
shown in the plot legends \red{at the top}. The value of the
semi-major axis $a$ weakly depends on the choice of $e$ and $i$.
In Figure~\ref{f:tes} the values $a_{geo}$ and $a_0$ denote the
nominal values of the semimajor axes corresponding, respectively,
to the 1:1 and 5:1 resonances, when the oblateness, the
eccentricity, and the inclination are set to zero. As
Figure~\ref{f:tes} shows, the deviations from the nominal values
are less than $10$ km, and they are higher when increasing the
order of the resonance.

\begin{figure}
\centering
\includegraphics[width=0.45\textwidth]{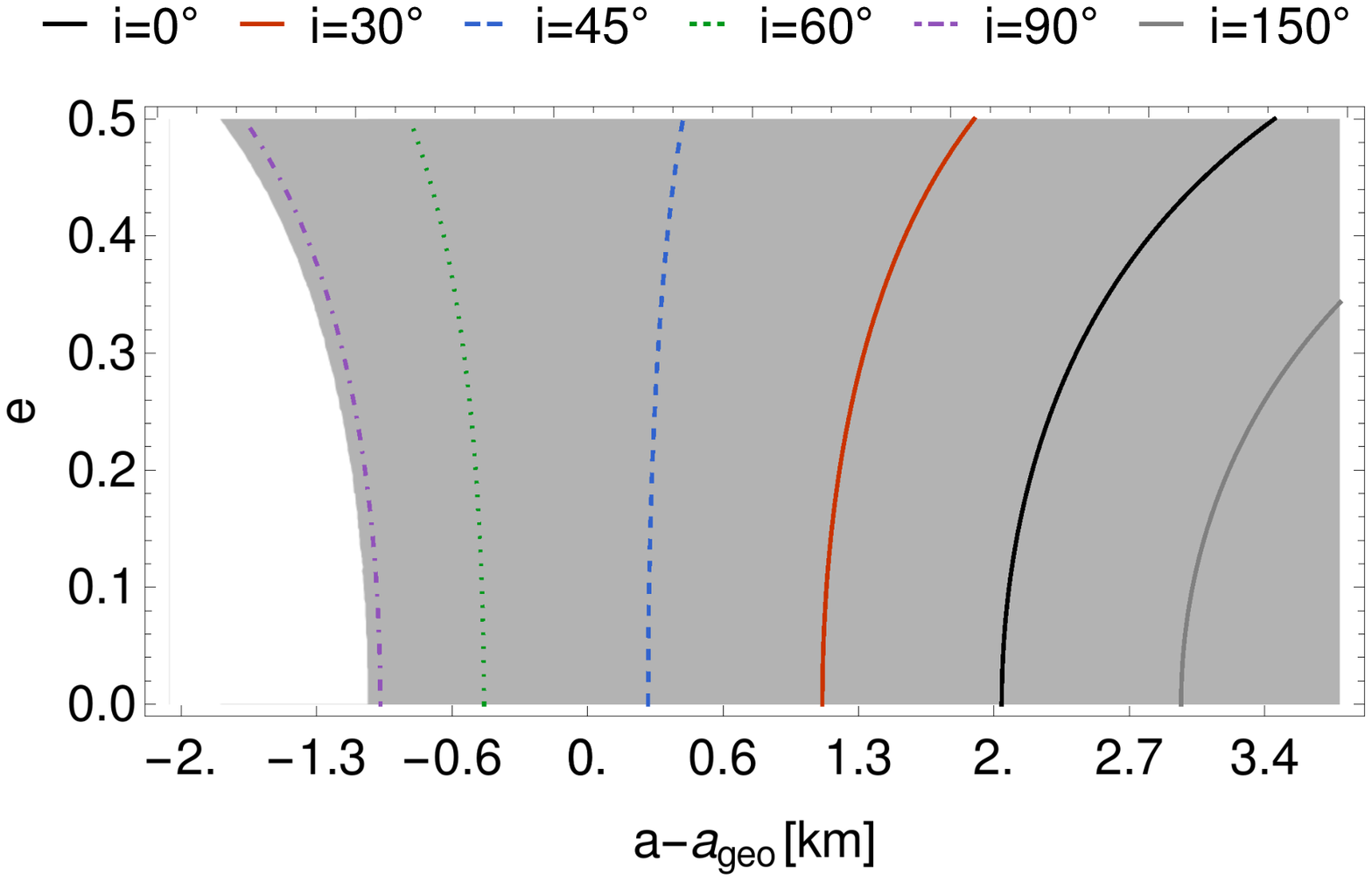} \hspace{0.1cm}
\includegraphics[width=0.45\textwidth]{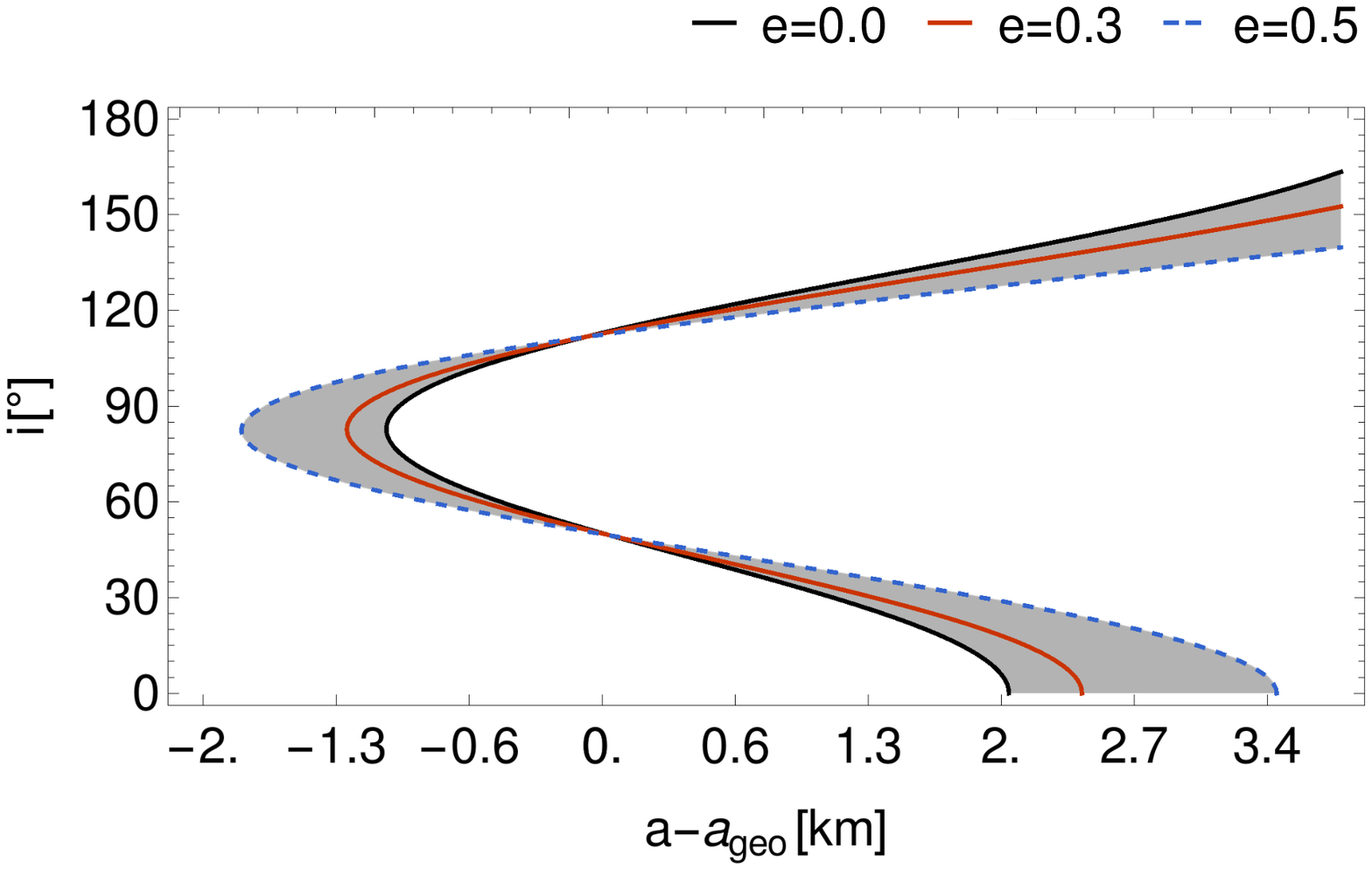}

\vspace{0.1cm}

\includegraphics[width=0.45\textwidth]{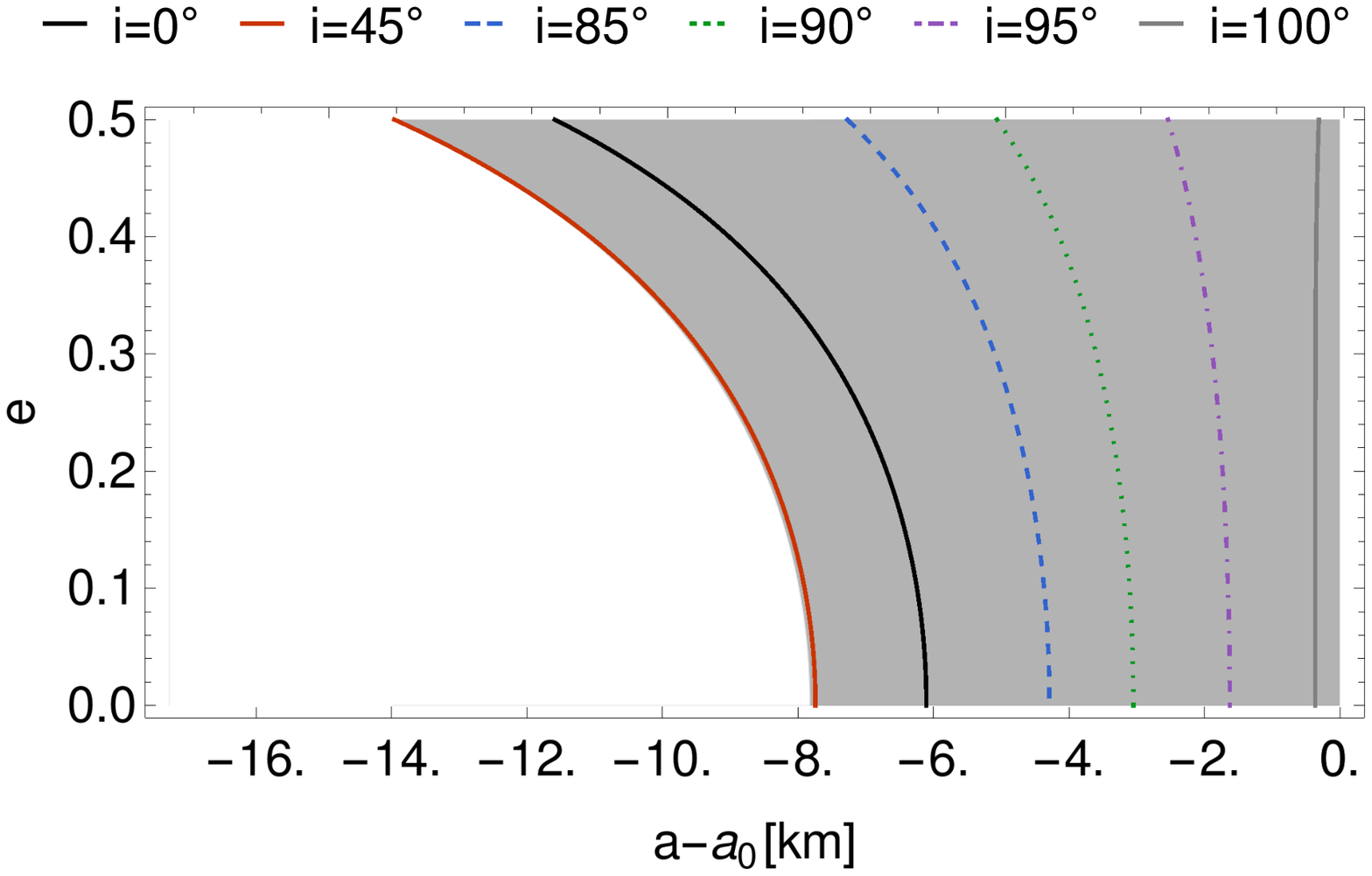} \hspace{0.1cm}
\includegraphics[width=0.45\textwidth]{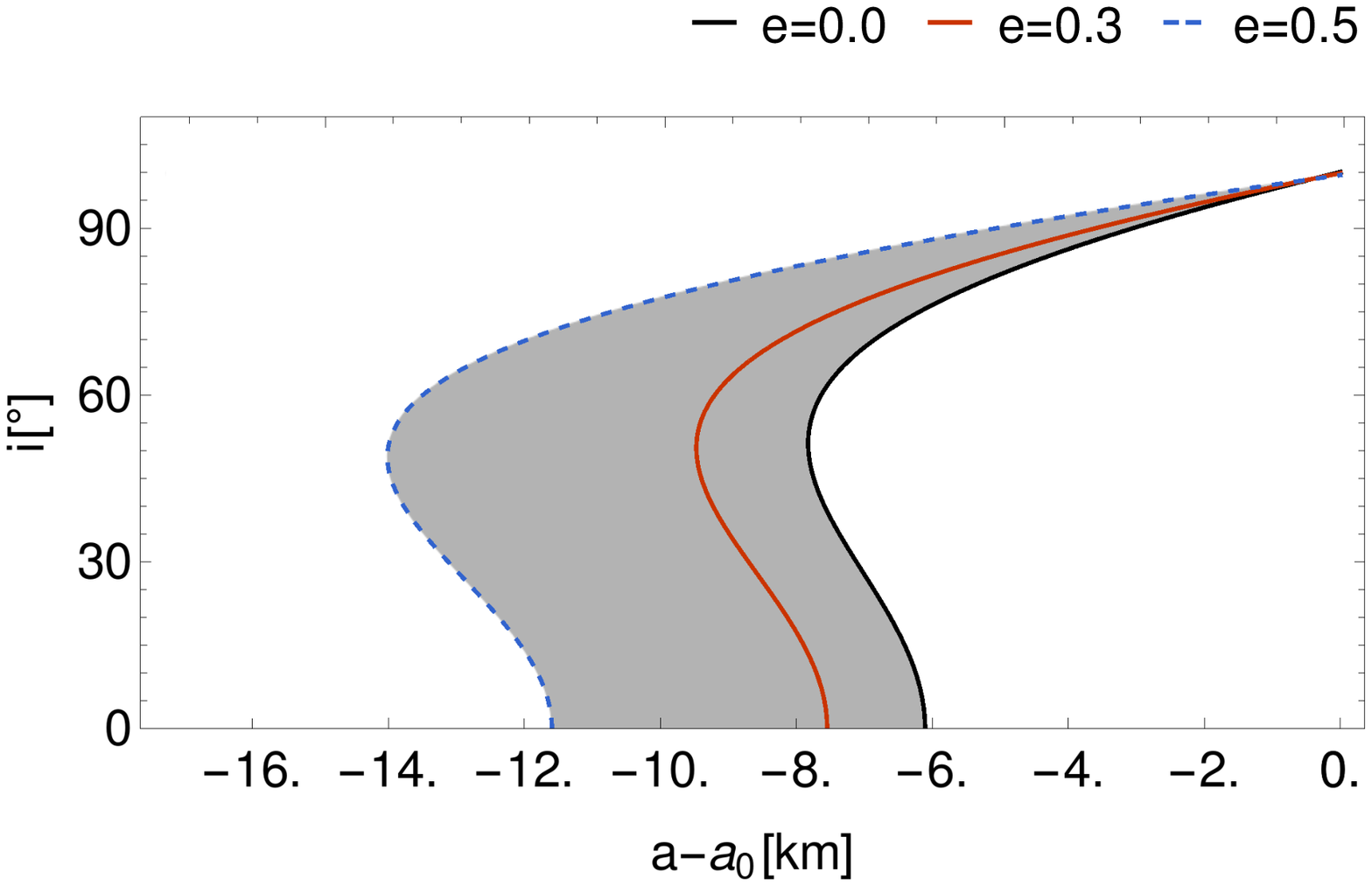}
\caption{Location of tesseral resonances in $(a,e)$-space (left) and
$(a,i)$-space (right) for the $1:1$ (top) and the $5:1$ resonances (bottom).
Values on abscissae are deviations from the reference value (defined as $J_2=e=i=0$)
given in $km$.}
\label{f:tes}
\end{figure}

The dependency of the location of the resonance on the $J_2$
gravity field expansion coefficient is shown in Table~\ref{t:tes}.
The influence of the flattening of the Earth on the resonant value
of the semi-major axis $a$ ranges from less than 1 km (resonances
$3:4$) up to a few kilometers, e.g. for the resonance $5:1$.

\begin{table}
$$
\begin{array}{|c|c|r|r|c|}
\hline
\hline
j & \ell & a_{J_2}[km] & a[km] & a_{J_2}-a [km]  \\
\hline
 3 & 4 & 51 \ 079.116 & 51 \ 078.254 & 0.860 \\
 4 & 5 & 48 \ 928.085 & 48 \ 927.185 & 0.900 \\
 1 & 1 & 42 \ 165.214 & 42 \ 164.170 & 1.044 \\
 5 & 4 & 36 \ 337.192 & 36 \ 335.980 & 1.212 \\
 4 & 3 & 34 \ 807.020 & 34 \ 805.755 & 1.265 \\
 3 & 2 & 32 \ 178.652 & 32 \ 177.284 & 1.369 \\
 5 & 3 & 29 \ 996.159 & 29 \ 994.691 & 1.468 \\
 2 & 1 & 26 \ 563.420 & 26 \ 561.762 & 1.658 \\
 5 & 2 & 22 \ 892.157 & 22 \ 890.233 & 1.924 \\
 3 & 1 & 20 \ 272.591 & 20 \ 270.419 & 2.172 \\
 4 & 1 & 16 \ 735.493 & 16 \ 732.862 & 2.631 \\
 5 & 1 & 14 \ 422.996 & 14 \ 419.943 & 3.053 \\
\hline
\end{array}
$$
\caption{Values of $j:\ell$ tesseral resonances with ($a_{J_2}$)
and without ($a$) the dependency on $J_2$.} \label{t:tes}
\end{table}

\begin{figure}[h]
\centering
\hglue0.1cm
\includegraphics[width=0.49\textwidth]{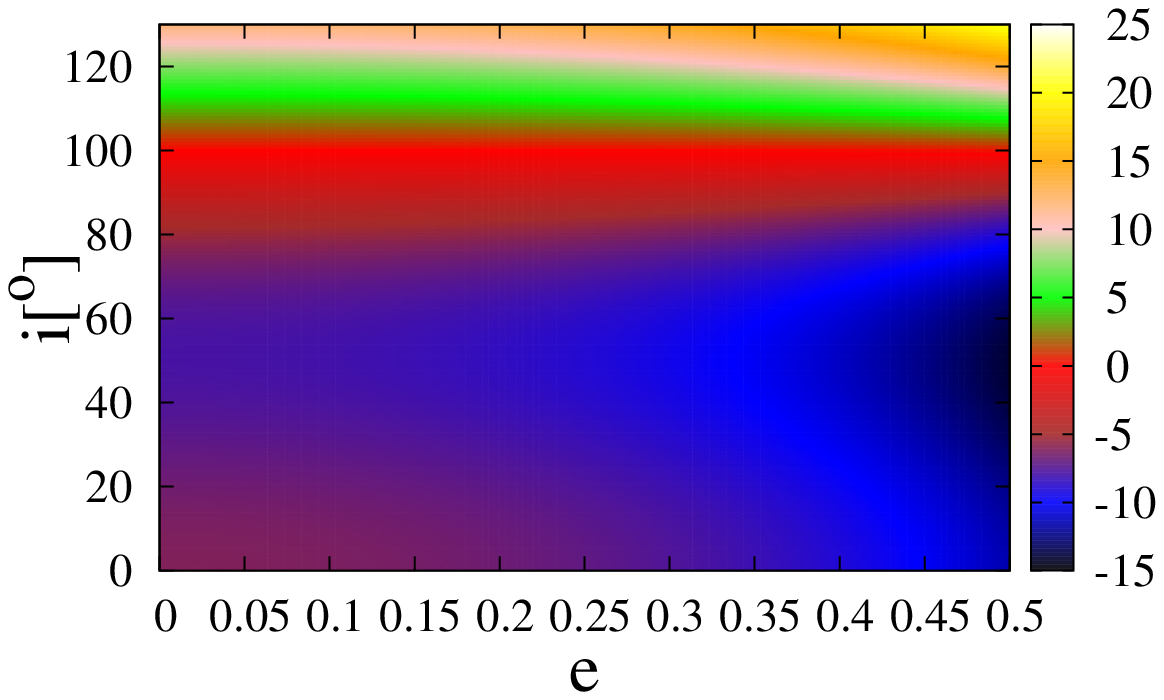}
\includegraphics[width=0.49\textwidth]{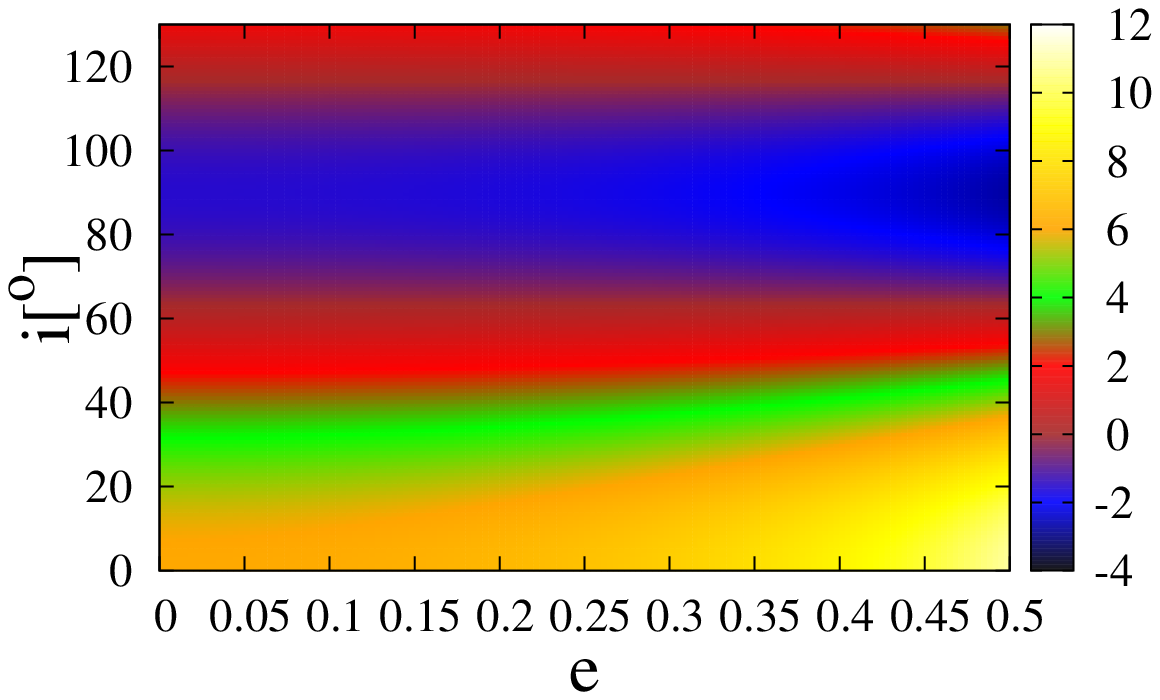}
\vglue0.5cm
\caption{Resonance 5:1. Shift in semi--major axis (color bar in
km) of the location of the equilibria as a function of $e$ and
$i$, due to the effect of $J_2$.
 {\it Left}: Shift of the location of the exact resonance
 $\dot{\sigma}_{51}=0$ from the nominal value $a_{5:1}=14 \, 419.9\ $ km
 (see also the Table~\ref{t:tes}). {\it Right}: Shift of the
location of the exact resonance of a harmonic term whose argument
is $\sigma_{51}+(k+1) \omega $, $k \in \mathbb{Z}$, from the
location of the resonance associated to another harmonic term
whose argument is $\sigma_{51}+ k \omega $.} \label{shift}
\end{figure}

\begin{figure}
\centering
\includegraphics[width=0.49\textwidth]{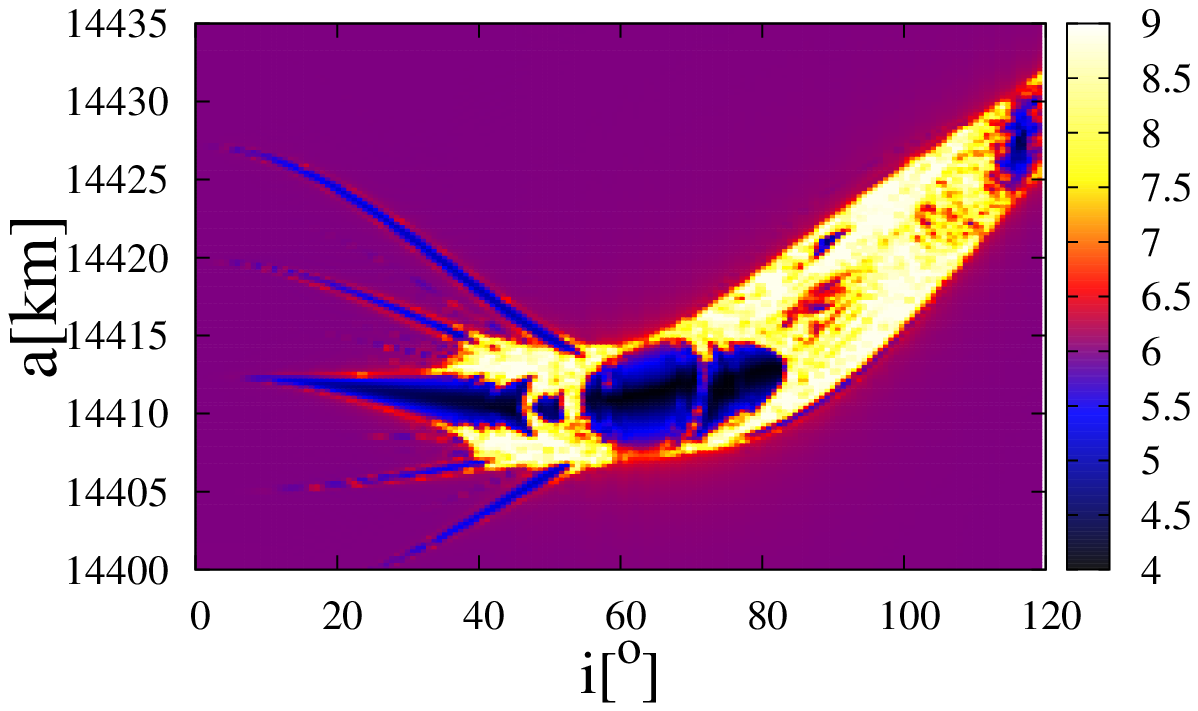}
\includegraphics[width=0.49\textwidth]{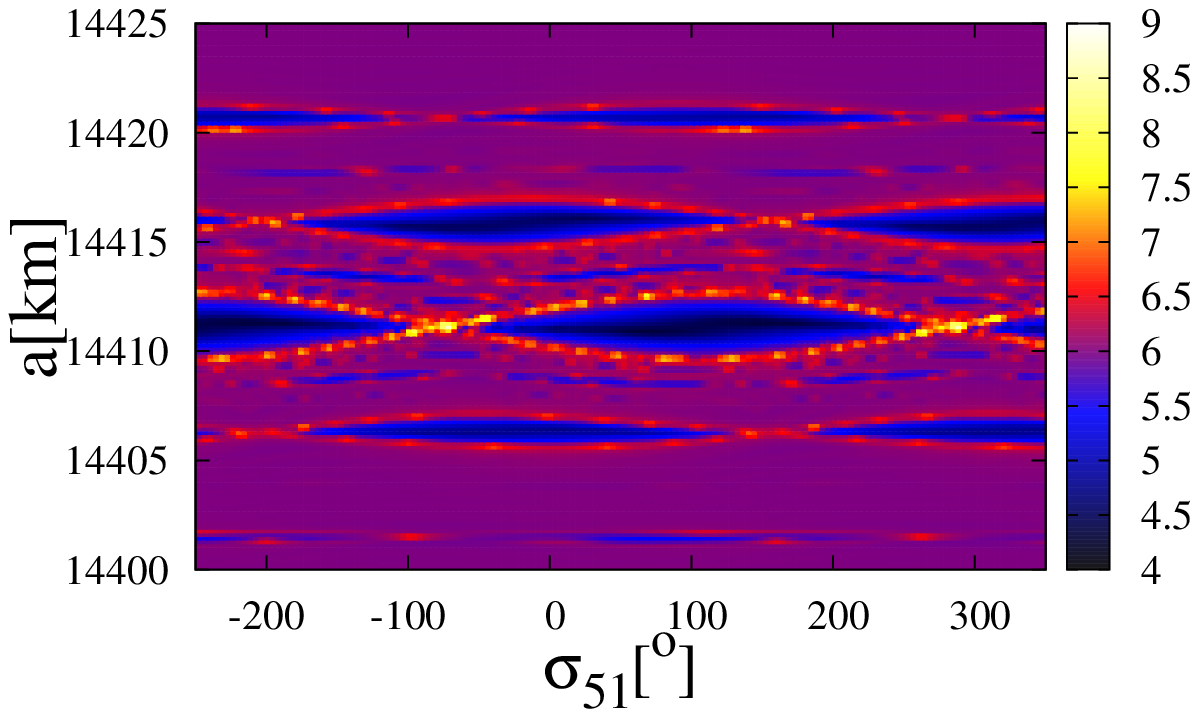}
\includegraphics[width=0.49\textwidth]{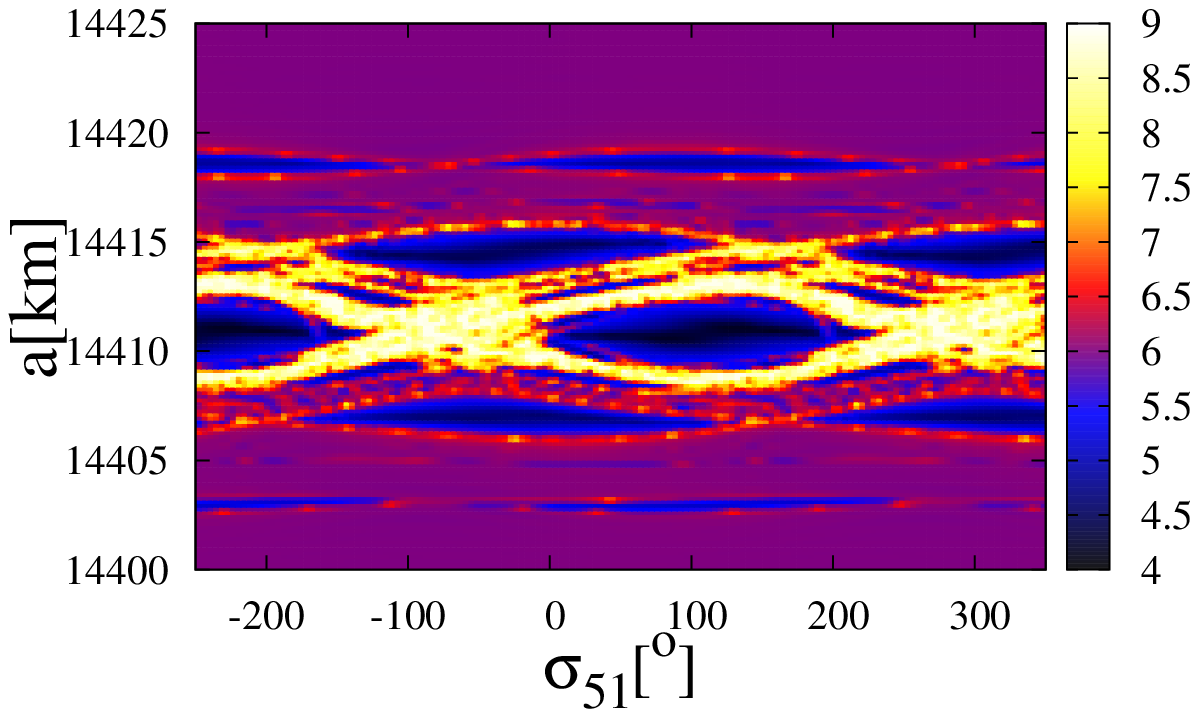}
\includegraphics[width=0.49\textwidth]{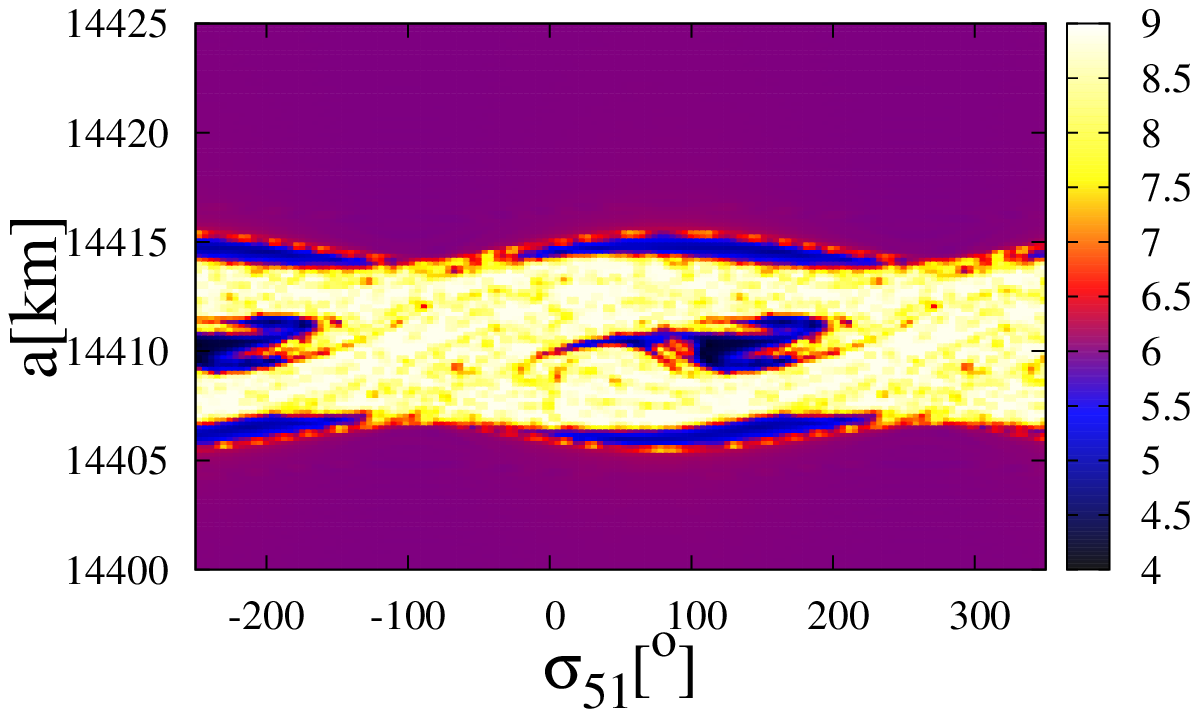}
\includegraphics[width=0.49\textwidth]{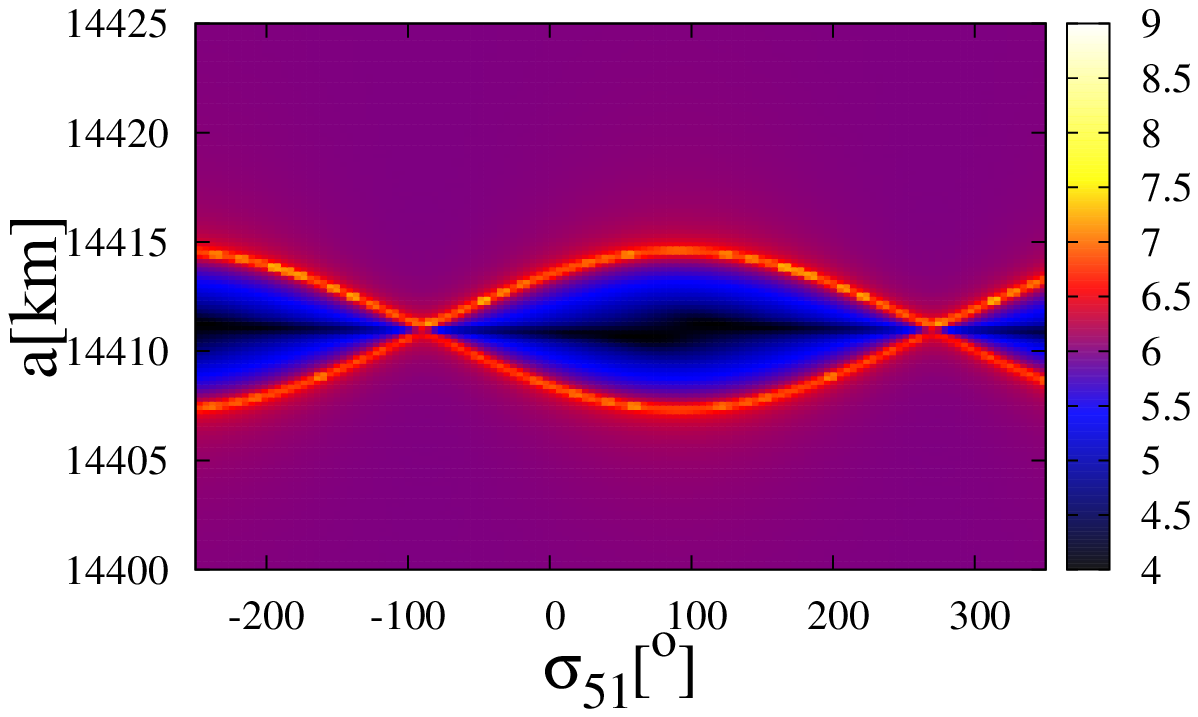}
\includegraphics[width=0.49\textwidth]{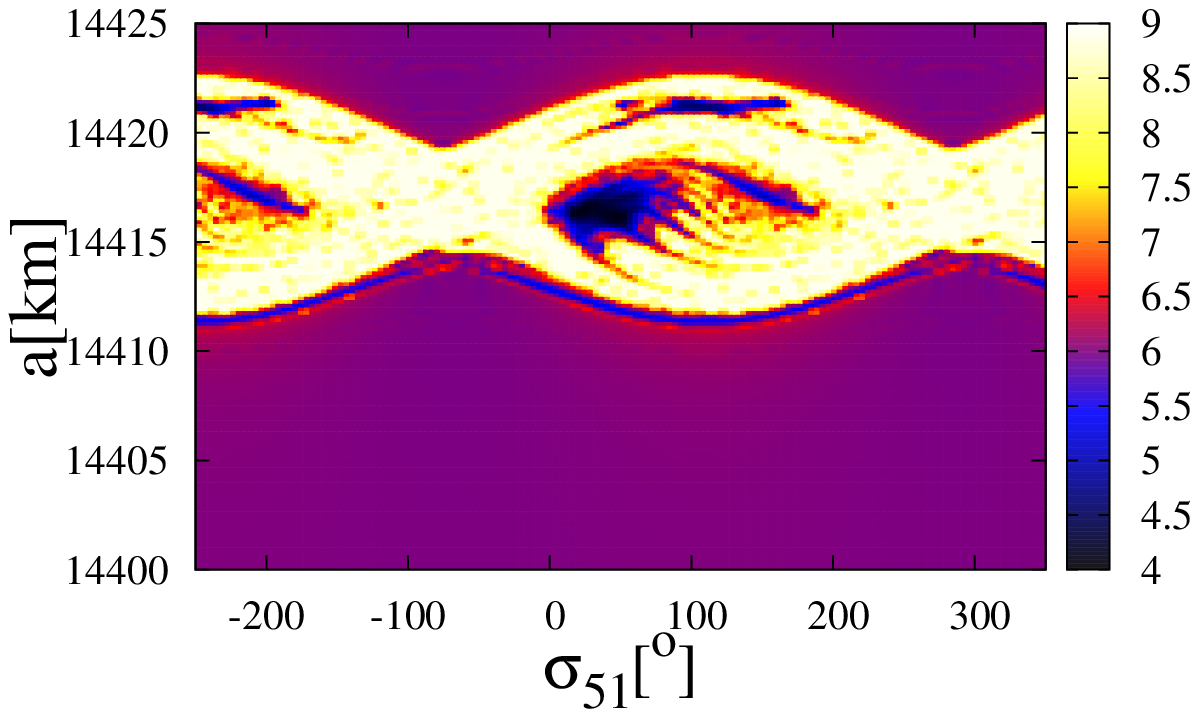}
\vglue0.8cm
\caption{FLI for the 5:1 tesseral resonance as a function of
semi-major axis and inclination (top left) and respectively
semi-major axis and resonant angle (all other plots). The initial
values are $e=0.3$, $\omega=0^o$, $\Omega=0^o$ for all panels. The
top left panel is obtained for the initial resonant angle
$\sigma_{51}=105^o$, while the plots in the
        $(a-\sigma_{51})$--plane are computed for $i=32^o$ (top \red{right}),
        $i=38^o$ (\red{middle left}), $i=50^o$ (\red{middle right}),
        $i=63.4^o$ (\red{bottom left}) and $i=90^o$ (bottom right). } \label{f:tescart}
\end{figure}

Due to the effect of the secular part\footnote{Since the
coefficient $J_2$  is much larger than any other zonal harmonic
coefficient, the secular part is dominated essentially by the
$J_2$ harmonic terms. From a numerical viewpoint, it is enough to
consider just the influence of the $J_2$ harmonic terms in order
to catch the main effects of the secular part.}, the frequencies
$\dot{\omega}$ and $\dot{\Omega}$ are not zero (compare with
$\eqref{MomeagaOmega_var}$). As a consequence, as already pointed
out in Definition~\ref{def:multiplet_resonance}, each $j:\ell$
resonance splits into a multiplet of resonances and, hence, each
harmonic term of a specific resonance, with big enough magnitude,
yields equilibria located at different distances from the center
of the Earth. Roughly speaking, the values provided in
Table~\ref{t:tes} give just a hint on the location of the
resonances, including the minor ones, different from 1:1 and 2:1.
Actually, the location and width of the minor resonances, as well
as the regular and chaotic behavior of the corresponding resonant
regions, are affected by the interaction between the secular part
and the resonant harmonic terms. The dynamics of a $j : \ell$
tesseral resonance may be analytically described by estimating the
location of the equilibria of specific components of the multiplet
and the width of the associated resonant islands.

Thus, when studying a tesseral resonance, by using
\eqref{MomeagaOmega_var} we can estimate the location of the exact
resonance for each component of the multiplet
\eqref{multipletcond}. For example, Figure~\ref{shift} provides
the shift of the equilibria from the nominal value $a_{5:1}=14\,419.9\, km$, expressed in
kilometers, as a function of eccentricity and inclination, for the
$5:1$ resonance. Let us \red{recall the definition}
$\sigma_{j\ell}\equiv \ell M-j\theta+\ell\omega+j\Omega$. The left
panel of Figure~\ref{shift} shows the shift of the location of the
equilibria associated to the exact resonance
$\dot{\sigma}_{51}=0$ from the nominal distance $a_{5:1}=14 \,419.943\  km $, given in Table~\ref{t:tes} and obtained by using
Kepler's third law. Negative values are used to show that the
equilibria are closer to the Earth than $a_{5:1}$, while positive
values are used to express that the equilibria are farther from
the Earth than $a_{5:1}$. The right panel gives the shift of the
location of the exact resonance of a harmonic term whose argument
is $\sigma_{51}+(k+1) \omega $, $k \in \mathbb{Z}$, from the
location of the resonance associated to another harmonic term
whose argument is $\sigma_{51}+ k \omega $. We recall that
$\sigma_{51}$ denotes the resonant angle $\sigma_{51}=M-5
\theta+\omega+5 \Omega$. The positive/negative sign in the color bar expresses the fact that the equilibrium
point associated to the term whose argument is $\sigma_{51 }+(k+1)
\omega $ is located farther/closer to the Earth than the
equilibrium point corresponding to the harmonic term whose
argument is $\sigma_{51}+ k \omega $.

By using Figure~\ref{shift} in connection with a simple procedure
for computing the amplitude of the resonant island for a specific
multiplet, as described in \cite{CGminor}, we can give an
analytical explanation concerning the location of the equilibrium
points, an estimate of the amplitude of the resonant island (or
islands) and a description of the dynamics of space debris in the
resonant regions for each tesseral resonance.

Analyzing the right panel of Figure~\ref{shift}, and taking also
into account that the amplitude of the resonant islands is of the
order of few kilometers, we can infer that for small inclinations
a splitting phenomenon takes place, namely the width of the
resonance associated to each component of the multiplet is smaller
than the distance separating these resonances. On the contrary, for
larger inclinations we have an opposite phenomenon, called
superposition of harmonics, which gives rise to a complex behavior
of the semi-major axis. Indeed, Figure~\ref{f:tescart} shows a
cartography of the 5:1 tesseral resonance in the $(a,i)$--plane
(top left) and in the $(a-\sigma_{51})$--plane (all other plots),
for  eccentricity $e=0.3$  based on \red{the computation of Fast Lyapunov Indicators (hereafter FLIs, see the Appendix for more details on the tools used to
study the cartography of the resonances)}. Thus, for $i<35^o$ it is easy to see
that the amplitude of the resonant islands vary from 0 to 4
kilometers (compare with \cite{CGminor}), while the
minimum distance between the equilibria associated to any two
harmonic terms is larger than 4 kilometers (see the top right
panel of Figure~\ref{shift}). Therefore, the harmonic terms having
large enough magnitude give rise to non--overlapping resonances,
as shown by the top \red{right} panel of Figure~\ref{f:tescart}. The
situation is opposite for larger inclinations, say
$i>35^o$ (with the exception of the value $i=63.4^o$ corresponding
to the critical inclination and discussed below), at least for moderate eccentricities
(that is $e \in [0.15, 0.5])$; the width of the resonant island
associated to the dominant term, which in this case is
$\mathcal{T}_{5520}$, is at least 4 kilometers, while
Figure~\ref{shift} (top right) shows that it is possible to have
harmonic terms $\mathcal{T}_{nmpq}$ whose associated equilibria
are shifted by less than 4 kilometers. If these terms are
comparable in magnitude with the dominant term (most of them
satisfy this assumption, provided the eccentricity is large enough
as in Figure~\ref{f:tescart}), then the resonances of the
multiplet superimpose (\red{middle left, middle right and bottom right} of
Figure~\ref{f:tescart}).

Let us remark that for any resonance an interesting fact occurs at
the critical inclination $i=63.4^o$. Within the $J_2$
approximation, this value of the inclination makes the argument of
perigee to be constant (see relation $\eqref{MomeagaOmega_var}$)
and since the argument of any two harmonic terms differs by
$k\omega$, then the shift in $a$ is zero. As a result, for
$i=63.4^o$  the pattern of the resonance is similar to a
pendulum (see \red{bottom left} panel of Figure~\ref{f:tescart}).


\section{A characterization of semi-secular resonances}\label{sec:semi}

In this Section we give some results for the semi-secular resonances associated to the
effects of Sun (Sections~\ref{sec:solarsemi} and \ref{sec:around}) and Moon (Section~\ref{sec:lunarsemi}).
\red{
The Solar semi-secular resonances involve the rates of variation of the argument of perigee,
the longitude of the ascending node and the \red{Sun's} mean motion. We consider different cases within
the quadrupolar approximation: in Proposition~\ref{pro1S} we fix $a$, $e$ and obtain a constraint
on the inclination; in Proposition~\ref{pro2S} we fix $e$,  $i$ and find an expression for the semimajor axis;
in Proposition~\ref{pro3S} we fix $a$, $i$ and find an expression for the eccentricity.
A model for the description of the dynamics in the neighborhhod of the Solar semi-secular resonances is introduced
in Section~\ref{sec:around}. Lunar semi-secular resonances are discussed in Section~\ref{sec:lunarsemi};
in particular, we give bounds for the existence of solutions when $a$, $e$ are fixed (Proposition~\ref{pro:1L}),
for a given semimajor axis and $e$, $i$ satisfying a constraint (Proposition~\ref{pro:2L}), for a given
eccentricity and $a$, $i$ satisfying a constraint (Proposition~\ref{pro:3L}).
}

\subsection{Solar semi-secular resonances}\label{sec:solarsemi}
Solar semi-secular resonances are characterized by a relation of the form
\beq{SSSR}
\alpha\dot\omega+\beta\dot\Omega-\gamma\dot M_S=0\ ,
\eeq
where $\alpha=0,\pm 2$, $\beta=0,\pm 1,\pm 2$, $\gamma\not=0$.
We can set $\dot M_S=1$ $^o$/day. Notice that for $\alpha=\beta=\gamma=2$, one obtains the
so-called \sl evection \rm resonance. In the following we present results similar to those of Section~\ref{sec:tesseral}
(see Propositions~\ref{ref:pro1}, \ref{ref:pro2}, \ref{ref:pro3}). We focus on resonant orbits with
low or moderate eccentricities, let say $e\leq 0.5$.

\begin{proposition}\label{pro1S}
Within the quadrupolar approximation \equ{quad}, consider the resonance relation \equ{SSSR} with
fixed $\alpha$, $\beta$, $\gamma$. For \red{given} values of $a$, $e$, let us introduce the quantity
$$
A=4.98\Bigl({R_E\over a}\Bigr)^{7\over 2}(1-e^2)^{-2}\ .
$$
$(a)$ If $\alpha=0$, then \equ{SSSR} admits one solution provided
$$\Bigl| \frac{\gamma}{2 \beta A} \Bigr| \leq 1\,.$$
$(b)$ If $\alpha \neq 0$, let us introduce the quantity
\beq{delta}
\Delta=\beta^2 A^2+5\alpha (\alpha A+\gamma)A\ .
\eeq
Then, we have the following cases:

$(i)_b$ if $\Delta<0$ or $|{{\beta A\pm\sqrt{\Delta}}\over {5\alpha A}}|>1$, then \equ{SSSR} admits
no solutions;

$(ii)_b$ if $\Delta\geq 0$ and just one of the following two conditions is satisfied
\beq{C2}
-(5\alpha+\beta) A\leq \sqrt{\Delta}\leq (5\alpha-\beta)A\,,
\eeq
\beq{C3}
-(5\alpha-\beta) A\leq \sqrt{\Delta}\leq (5\alpha+\beta)A\ ,
\eeq
then \equ{SSSR} admits one solution;

$(iii)_b$ if $\Delta\geq 0$ and both \equ{C2} and \equ{C3} are satisfied,
then \equ{SSSR} admits two solutions.
\end{proposition}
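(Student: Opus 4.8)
The plan is to substitute the quadrupolar expressions for $\dot\omega$ and $\dot\Omega$ from \equ{MomeagaOmega_var} into the resonance relation \equ{SSSR}, set $\dot M_S=1$, and reduce everything to a single polynomial condition in the unknown $c=\cos i$. First I would introduce, exactly as in the statement, the common factor $A=4.98\,(R_E/a)^{7/2}(1-e^2)^{-2}$. Since from \equ{MomeagaOmega_var} one has $\dot\omega=A\,(5\cos^2 i-1)$ and $\dot\Omega=-2A\cos i$, the relation \equ{SSSR} becomes
$$
\alpha A(5c^2-1)-2\beta A c-\gamma=0\ .
$$
When $\alpha=0$ this is linear in $c$, giving $c=-\gamma/(2\beta A)$, which yields a valid inclination precisely when $|\gamma/(2\beta A)|\le 1$; this proves part $(a)$. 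When $\alpha\neq 0$ it is a genuine quadratic $5\alpha A\,c^2-2\beta A\,c-(\alpha A+\gamma)=0$, whose discriminant (up to the factor $4$) is exactly the quantity $\Delta=\beta^2A^2+5\alpha(\alpha A+\gamma)A$ defined in \equ{delta}. Its roots are $c=(\beta A\pm\sqrt\Delta)/(5\alpha A)$, matching the expression appearing in the conditions of case $(b)$.

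The remaining work is purely to count how many of the two algebraic roots correspond to admissible inclinations, i.e.\ satisfy $|c|\le 1$. For this I would treat each root separately and translate the constraint $-1\le(\beta A\pm\sqrt\Delta)/(5\alpha A)\le 1$ into inequalities on $\sqrt\Delta$. The key bookkeeping step is to multiply through by $5\alpha A$ while tracking signs, so that the two sign choices for the root reorganize into the two symmetric bracketing conditions \equ{C2} and \equ{C3}; one verifies that the $+$ root is admissible iff one of these holds and the $-$ root iff the other holds. Consequently, $(i)_b$ (no admissible root: either $\Delta<0$, so the roots are not real, or a real root exceeds $1$ in modulus), $(ii)_b$ (exactly one of \equ{C2}, \equ{C3} holds, giving one admissible root), and $(iii)_b$ (both hold, giving two) follow immediately.

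The main obstacle is the careful sign analysis in the $(b)$ case: dividing the inequality $|c|\le 1$ by $5\alpha A$ requires knowing the sign of $\alpha A$, and since $A>0$ always, this reduces to tracking $\mathrm{sign}(\alpha)$ together with the two choices of $\pm\sqrt\Delta$. The rest of the argument, being the substitution, the identification of $\Delta$ with the discriminant, and the linear case, is elementary, exactly as remarked after Proposition~\ref{ref:pro1} for the tesseral case.
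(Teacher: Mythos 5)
Your proposal is correct and follows essentially the same route as the paper's proof: both reduce \equ{SSSR} to the quadratic $5\alpha A\cos^2 i-2\beta A\cos i-(\alpha A+\gamma)=0$ in $\cos i$, treat $\alpha=0$ as the linear case giving part $(a)$, identify $\Delta$ in \equ{delta} as the (quarter) discriminant with roots $(\beta A\pm\sqrt{\Delta})/(5\alpha A)$, and count admissible roots via $|\cos i|\le 1$, which yields the bracketing conditions \equ{C2} and \equ{C3}. If anything, you spell out the sign bookkeeping (multiplying through by $5\alpha A$ with $A>0$, so tracking $\mathrm{sign}(\alpha)$) that the paper's two-line proof leaves implicit.
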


\begin{proof}
Equation \equ{SSSR} can be written as
$$
5\alpha A\cos^2i-2\beta A\cos i-(\alpha A+\gamma)=0\ .
$$
If $\alpha=0$ then $\beta \neq 0$ and one deduces $(a)$. Otherwise, if $\alpha \neq 0$ then \equ{SSSR} admits
the solutions
$$
\cos i={{{\beta A\pm\sqrt{\Delta}}}\over {5\alpha A}}
$$
with $\Delta$ as in \equ{delta}. The solutions are real if $\Delta\geq 0$,
otherwise there are no solutions. Moreover, taking into account that $|\cos i|\leq 1$,
one obtains $(ii)_b$-$(iii)_b$.
\end{proof}

\begin{figure}
\centering
\includegraphics[width=0.45\textwidth]{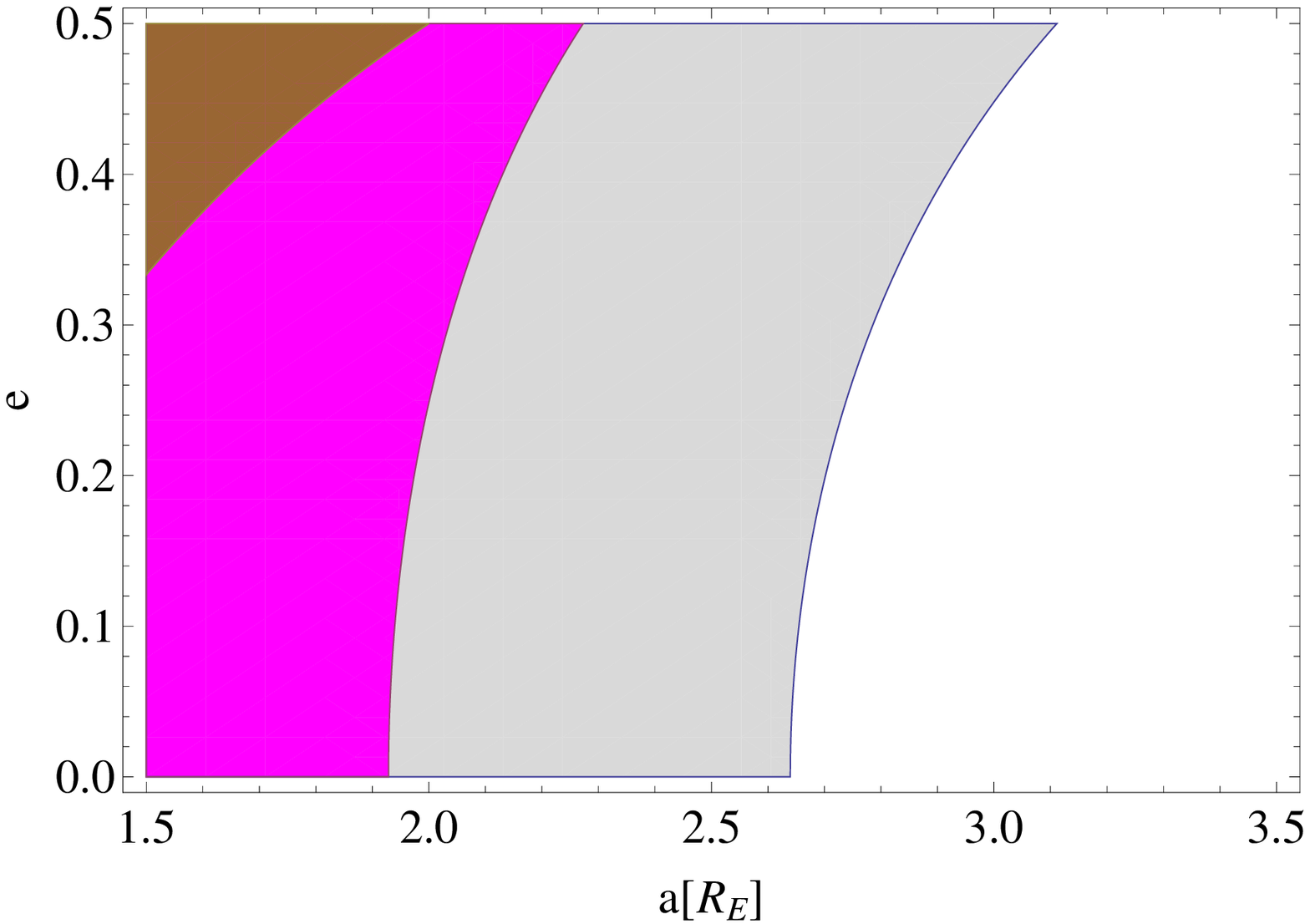} \hspace{0.1cm}
\includegraphics[width=0.45\textwidth]{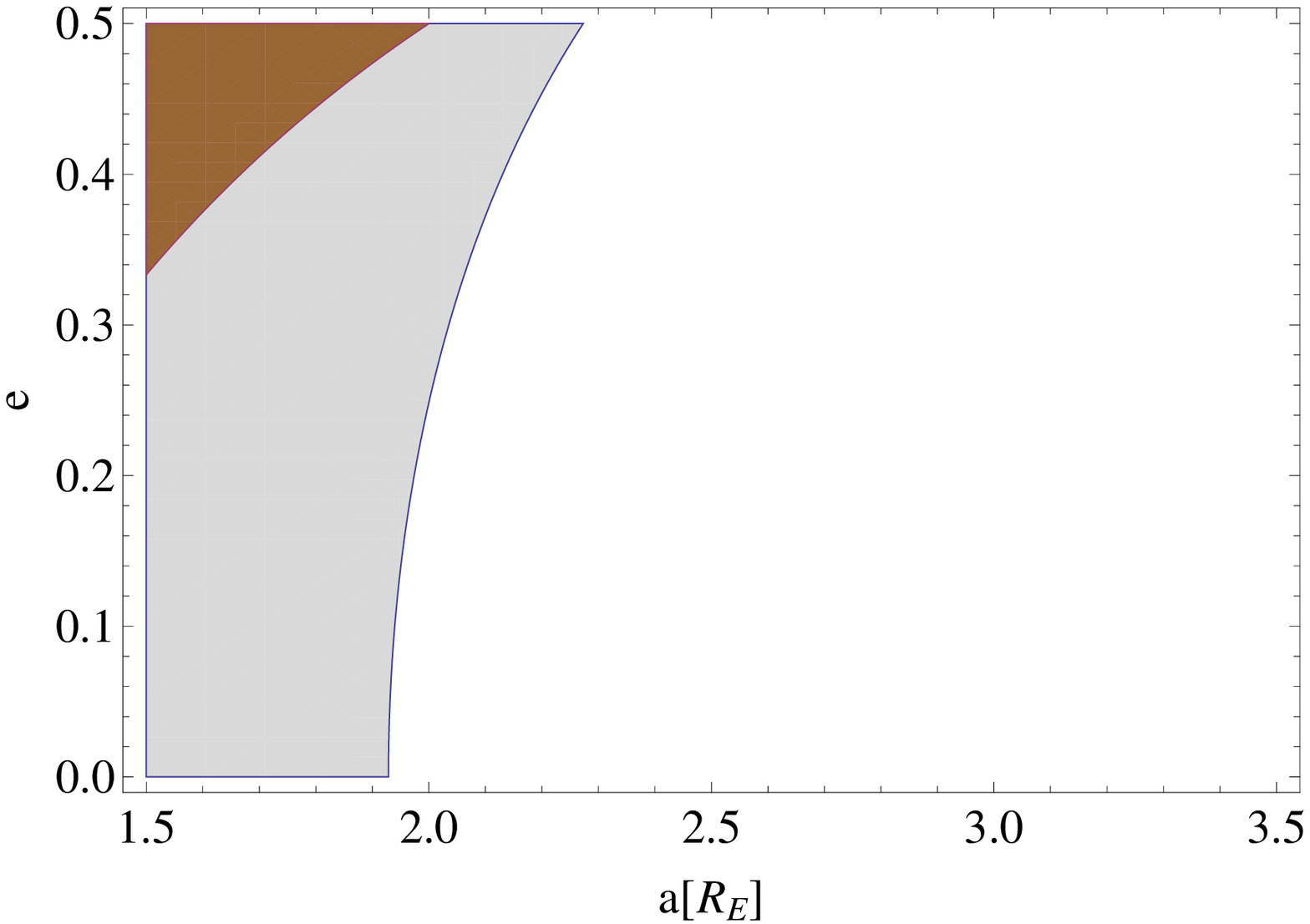}\\
\vspace{0.1cm}
\includegraphics[width=0.45\textwidth]{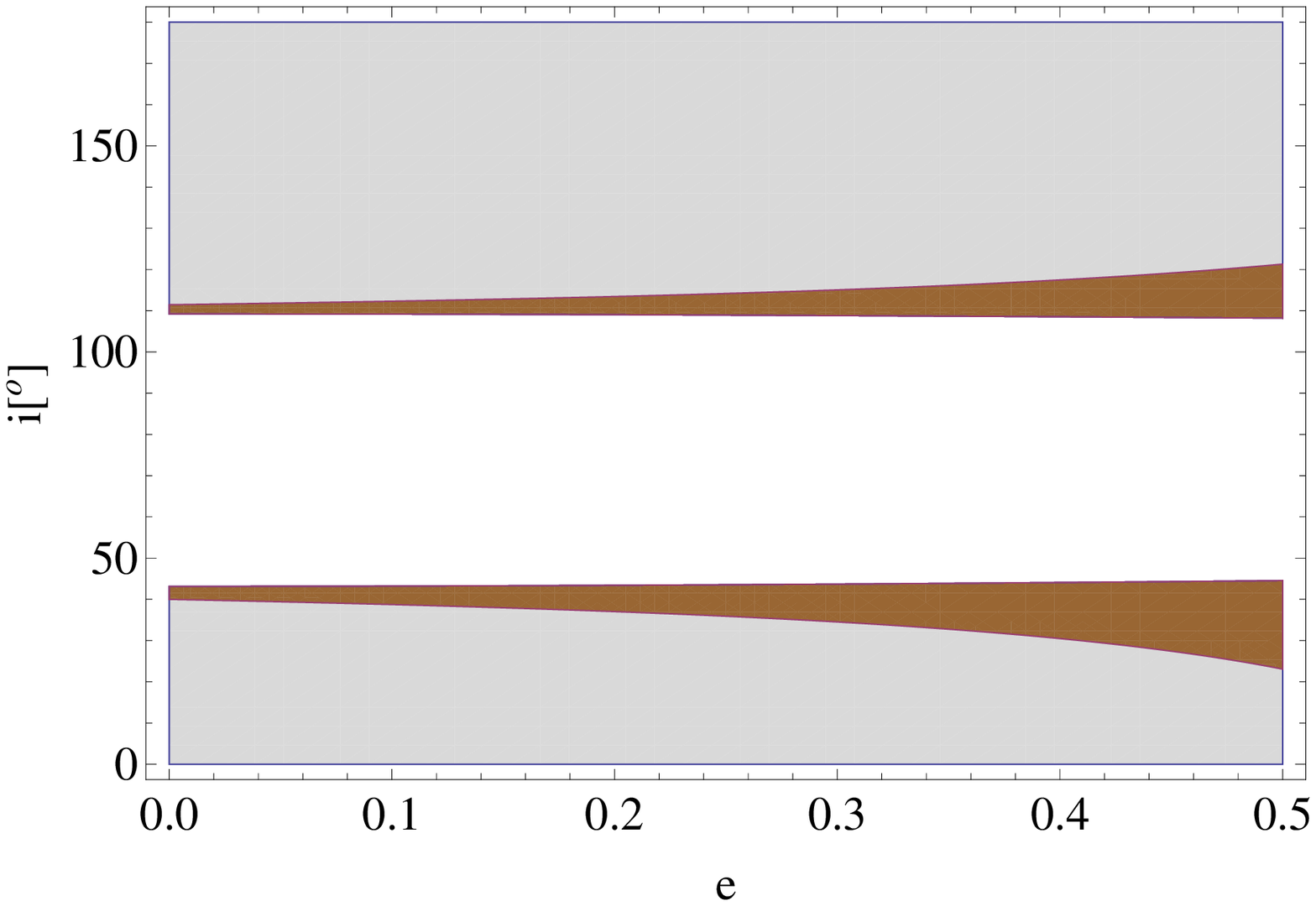} \hspace{0.1cm}
\includegraphics[width=0.45\textwidth]{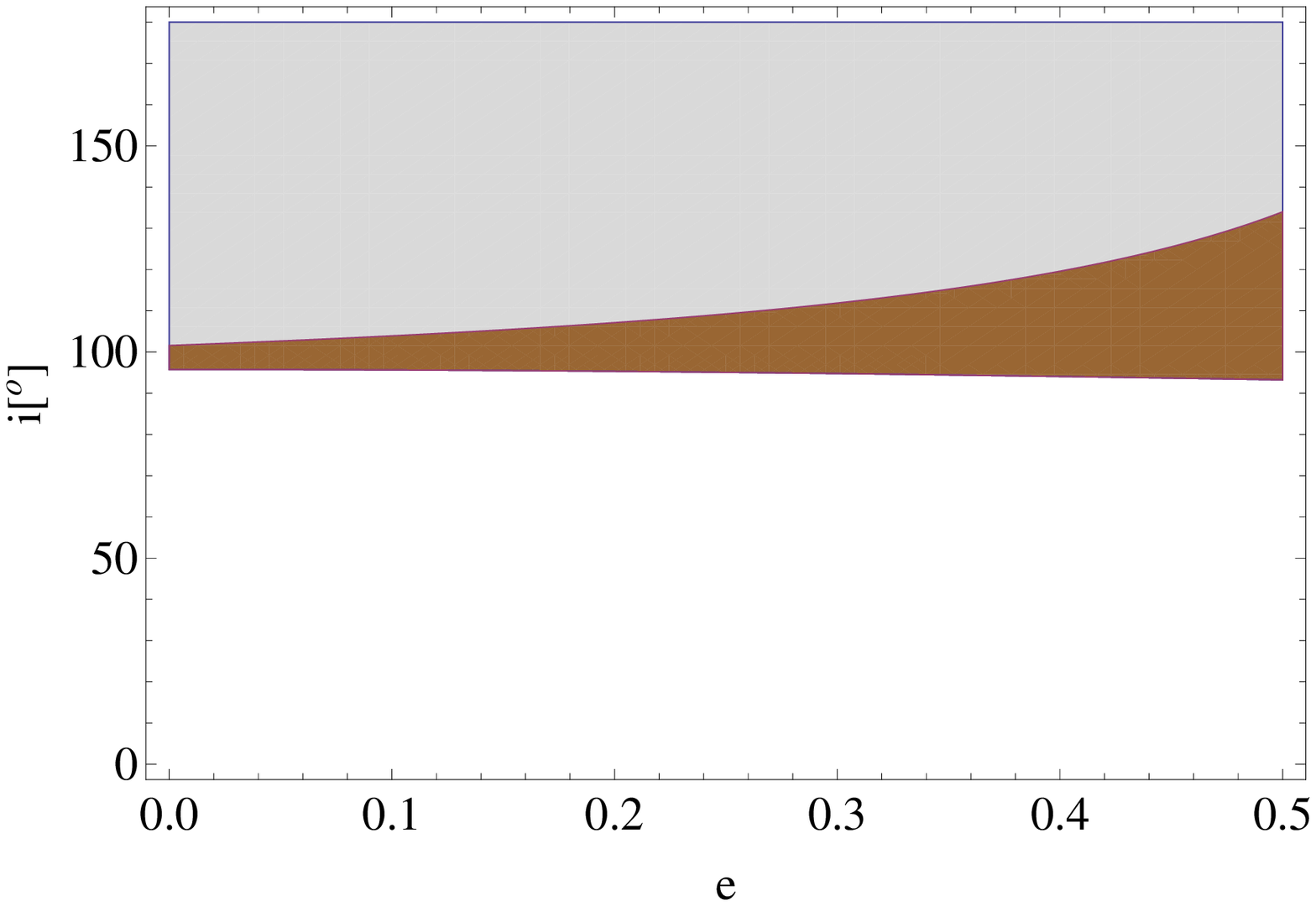}\\
\vspace{0.1cm}
\includegraphics[width=0.45\textwidth]{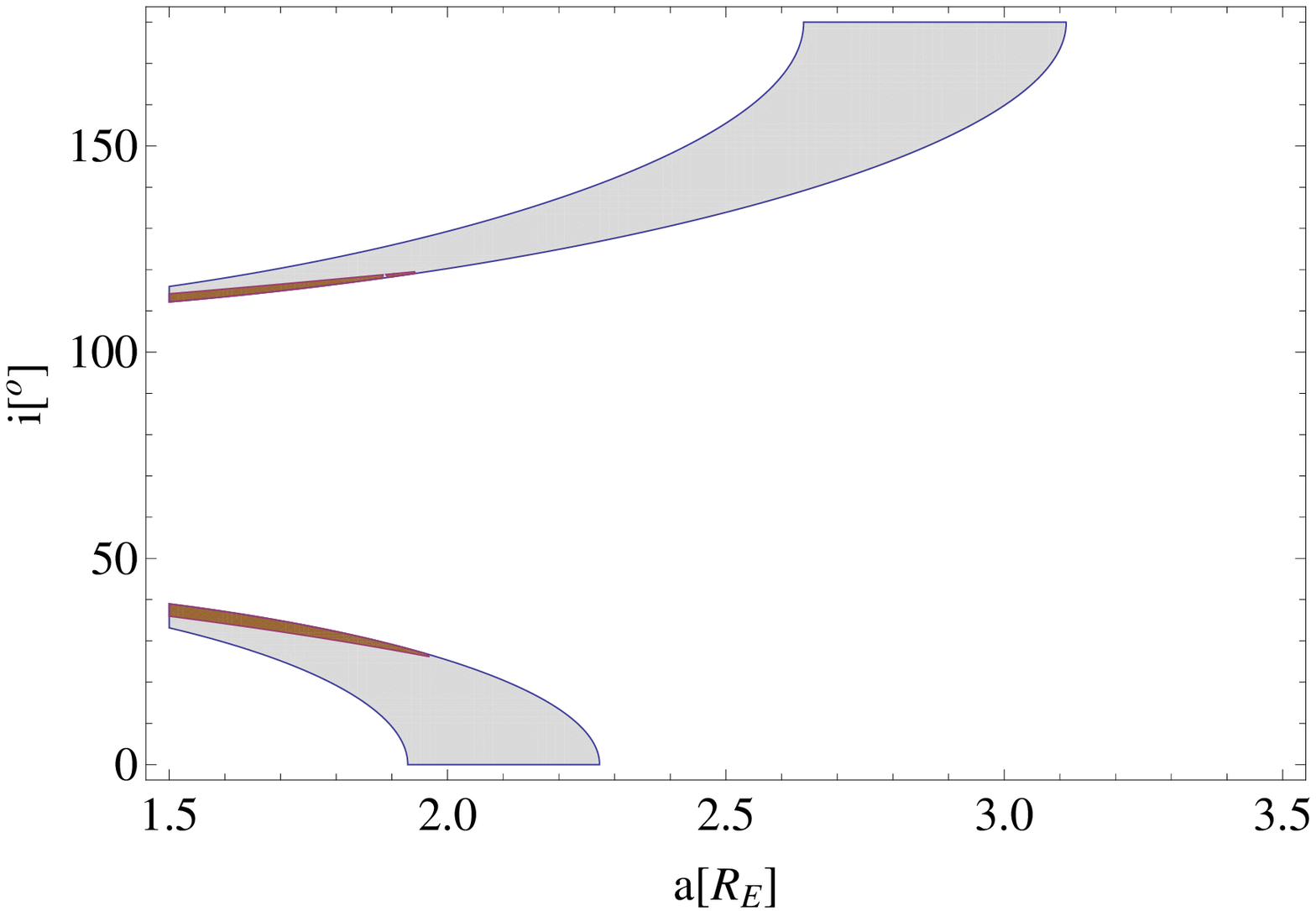} \hspace{0.1cm}
\includegraphics[width=0.45\textwidth]{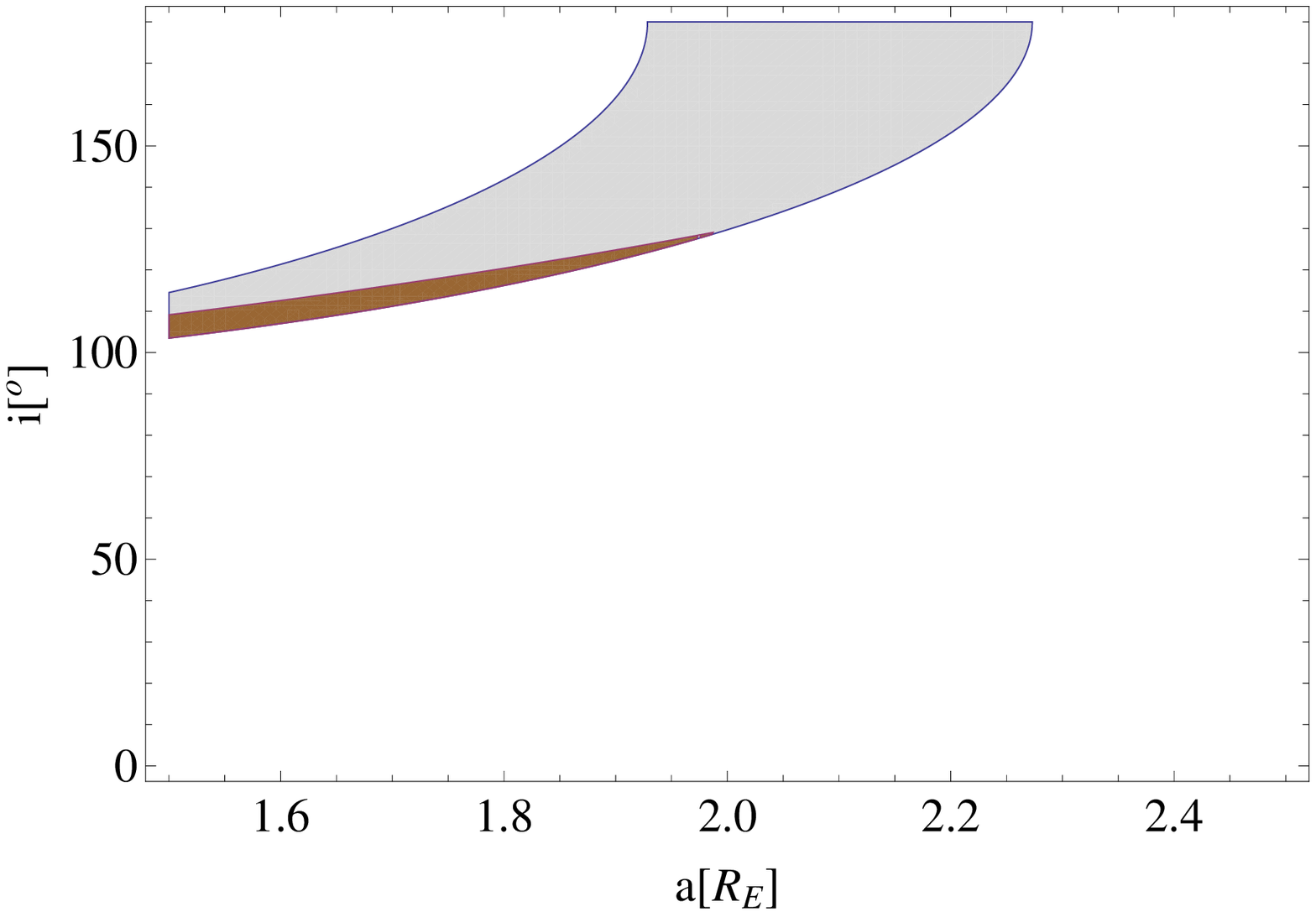}
\caption{The regions in the $(a,e)$ plane (top panels), $(e,i)$
plane (middle panels), $(a,i)$ plane (bottom panels), where
equation \equ{SSSR} admits solutions. Semi-secular resonances with
$\alpha=2$, $\beta=2$, $\gamma=2$ (left column) $\alpha=0$,
$\beta=2$, $\gamma=2$ (right column). Legend: white -- no
solutions, light grey -- one solution, purple -- two solutions,
brown--colliding orbits.} \label{f:prop11}
\end{figure}

As an example, let us consider the resonances with $\alpha=0$,
$\beta=\gamma=2$ and  $\alpha=\beta=\gamma=2$, respectively. Then,
analyzing the conditions stated in Proposition~\ref{pro1S}, we
compute the regions in the $(a,e)$ plane which admit solutions
(see Figure~\ref{f:prop11}).

In a similar way \red{to Proposition~\ref{pro1S},} one can prove the following result.

\begin{proposition}\label{pro2S}
Within the quadrupolar approximation \equ{quad}, for \red{given} values of $e$ and $i$ the resonance relation \equ{SSSR} with
fixed $\alpha$, $\beta$, $\gamma$ admits solutions for
$$
a=R_E ({A\over\gamma})^{2\over 7}
$$
with
$$
A=4.98\alpha(1-e^2)^{-2}(5\cos^2i-1)-9.97\beta(1-e^2)^{-2}\cos i\ ,
$$
provided
$$
4.98\alpha(1-e^2)^{-2}(5\cos^2i-1)-9.97\beta(1-e^2)^{-2}\cos i>\gamma\ .
$$
\end{proposition}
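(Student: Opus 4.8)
The plan is to mimic the elementary strategy already used for the tesseral resonances (Propositions~\ref{ref:pro1}--\ref{ref:pro3}): insert the $J_2$-induced rates of variation \eqref{MomeagaOmega_var} into the defining relation \equ{SSSR} and solve the resulting algebraic equation for the remaining free parameter, which here is the semimajor axis $a$. First I would substitute
\[
\dot\omega\simeq 4.98\Bigl(\frac{R_E}{a}\Bigr)^{7/2}(1-e^2)^{-2}(5\cos^2 i-1),\qquad
\dot\Omega\simeq -9.97\Bigl(\frac{R_E}{a}\Bigr)^{7/2}(1-e^2)^{-2}\cos i
\]
into \equ{SSSR}, using the normalization $\dot M_S=1^\circ/\text{day}$ fixed in the text preceding the statement.

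Since both $\dot\omega$ and $\dot\Omega$ carry the identical factor $(R_E/a)^{7/2}$, I would collect it out front. The terms depending only on $e$ and $i$ then assemble precisely into the quantity $A$ defined in the statement, so that \equ{SSSR} reduces to the single scalar equation
\[
\Bigl(\frac{R_E}{a}\Bigr)^{7/2}A=\gamma .
\]
Because $e$ and $i$ are fixed, $A$ is a constant, and this is a pure power law in $a$. Solving gives $(R_E/a)^{7/2}=\gamma/A$, hence $a=R_E(A/\gamma)^{2/7}$, which is the claimed formula. Note that, in contrast with the tesseral Proposition~\ref{ref:pro1}, no quadratic (and hence no $\pm$ branch) arises here, because once $i$ is frozen the entire inclination dependence is absorbed into the constant $A$.

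The only genuine point to check is the admissibility of the solution, namely that it corresponds to a physical orbit lying above the Earth's surface, $a>R_E$. Since $2/7>0$, the map $x\mapsto x^{2/7}$ is increasing on the positive reals, so $a>R_E$ is equivalent to $A/\gamma>1$; for $\gamma>0$ this is exactly the stated hypothesis $A>\gamma$, which simultaneously guarantees that $A/\gamma$ is positive (so the real root exists) and that $a$ exceeds $R_E$. I therefore expect the main---and essentially only---obstacle to be bookkeeping: verifying that the numerical coefficients $4.98$ and $9.97$ together with the $(1-e^2)^{-2}$ factors combine into $A$ exactly as written, and articulating the constraint $A>\gamma$ as the physical requirement $a>R_E$ rather than as a mere formal solvability condition.
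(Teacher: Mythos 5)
Your proposal is correct and follows exactly the route the paper intends: the paper gives no separate proof for Proposition~\ref{pro2S}, stating only that it is proved ``in a similar way'' to Proposition~\ref{pro1S}, i.e.\ by inserting the quadrupolar rates \eqref{MomeagaOmega_var} with $\dot M_S=1^\circ/\text{day}$ into \equ{SSSR}, factoring out $(R_E/a)^{7/2}$, and solving the resulting power law for $a$. Your additional observation that the hypothesis $A>\gamma$ is precisely the admissibility condition $a>R_E$ (for $\gamma>0$, since $x\mapsto x^{2/7}$ is increasing and $A/\gamma>1$ also guarantees positivity of the root) is a correct and welcome clarification of a condition the paper states without comment.
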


For the same resonances, that is $\alpha=0$, $\beta=\gamma=2$ and
respectively  $\alpha=\beta=\gamma=2$  we compute in
Figure~\ref{f:prop11} the regions in the $(e, i)$ plane where
equation \equ{SSSR} admits solutions.

\vskip.1in

Finally, we have the following result.

\begin{proposition}\label{pro3S}
Within the quadrupolar approximation \equ{quad}, for \red{given} values of $a$ and $i$ the resonance relation \equ{SSSR} with
given $\alpha$, $\beta$, $\gamma$ admits solutions for
$$
e=\sqrt{1-\sqrt{A\over\gamma}}
$$
with
$$
A=(4.98\alpha(5\cos^2 i-1)-9.97\beta\cos i)({R_E\over a})^{7\over 2}\ ,
$$
provided
$$
(4.98\alpha(5\cos^2i-1)-9.97\beta\cos i)\ ({R_E\over a})^{7\over 2}<\gamma\ .
$$
\end{proposition}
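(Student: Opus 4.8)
The plan is to proceed exactly as in the proof of Proposition~\ref{pro2S}, solving the resonance relation \equ{SSSR} for the eccentricity $e$ instead of the semimajor axis $a$. First I would write out \equ{SSSR} using the quadrupolar rates of variation \equ{MomeagaOmega_var}, namely
$$
\dot\omega\simeq 4.98\Bigl({R_E\over a}\Bigr)^{7\over 2}(1-e^2)^{-2}(5\cos^2 i-1)\ ,\qquad
\dot\Omega\simeq -9.97\Bigl({R_E\over a}\Bigr)^{7\over 2}(1-e^2)^{-2}\cos i\ ,
$$
together with $\dot M_S=1$. Substituting these into $\alpha\dot\omega+\beta\dot\Omega-\gamma\dot M_S=0$ and collecting the common factor $({R_E\over a})^{7\over 2}(1-e^2)^{-2}$, the relation becomes
$$
\bigl(4.98\alpha(5\cos^2 i-1)-9.97\beta\cos i\bigr)\Bigl({R_E\over a}\Bigr)^{7\over 2}(1-e^2)^{-2}=\gamma\ .
$$

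Next I would recognize the prefactor as exactly the quantity $A$ defined in the statement, so the equation reduces to $A(1-e^2)^{-2}=\gamma$, i.e. $(1-e^2)^{-2}=\gamma/A$. Since $a$ and $i$ are fixed, this is a single algebraic constraint on $e$ alone. Solving, $(1-e^2)^2=A/\gamma$, hence $1-e^2=\sqrt{A/\gamma}$ (taking the positive root since $1-e^2>0$ for physical eccentricities $e\in[0,1)$), which yields
$$
e=\sqrt{1-\sqrt{A/\gamma}}\ ,
$$
matching the claimed formula.

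Finally I would address the admissibility condition, which is the only genuinely substantive point. For $e$ to be a well-defined real number in $[0,1)$ we need $0<1-e^2\le 1$, equivalently $0<\sqrt{A/\gamma}\le 1$, i.e. $0<A/\gamma\le 1$, which (taking $\gamma>0$ after absorbing signs, consistent with setting $\dot M_S=1$) amounts to $A<\gamma$ together with $A>0$. Writing $A$ explicitly gives the stated proviso
$$
\bigl(4.98\alpha(5\cos^2 i-1)-9.97\beta\cos i\bigr)\Bigl({R_E\over a}\Bigr)^{7\over 2}<\gamma\ .
$$
The step I expect to require the most care is exactly this branch/sign bookkeeping: one must check that $A/\gamma$ is positive and does not exceed $1$, so that both the inner and outer square roots are real and the resulting $e$ lies in the physical range; the nested radical makes the inequality $A/\gamma\le 1$ the operative constraint, and I would verify that the stated strict inequality $A<\gamma$ captures precisely the existence of an admissible solution.
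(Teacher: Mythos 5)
Your proof is correct and is essentially the paper's own argument: the paper treats Proposition~\ref{pro3S} as an immediate variant of Proposition~\ref{pro1S} (``in a similar way, one can prove''), obtained by exactly your substitution of the quadrupolar rates \equ{MomeagaOmega_var} into \equ{SSSR} and solving the resulting relation $A(1-e^2)^{-2}=\gamma$ for $e$. Your final sign bookkeeping is in fact slightly more careful than the paper's stated proviso, which records only the strict inequality $A<\gamma$ and leaves implicit the positivity requirement $0<A/\gamma$ needed for the nested radical to be real and for $e$ to lie in $[0,1)$.
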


For the same semi--secular resonances as above, Figure~\ref{f:prop11} shows the regions in the $(a,i)$ plane where the equation  \equ{SSSR} admits solutions.

Summarizing the results obtained by applying
Propositions~\ref{pro1S}, \ref{pro2S}, \ref{pro3S}, we may
conclude that for the resonance with $\alpha=\beta=\gamma=2$ the
equation \equ{SSSR} might have 0, 1 or 2 solutions, depending on
the values of $a$, $e$ and $i$; for instance, fixing the
eccentricity, let say $e=0.3$, then, we find that for $a>2.8 R_E$
there are no solutions, for $a \in [2.044 R_E, 2.798 R_E]$ there
is one solution, while for $a<2.044 R_E$ one finds two solutions
(see Figure~\ref{f:prop11}). On the contrary, for the resonance
with $\alpha=0$, $\beta=\gamma=2$, the case $(iii)_b$ of
Proposition \ref{pro1S} cannot be fulfilled, that is one might
find at most one solution.

The location of the resonances with $\alpha=\beta=\gamma=2$ and
$\alpha=0$, $\beta=\gamma=2$, respectively, is analyzed
numerically in Figure~\ref{f:sss}, which complements the
results given in Figure~\ref{f:prop11}.
Grey shaded regions define the
zone in which \equ{SSSR} can be solved, \red{collisional regimes are highlighted in brown.}
Figure~\ref{f:sss} numerically validates the Propositions~\ref{pro1S}, \ref{pro2S},
\ref{pro3S} (compare with Figure~\ref{f:prop11} where the results
are obtained by simply representing the regions described by these
Propositions). In Figure~\ref{f:sss}, contours are shown for
different values of orbital parameters as shown in the plot
legends.


\begin{figure}
\centering
\includegraphics[width=0.49\textwidth]{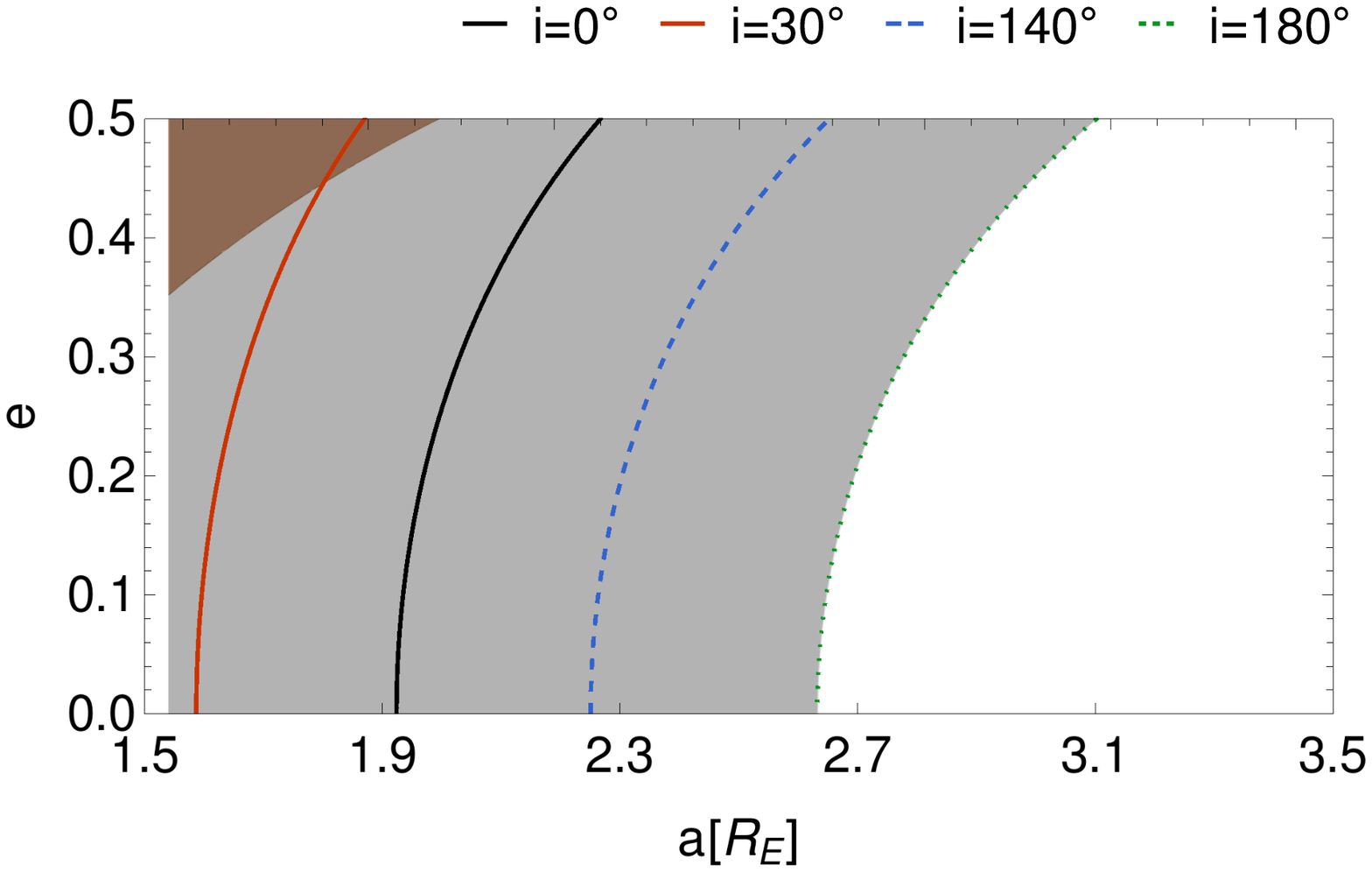}
\includegraphics[width=0.49\textwidth]{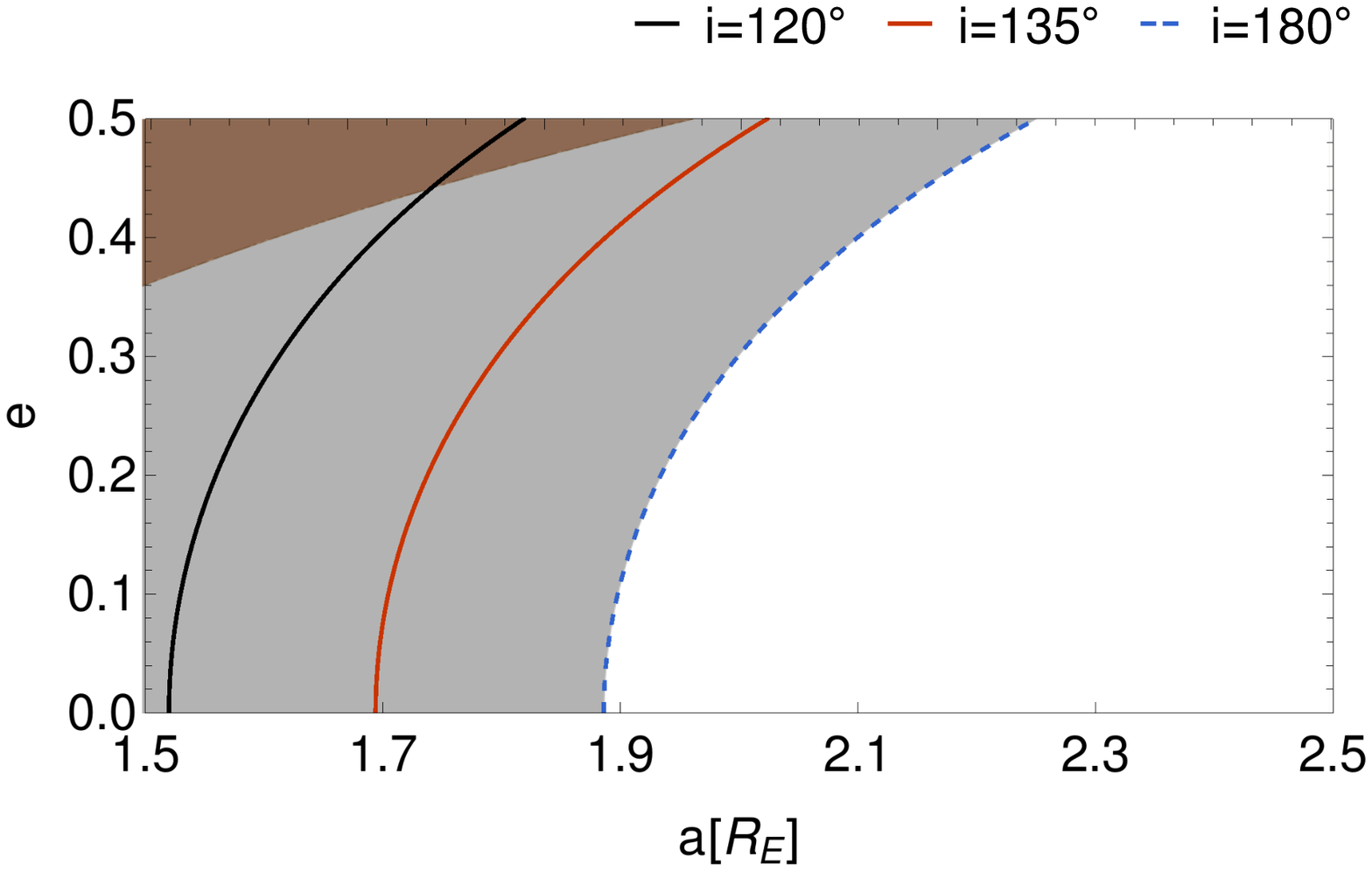}

\vspace{0.3cm}

\includegraphics[width=0.49\textwidth]{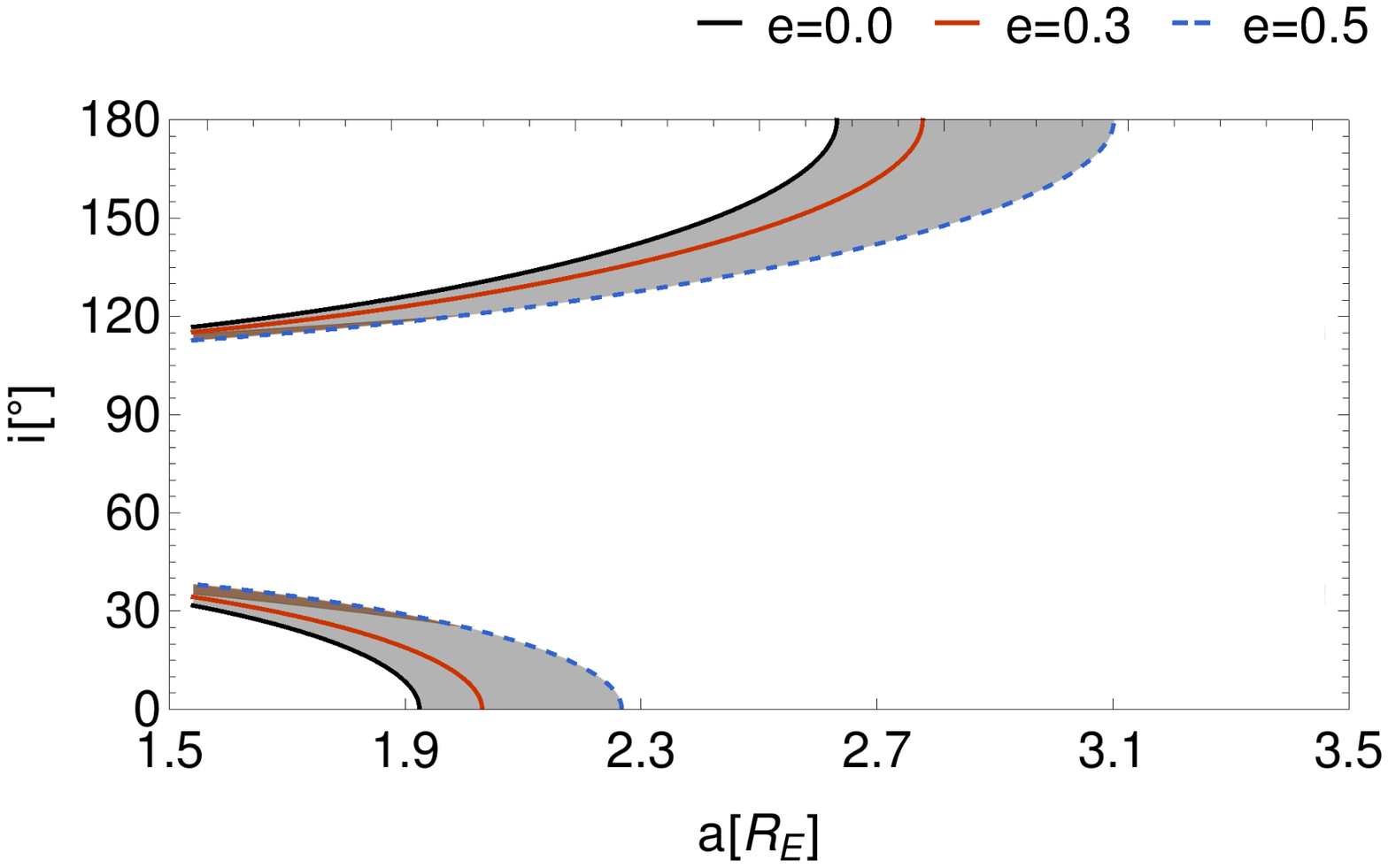}
\includegraphics[width=0.49\textwidth]{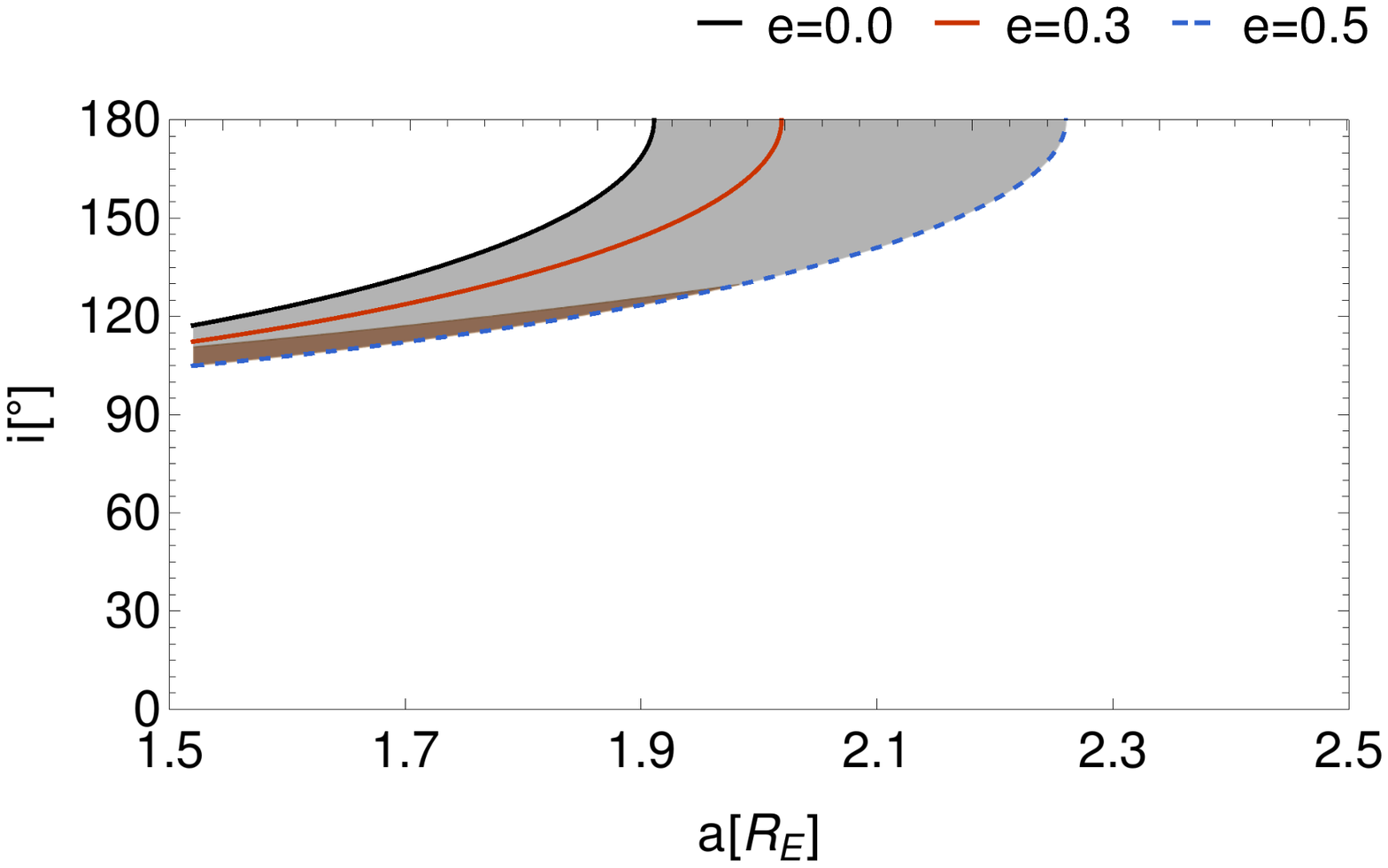}
\caption{Location of semi-secular resonances in $(a,e)$-space
(top) and $(a,i)$-space (bottom) for the $\alpha=2$, $\beta=2$,
$\gamma=2$ (left) and the $\alpha=0$, $\beta=2$, $\gamma=2$
    resonance (right). \red{The full solution space is shown in grey,
    }
    \red{brown regions define parameters that lead to
    collision with Earth.}} \label{f:sss}
\end{figure}


\begin{figure}
\centering
\includegraphics[width=0.45\textwidth]{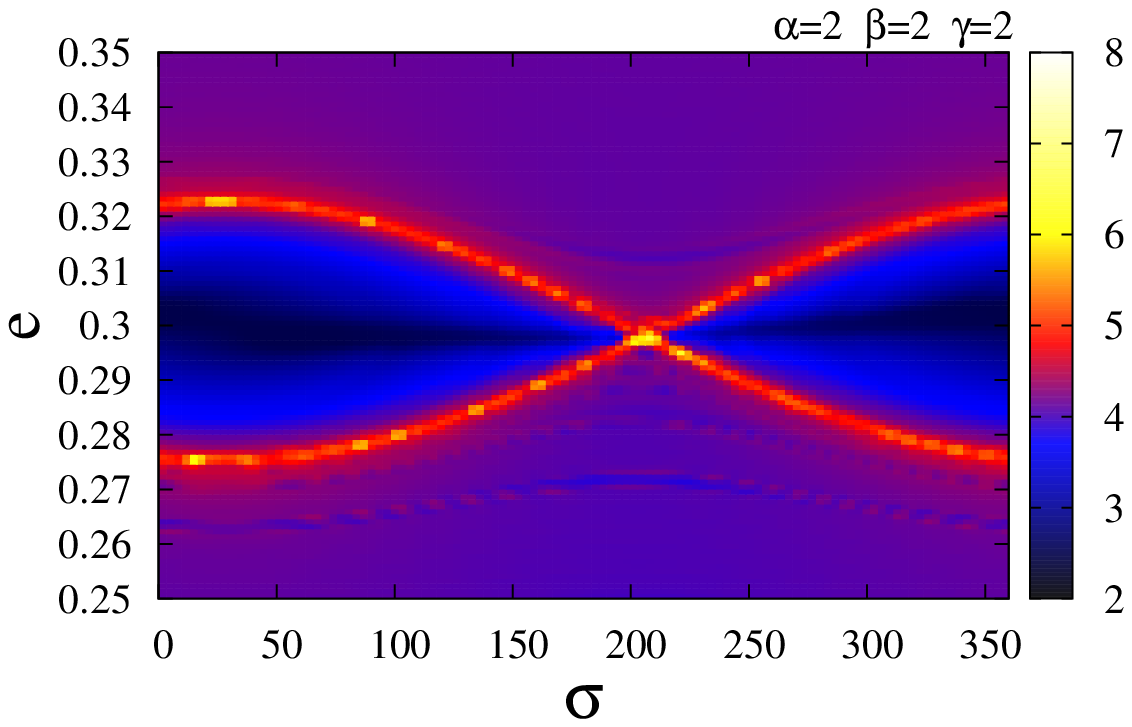}
\includegraphics[width=0.45\textwidth]{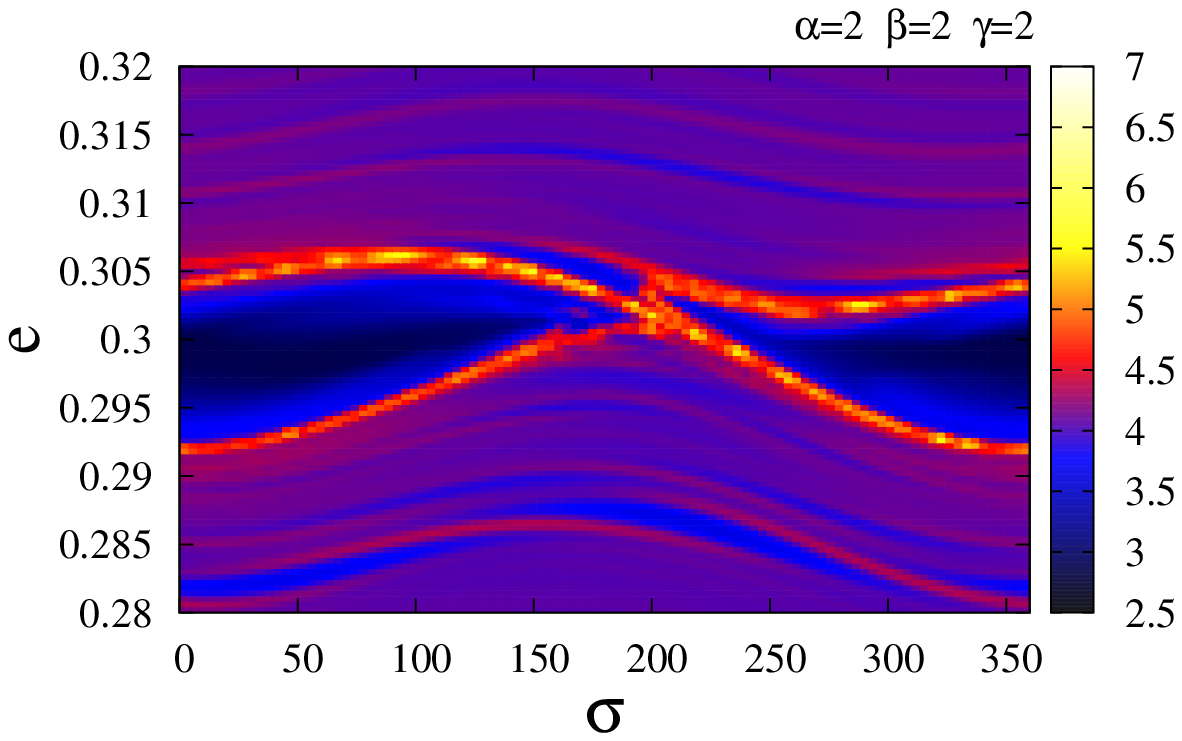}\\
\includegraphics[width=0.45\textwidth]{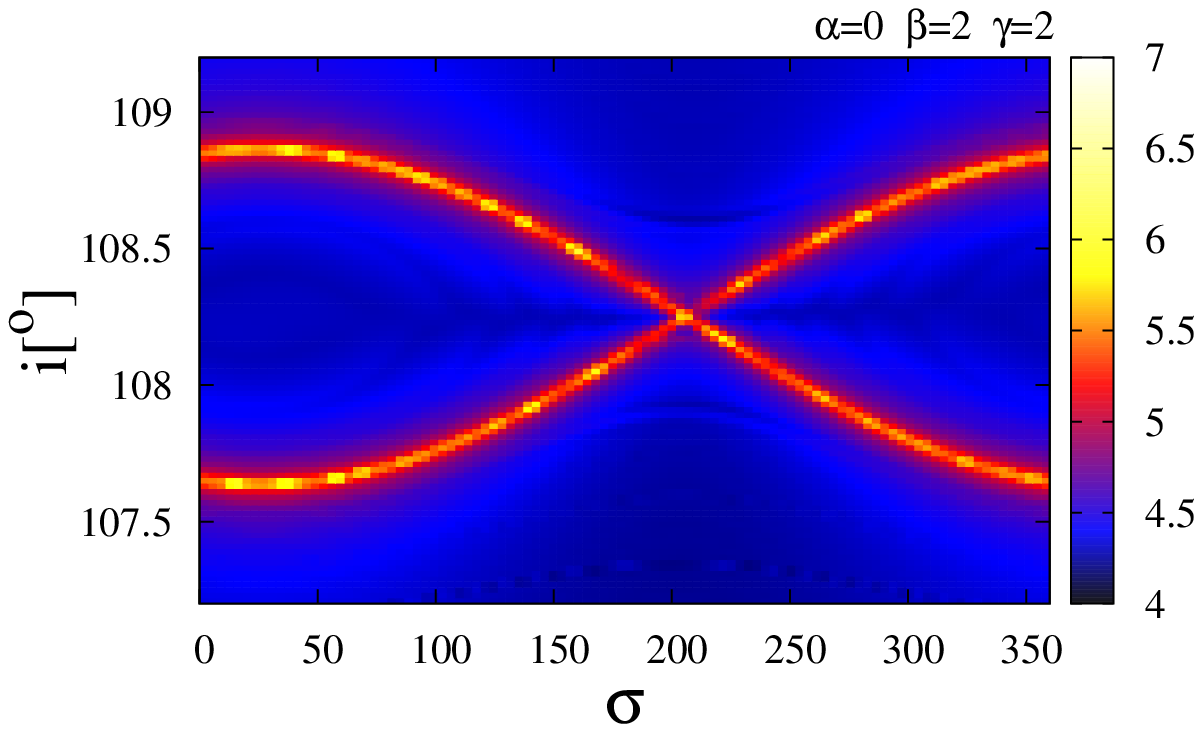}
\includegraphics[width=0.45\textwidth]{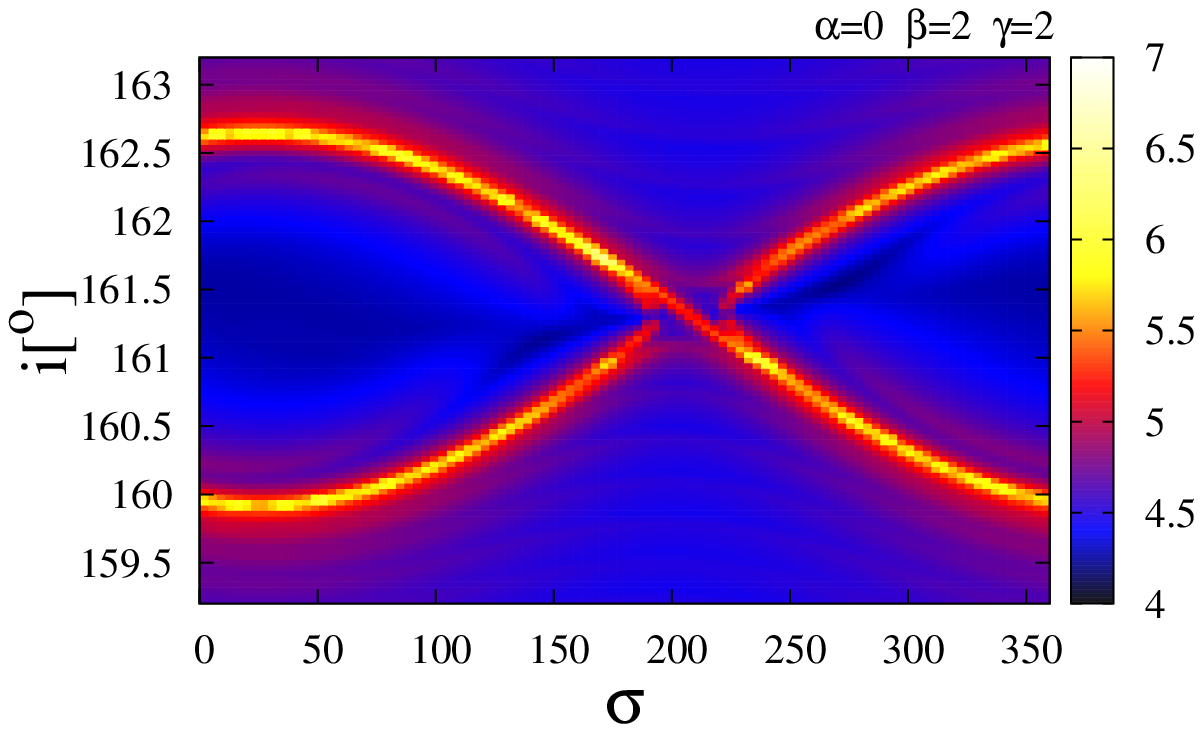}
\vspace{0.7cm}
\caption{Top panels: FLI for the evection resonance with
$\alpha=\beta=\gamma=2$  as a function of the resonant angle
$\sigma=2 \omega+2 \Omega -2 M_S$. The results are obtained for:
$a=1.91 R_E$, $i=19.04^o$, $\Omega=0^o$ and the value of $M_S$ at
the epoch J2000, that is $M_S=357.5256^o$ (left panel), and
respectively $a=2.3 R_E$, $i=135.95^o$, $\Omega=0^o$ and
$M_S=357.5256^o$ (right panel). Bottom panels: FLI for the
resonance with $\alpha=0$, $\beta=\gamma=2$ as a function of the
resonant angle $\sigma=2 \Omega -2 M_S$.  The results are obtained
for: $a=1.392 R_E$, $e=0.05$, $\omega=0^o$ and the value of $M_S$
at the epoch J2000, that is $M_S=357.5256^o$ (left panel), and
respectively $a=1.91 R_E$, $e=0.05$, $\omega=0^o$ and
$M_S=357.5256^o$ (right panel). } \label{f:cart_semi_sec}
\end{figure}

The predicted position of the semi-secular resonances is confirmed
by a cartographic study based on the computation of the FLIs. For
instance, by solving the equation~\equ{SSSR} for
$\alpha=\beta=\gamma=2$, we find the solutions $i=19.04^o$ and $i=
123.04^o$  for $e=0.3$ and $a=1.91 R_E$, and respectively the
unique solution $i=135,95^o$ for $e=0.3$ and $a=2.3 R_E$.
Figure~\ref{f:cart_semi_sec} shows the FLI values for $i=19.04^o$
and $a=1.91 R_E$ (left panel) and respectively $i=135.95^o$ and
$a=2.3 R_E$ (right panel). For these parameters, pendulum--like
plots are obtained; the separatrix divides the phase space into
regions where the resonant angle $\sigma=2 \omega+2 \Omega -2 M_S$
librates or circulates. As far as the resonance $\alpha=0$,
$\beta=\gamma=2$, is considered, we compute the FLIs as in
Figure~\ref{f:cart_semi_sec} and we infer a similar dynamical
behavior as for $\alpha=\beta=\gamma=2$ in the sense that the
phase--space is still similar to a pendulum. However, the
resonances with $\alpha=0$ lead to variations of the inclination,
while the eccentricity remains constant as it is shown in
Section~\ref{sec:around}. In fact, a more detailed study of the
dynamics of Solar semi-secular resonances is provided in
Section~\ref{sec:around}, where both types of commensurabilities,
with $\alpha=0$ and $\alpha \neq 0$, are analyzed for small as
well as moderate values of the eccentricity.

\subsection{Around the Solar semi-secular resonance}\label{sec:around}

\begin{figure}
\centering
\includegraphics[width=0.45\textwidth]{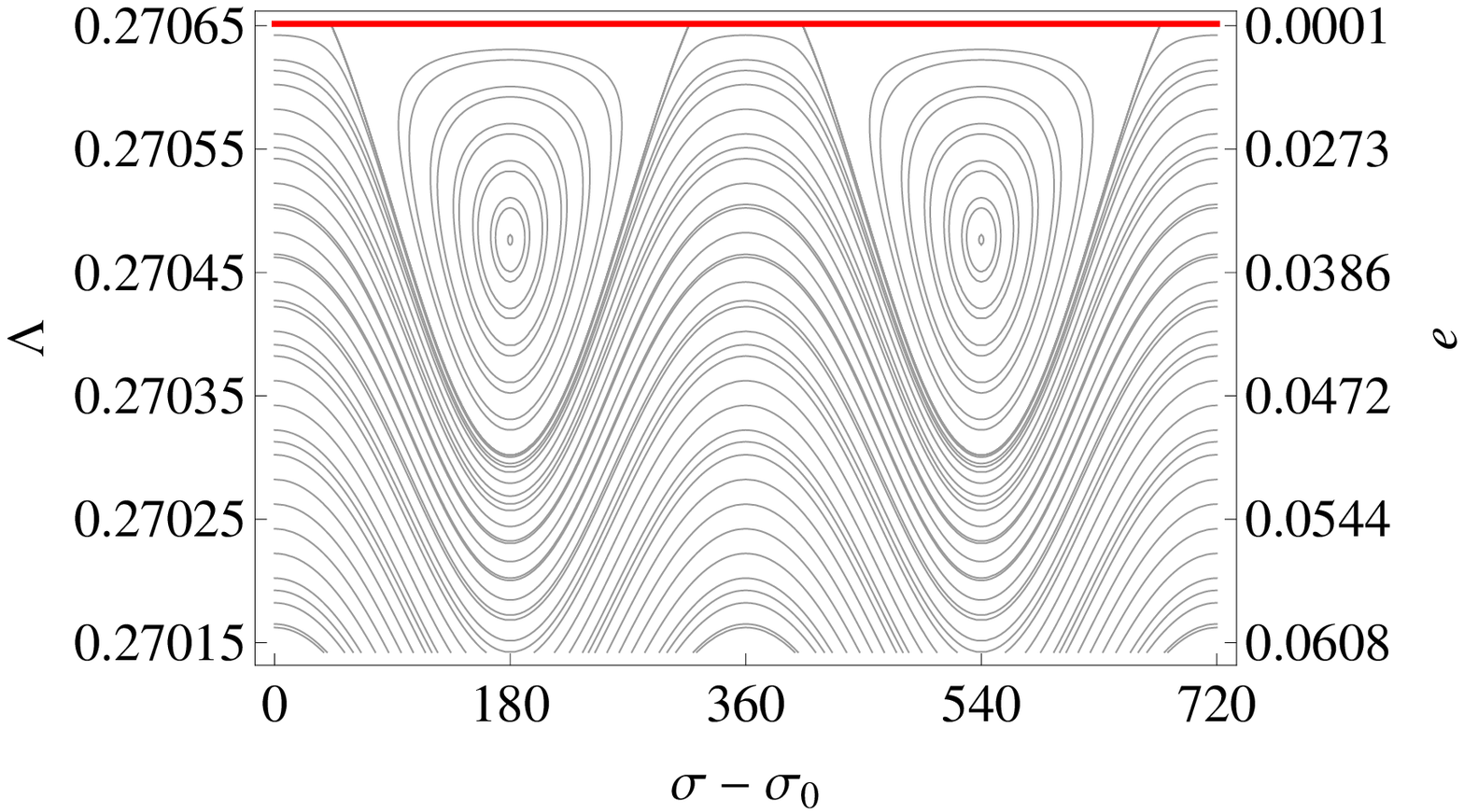}
\includegraphics[width=0.45\textwidth]{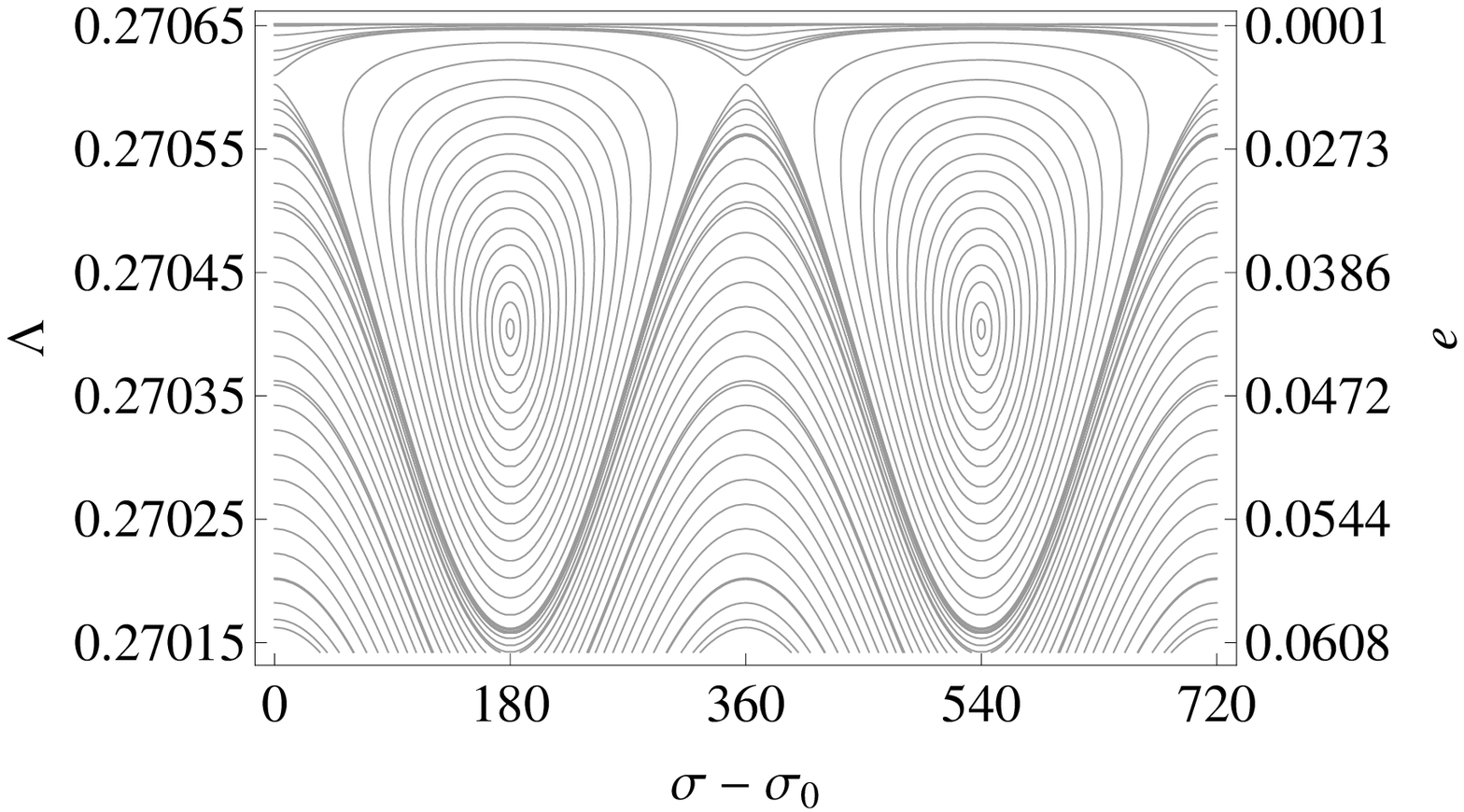}
\vspace{0.1cm} \caption{Phase portrait of the resonance in the
($\sigma-\sigma_0,\Lambda$)-plane, showing a bifurcation of equilibria for the
resonance with $\alpha=\beta=\gamma=2$: stable equilibrium points
at $\sigma-\sigma_0=180^o+360^o k$, $k \in \mathbb{Z}$ (left
plots), and respectively stable and unstable equilibria at
$\sigma-\sigma_0=180^o k$, $k \in \mathbb{Z}$ (right plot). The
plots are obtained for some fixed values of $L$ and $\Gamma$,
computed via \eqref{LGH_aei} and
\eqref{eq:canonical_transformation} by giving $a$, $e$ and $i$.
For the left plot $L$ and $\Gamma$ correspond to  $a=1.937 R_E$,
$e=0.1$, $i=1.51^o$, while the right plot is obtained for $a=1.937 R_E$,
$e=0.1$, $i=2.01^o$.
The values of $\Lambda$ are
expressed with respect to the following units of length and time:
the geostationary distance is unity (it amounts to 42\,164.1696 km)
and the period of Earth's rotation is equal to $2\pi$. On the left plot, all points on the
horizontal line $e=0$ (red line) are equilibrium points, that is the
conditions of the case $(a)_1$ are satisfied. On the
right part we provide the value of the eccentricity computed via
the relation $2 \Lambda = L \sqrt{1-e^2}$. }
\label{f:ssr_phase_portrait}
\end{figure}

In this Section we introduce a model that describes the dynamics
in a neighborhood of a Solar semi-secular resonance of the form
\beq{SRES} \alpha\dot\omega+\beta\dot\Omega-\gamma\dot M_S=0 \eeq
for $\alpha,\beta \in \mathbb{Z}$, $\gamma\in
\mathbb{Z}\backslash\{0\}$. Since for small and moderate
eccentricities, that is $e\leq 0.5$, the semi--secular resonances
occur in LEO and in vicinity of the LEO region (see
Figures~\ref{f:prop11}--\ref{f:sss}), from
\eqref{MomeagaOmega_var} it follows that $\omega$ and $\Omega$, as
well as $M_S$, $M_M$ and $M$ are fast angles in comparison with
the resonant angle $\sigma= \alpha \omega +\beta \Omega- \gamma
M_S$. Considering a specific semi-secular resonance, its dynamics
can be described by a reduced model, which is obtained by
averaging the full Hamiltonian over $M$, $M_S$, $M_M$, $\omega $
and $\Omega$, and retaining only the secular and resonant terms.
Thus, we deduce that the reduced Hamiltonian has the form:
$$
\mathcal{H}_{\alpha \beta \gamma} (G,H,\Phi, \omega, \Omega, M_S; L)=h_0(G,H; L)+ h_1(G,H; L) \cos (\alpha \omega +\beta \Omega - \gamma M_S -\sigma_0)+\dot{M}_S \Phi\,,
$$
where the \sl dummy \rm action $\Phi$, conjugated to $M_S$, was
introduced to make $\mathcal{H}_{\alpha \beta \gamma}$ autonomous,
$\sigma_0$ is a constant, $L$ is also a constant, while $h_0$ and
$h_1$ are known functions.

As noticed in Figure~\ref{f:cart_semi_sec}, from a dynamical
perspective the semi-secular resonances can be grouped in two
classes: resonances with $\alpha \neq 0$ and resonances with
$\alpha = 0$, respectively. Let us consider first the resonances
with $\alpha \neq 0$. We introduce the canonical change of
coordinates $(G,H,\Phi, \omega, \Omega, M_S) \longrightarrow
(\Lambda, \Gamma, Z, \sigma, \Omega, M_S)$, where
\begin{equation}\label{eq:canonical_transformation}
\Lambda=\frac{1}{\alpha} G\,, \qquad \Gamma=H-\frac{\beta}{\alpha} G\,, \qquad Z=\Phi+\frac{\gamma}{\alpha} G\,.
\end{equation}
Clearly, $\Omega$ and $M_S$ are ignorable variables in the new Hamiltonian, so $\Gamma$ and $Z$ are constants of motion.
After neglecting constant terms, we obtain the one degree-of-freedom Hamiltonian:
$$
\mathcal{K}_{\alpha \beta \gamma}(\Lambda, \sigma; \Gamma, L)=f_0(\Lambda; \Gamma, L)-\gamma \dot{M}_S \Lambda +f_1(\Lambda; \Gamma, L) \cos(\sigma-\sigma_0)\,,
$$
where $f_0(\Lambda; \Gamma, L) =h_0(\alpha \Lambda, \Gamma+\beta \Lambda; L)$ and $f_1(\Lambda; \Gamma, L) =h_1(\alpha \Lambda, \Gamma+\beta \Lambda; L)$.

The equilibria are obtained by solving the equations
\begin{equation}\label{eq:equilibria_semi_secular}
\frac{\partial f_0}{\partial \Lambda} -\gamma \dot{M}_S +\frac{\partial f_1}{\partial \Lambda} \cos (\sigma-\sigma_0)=0, \qquad f_1 \sin(\sigma-\sigma_0)=0\,.
\end{equation}

A closer look on the Solar disturbing function \eqref{RSUN}
reveals the fact that for all semi-secular resonances of this
class the function $f_1$ is of second order in the eccentricity.
Then, an analysis of the equations
\eqref{eq:equilibria_semi_secular} shows that we can distinguish
the following cases.

Case $(a)$:  if there exist $L_0$ and $\Gamma_0$ so that
$f_1(\Lambda; \Gamma_0, L_0)=0$ admits solutions, let $\Lambda_0$
be one of them (for instance, for the resonance with
$\alpha=\beta=\gamma=2$ the function $f_1$ vanishes either for
$e=0$ or $i=180^0$). In addition, one has either $(a)_1:$
$\frac{\partial f_1}{\partial \Lambda}(\Lambda_0; \Gamma_0,
L_0)=0$ and $\frac{\partial f_0}{\partial \Lambda}(\Lambda_0;
\Gamma_0, L_0) -\gamma \dot{M}_S=0$ or $(a)_2:$ $\frac{\partial
f_1}{\partial \Lambda}(\Lambda_0; \Gamma_0, L_0) \neq 0$ and
$\Bigl|\frac{\partial f_0}{\partial \Lambda}(\Lambda_0; \Gamma_0,
L_0) -\gamma \dot{M}_S \Bigr| \leq \Bigl|\frac{\partial
f_1}{\partial \Lambda}(\Lambda_0; \Gamma_0, L_0)\Bigr|$, then the
canonical equations admit some equilibrium points that are not
similar to the pendulum-like equilibria. Indeed, in the case
$(a)_1$ all points of the form $(\Lambda_0, \sigma)$ with $\sigma
\in \mathbb{R}$ are equilibrium points; we are dealing with a more
complex problem than the pendulum (compare with \cite{HL1983}).
These points define a line in the $(\Lambda, \sigma)$ plane. On
the other hand, the condition of the case $(a)_2$ assures that in
the domain $\sigma \in [0^o, 360^o)$ there exists at least one
value $\sigma_1$, or at most two values $\sigma_1$ and $\sigma_2$,
such that $(\Lambda_0, \sigma_1)$, respectively $(\Lambda_0,
\sigma_1)$ and $(\Lambda_0, \sigma_2)$, are equilibrium points.

Case $(b)$: if $\sigma=\sigma_0+ 180^o k$, where $k \in
\mathbb{Z}$, then the second equation of
\eqref{eq:equilibria_semi_secular} is identically satisfied.
Substituting $ \sigma $ in the first equation of
\eqref{eq:equilibria_semi_secular} we deduce the values of
$\Lambda$ corresponding to equilibria, provided they exist. In
this case, we can obtain either both stable and unstable
equilibrium points as in the case of a pendulum, let us label this
case by $(b)_1$, or just stable equilibria, denoted hereafter by
$(b)_2$. Indeed, after substituting $\sigma=\sigma_0+ 180^o k$ in
the first equation of \eqref{eq:equilibria_semi_secular}, we find
that for some values of $L$ and $\Gamma$ this equation has
solutions for any $k \in \mathbb{Z}$; in some cases the first
equation of \eqref{eq:equilibria_semi_secular} can be solved only
when $k$ has the form $k=2k_1+1$ with $k_1 \in \mathbb{Z}$.

These aspects are depicted in Figures~\ref{f:cart_semi_sec} and
\ref{f:ssr_phase_portrait} which address the semi-secular
resonance with $\alpha=\beta=\gamma=2$. More precisely, for large
enough eccentricities the phase space is similar to a pendulum as
it is shown in Figure~\ref{f:cart_semi_sec}. However, for small
eccentricities we may have different types of equilibria and
Figure~\ref{f:ssr_phase_portrait} exemplifies the cases described
above. Figure~\ref{f:ssr_phase_portrait} shows the
phase portrait of the resonance. On the left all points on the
horizontal line $e=0$ (red line) are equilibrium points, that is the
conditions of the case $(a)_1$ are satisfied. Indeed, since
$f_1=O(e^2)$, one has $f_1|_{e=0}=0$, $\frac{\partial
f_1}{\partial e}|_{e=0}=0$ and from $\frac{\partial f_1}{\partial
\Lambda}= \frac{\partial f_1}{\partial e} \frac{\partial
e}{\partial \Lambda}$ we deduce $\frac{\partial f_1}{\partial
\Lambda}\mid_{e=0}=0$. Moreover, the left plot of
Figure~\ref{f:ssr_phase_portrait} illustrates that the conditions
of the case $(b)_2$ are satisfied, thus providing stable
equilibria at $\sigma=\sigma_0+ 180^o k$ and $e=0.0359$ (or equivalently $\Lambda=0.270476$), but there
do not exist hyperbolic points similar to those of a pendulum.

The left panel of Figure~\ref{f:ssr_phase_portrait} is obtained
for some fixed values of $L$ and $\Gamma$, computed via
\eqref{LGH_aei} and \eqref{eq:canonical_transformation} by taking
$a=1.937 R_E$, $e=0.1$ and $i=1.51^o$. A small change of the
parameters $L$ and $\Gamma$ can lead to a {\it bifurcation of
equilibria}. Indeed, by considering a slightly different value of
$\Gamma$, which corresponds to $a=1.937 R_E$, $e=0.1$ and
$i=2.01^o$, we obtain a different phase portrait as it is
illustrated by the right panel of
Figure~\ref{f:ssr_phase_portrait}. The conditions of the case
$(a)$ are not satisfied. However, the conditions of the case
$(b)_1$ hold true; the phase space is topologically equivalent to
a pendulum, but the stable and unstable equilibria are located at
different values of $\Lambda$.

Let us consider now the class of resonances with $\alpha=0$. Clearly, it is necessary that $\beta\neq 0$ since otherwise the resonance condition \eqref{SRES} cannot be fulfilled. We proceed as above and we consider the canonical change of coordinates $(G,H,\Phi, \omega, \Omega, M_S) \longrightarrow (G, S, Y, \omega, \sigma, M_S)$, where
$$
S=\frac{1}{\beta} H\,, \qquad \qquad Y=\Phi+\frac{\gamma}{\alpha} H\,.
$$
Clearly, $\omega$ and $M_S$ are ignorable variables in the new
Hamiltonian, so $G$ and $Y$ are constants of motion. Implementing
the transformation and neglecting constant terms, we obtain the
one degree-of-freedom Hamiltonian:
$$
\mathcal{K}_{0 \beta \gamma}(S, \sigma; G, L)=g_0(S; G, L)-\gamma \dot{M}_S S +g_1(S; G, L) \cos(\sigma-\sigma_0)\,,
$$
where $g_0(S; G, L) =h_0(G, \beta S; L)$ and $g_1(S; G, L) =h_1(G, \beta S; L)$.

The equilibria are obtained by solving the equations
$$
\frac{\partial g_0}{\partial S} -\gamma \dot{M}_S +\frac{\partial g_1}{\partial S} \cos (\sigma-\sigma_0)=0, \qquad g_1 \sin(\sigma-\sigma_0)=0\,.
$$
The analysis of the equilibria can be made similarly to the study
performed for the other group of Solar semi--secular resonances. A
key point of the discussion is related to the equation $g_1(S; G,
L)=0$. We underline that there is a difference with respect to the
other group of semi-secular resonances, since the equation
$f_1(\Lambda; \Gamma, L)=0$ always has solutions, while the
equation $g_1(S; G, L)=0$ can have no solutions. Let
us consider the resonance with $\alpha=0$, $\beta=2$, $\gamma=2$.
From \eqref{RSUN} we deduce that the function $g_1$, expressed in
terms of the orbital elements, has the form $g_1=c\, a^2
(1+\frac{3}{2} e^2) \sin^2 i$, where $c$ is a constant. From
Propositions~\ref{pro2S} and \ref{pro3S} (see also the left plots
of Figure~\ref{f:prop11}), this resonance occurs for $i\geq 90^o$;
as a consequence, equation $g_1(S; G, L)=0$ cannot be satisfied, unless $i=180^o$
and we conclude that this resonance is topologically equivalent to
a pendulum as long as $i<180^o$ (see Figure~\ref{f:cart_semi_sec}, bottom panels). This
might not be the case for other resonances and each of them should
be studied individually.

\subsection{Lunar semi-secular resonances}\label{sec:lunarsemi}

Lunar semi-secular resonances are characterized by a relation of the form
\beq{LLLR}
\alpha\dot\omega+\beta\dot\Omega+\alpha_M\dot\omega_M+\beta_M\dot\Omega_M-\gamma\dot M_M=0\ ,
\eeq
where, in the quadrupolar approximations
for the Lunar and Solar expansions, we consider
$\alpha,\alpha_M=0,\pm 2$, $\beta,\beta_M=0,\pm 1,\pm 2$, $\gamma\not=0$.
We can set $\dot M_M=13.06$ deg/day, $\dot \omega_M=0.164$ deg/day,
$\dot \Omega_M=-0.053$ deg/day. Hence, the resonance \equ{LLLR} becomes:
\beqa{resL}
&&4.98\alpha({R_E\over a})^{7\over 2}(1-e^2)^{-2}(5\cos^2i-1)
-9.97\beta({R_E\over a})^{7\over 2}(1-e^2)^{-2}\cos i\nonumber\\
&+&(0.164\alpha_M-0.053\beta_M-13.06\gamma)=0\ .
\eeqa

\begin{figure}
\centering
\includegraphics[width=0.45\textwidth]{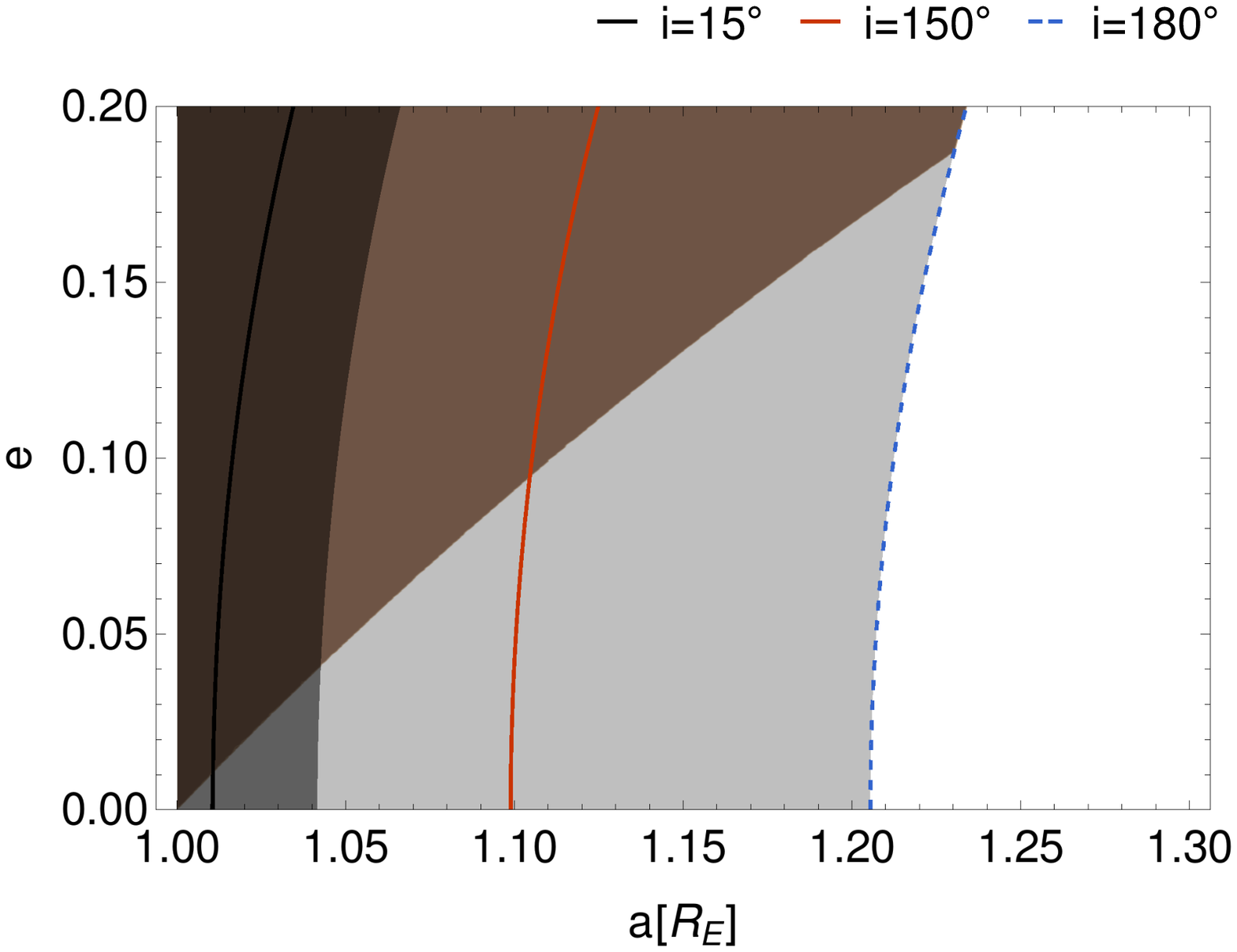}
\includegraphics[width=0.45\textwidth]{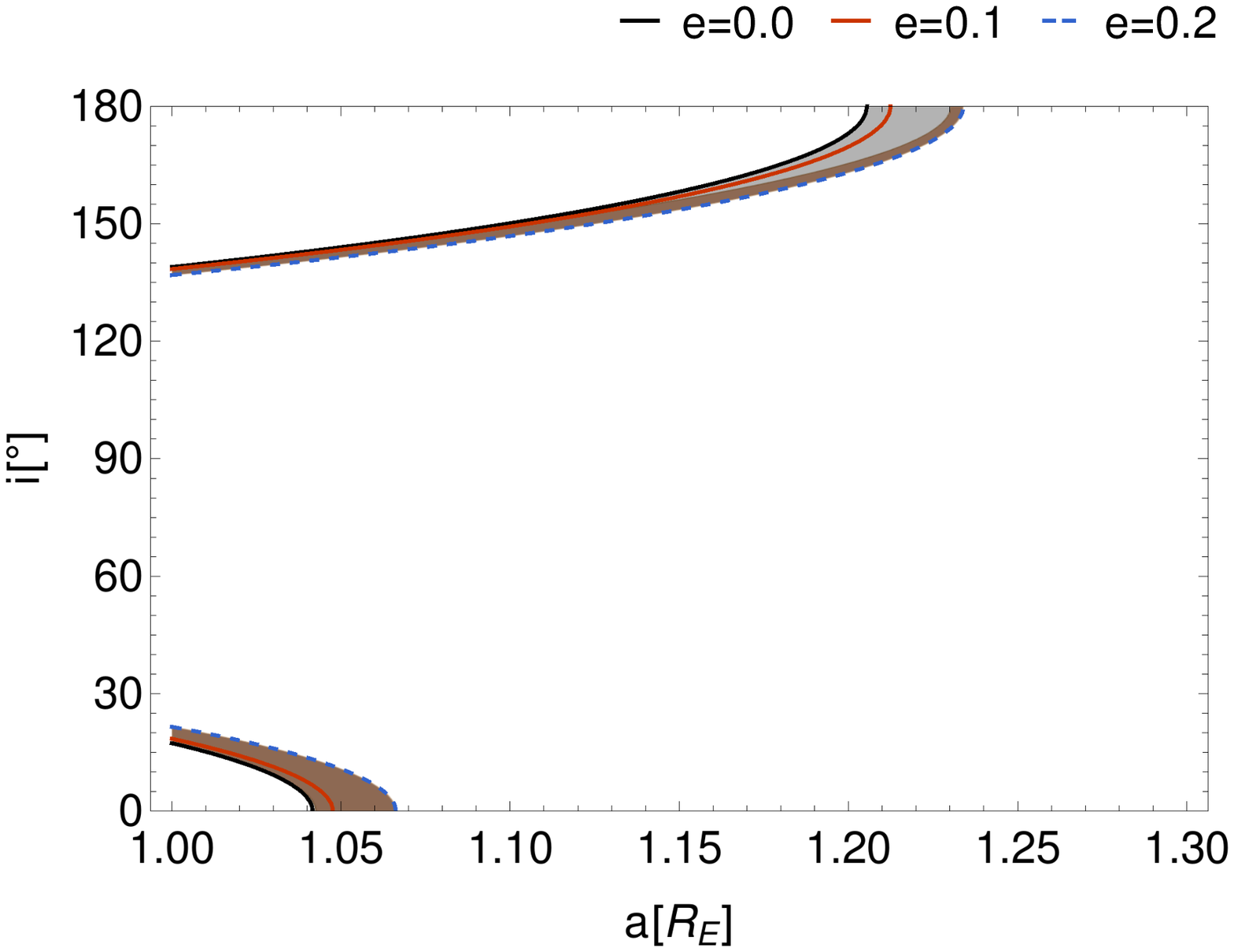}

    \vspace{0.3cm}

\includegraphics[width=0.45\textwidth]{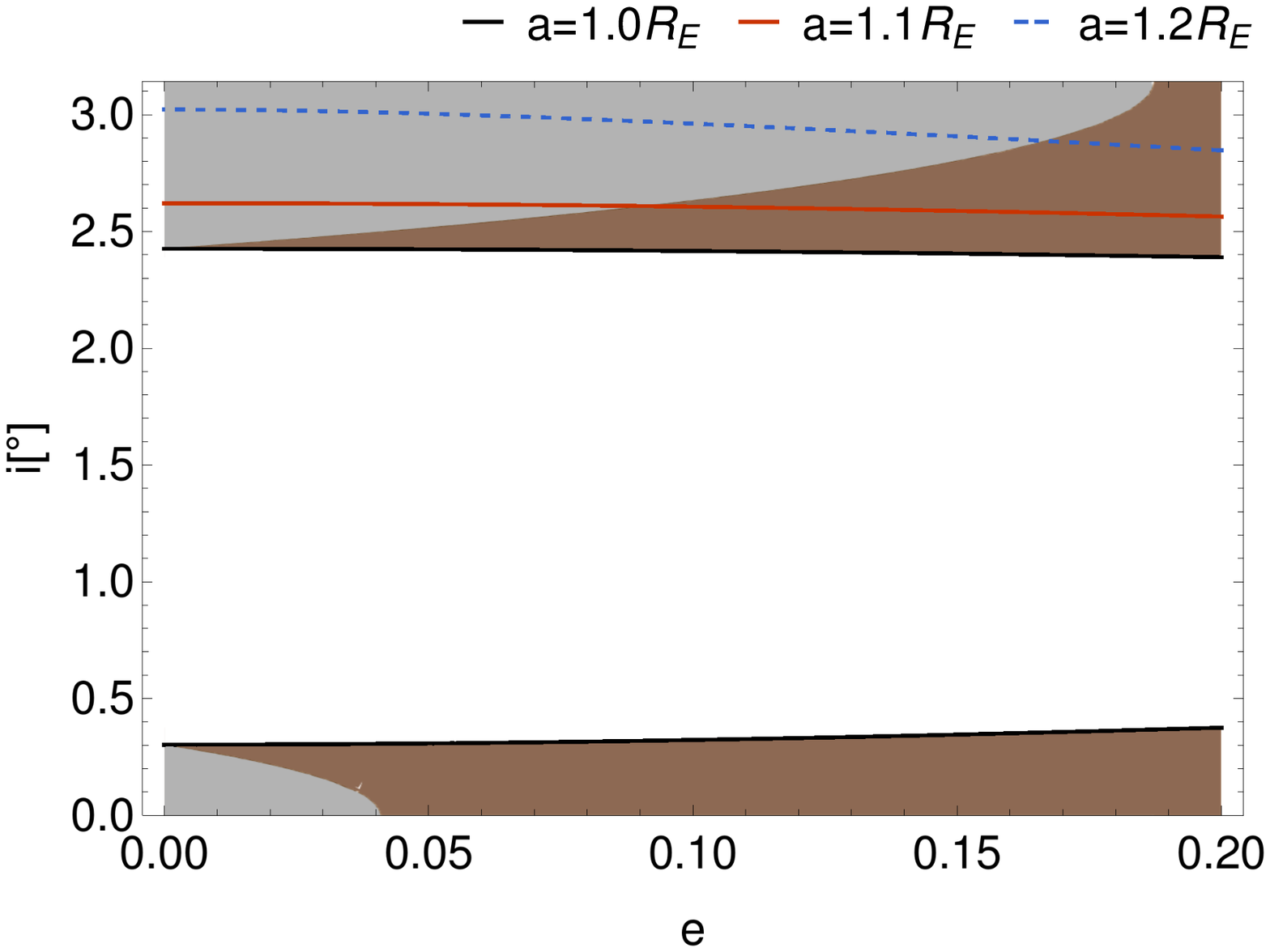}
\includegraphics[width=0.45\textwidth]{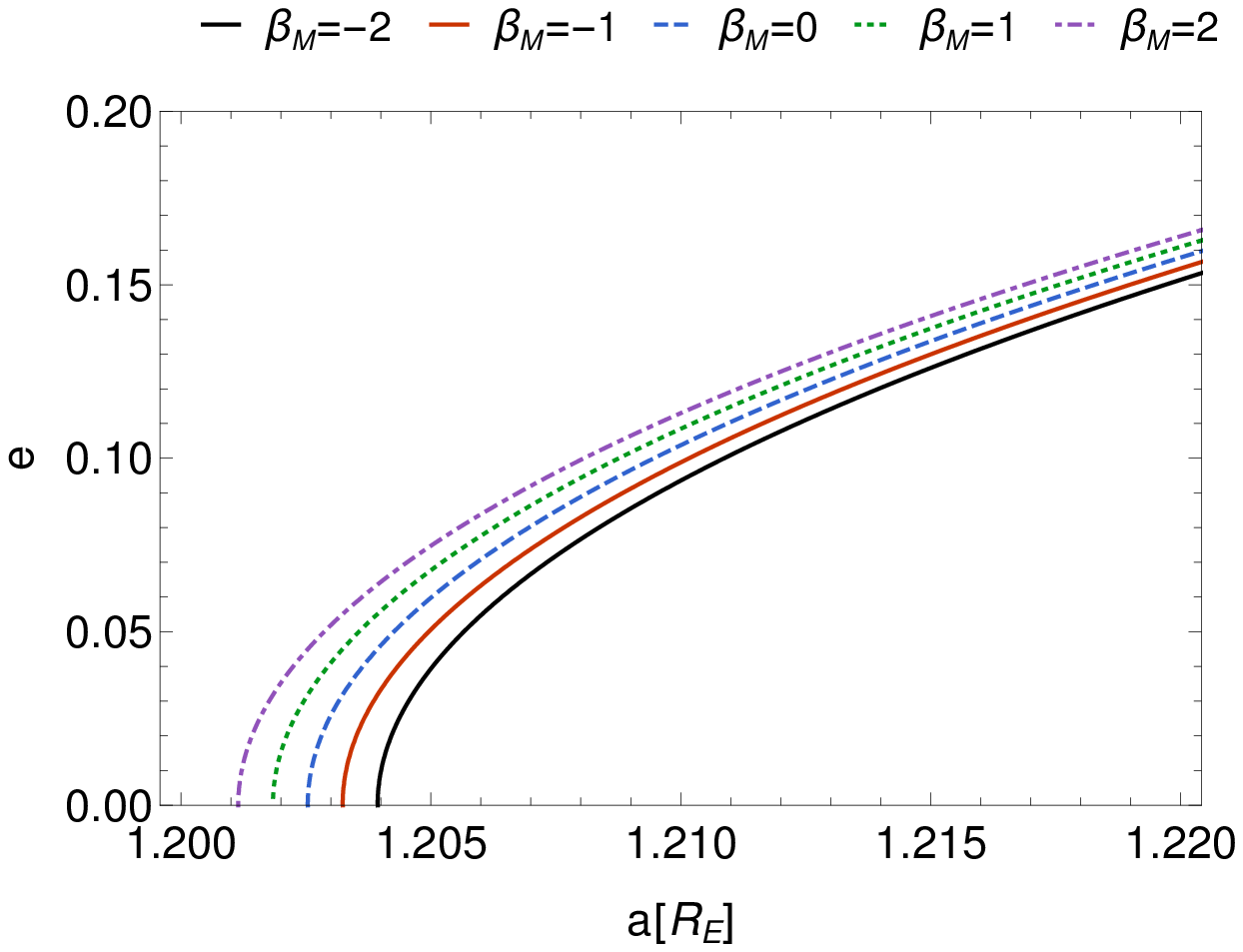}
\caption{Location of Lunar semi-secular resonances with $\alpha=2$, $\beta=1$, $\alpha_M=2$, $\beta_M=2$, $\gamma=2$ (top-left,
top-right, bottom-left) and $\alpha_M=0$, $\beta_M=-2,\dots,2$ (bottom-right).
    \red{The full solution space is shown in grey,} \red{brown regions define
    parameters that lead to collision with Earth. Upper} \red{left, dark: 2 solutions,
    light: 1 solution.}}
\label{f:lss}
\end{figure}

 The possible solutions for $\alpha=2$, $\beta=1$, $\gamma=2$, $\alpha_M=2$,
 $\beta_M=2$ and $\alpha_M=0$, $|\beta_M|\leq2$ are shown in
 Figure~\ref{f:lss}. In the upper left panel we show the $(a,e)$ plane for the
 case with $\alpha_M=2$, where \equ{LLLR} admits one solution (dark grey), two
 solutions (grey), and no solution (white), where the regions have been
 obtained using Proposition~\ref{pro:1L} (see just below). In the brown region
 the orbiting object collides with the Earth at perigee. Three contours are
 shown for fixed values of inclination in black, red, and dashed-blue
 respectively.  In the upper-right and lower-left panel we show the possible
 solutions of \equ{LLLR} for the same case ($\alpha_M=2$) in the $(a,i)$ and
 $(e,i)$ plane, respectively. Here, \red{grey} marks regions, where solutions
 can be found, solutions with one fixed orbital parameter (see legends \red{at
 the top}) are shown in color code. To demonstrate the effect of varying
 $\beta_M$ on the contours, we plot a magnification of the $(a,e)$ plane of the
 previous case, but with $\alpha_M=0$ and $\beta_M=-2,\dots,2$ in the
 bottom-right panel of Figure~\ref{f:lss}. A slight shift of the contours with
 increasing $\beta_M$ to smaller values of the semi-major axis $a$ is clearly
 visible.  As it can be inferred from Figure~\ref{f:lss}, Lunar semi-secular
 resonances are very close to the Earth.  Therefore, from a practical
 perspective these resonances are less relevant for the Earth-Moon system;
 resonant orbits are either colliding orbits or escape orbits, as effect of the
 atmospheric drag.

\begin{proposition}\label{pro:1L}
Within the quadrupolar approximation \equ{quad}, consider the resonance relation \equ{resL} with
given $\alpha$, $\beta$, $\alpha_M$, $\beta_M$, $\gamma$. For \red{given} values of $a$, $e$, let us introduce the quantities
$$
A=4.98({R_E\over a})^{7\over 2}(1-e^2)^{-2}\ ,
$$
and let $\Gamma$ be defined as
$$
\Gamma=-0.164\alpha_M+0.053\beta_M+13.06\gamma\ .
$$
Then, the conclusions of Proposition~\ref{pro1S} hold with $\gamma$ replaced by $\Gamma$ (which depends
on the free parameters $\alpha_M$, $\beta_M$, $\gamma$). Hence, for $\Delta_L$ defined as
$$
\Delta_L=\beta^2 A^2+5\alpha (\alpha A+\Gamma)A\ ,
$$
let us consider the following inequalities:
\beq{C1L}
\beta^2 A^2+5\alpha (\alpha A+\Gamma)A\geq 0
\eeq
\beq{C2L}
-(5\alpha+\beta)A\leq \sqrt{\Delta_L}\leq (5\alpha-\beta)A
\eeq
\beq{C3L}
-(5\alpha-\beta)A\leq \sqrt{\Delta_L}\leq (5\alpha+\beta)A .
\eeq
Then, we have the following cases:
$(i)$ if $\Delta_L<0$ or $|{{\beta A\pm\sqrt{\Delta_L}}\over {5\alpha A}}|>1$, then \equ{resL} admits
no solutions;
$(ii)$ if \equ{C1L} and just one of the conditions \equ{C2L} and \equ{C3L} are satisfied,
then \equ{resL} admits one solution;
$(iii)$ if \equ{C1L}, \equ{C2L}, \equ{C3L} are satisfied,
then \equ{resL} admits two solutions.
\end{proposition}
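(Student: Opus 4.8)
The plan is to reduce the Lunar resonance condition \equ{resL} to a quadratic equation in $\cos i$ that is \emph{formally identical} to the one appearing in the proof of Proposition~\ref{pro1S}, so that the entire case analysis carried out there applies verbatim after the single substitution $\gamma\mapsto\Gamma$. Thus the proof is not self-contained analysis but a reduction: I would make the algebraic identification explicit and then simply quote Proposition~\ref{pro1S}.

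Concretely, I would start by inserting the quadrupolar frequencies \equ{MomeagaOmega_var} into the general Lunar relation \equ{LLLR}, exactly as was done in the Solar case. Writing $A=4.98(R_E/a)^{7/2}(1-e^2)^{-2}$ and using the numerical identity $9.97\simeq 2\cdot 4.98$, the terms $\alpha\dot\omega$ and $\beta\dot\Omega$ become $\alpha A(5\cos^2 i-1)$ and $-2\beta A\cos i$, respectively. The three remaining terms $\alpha_M\dot\omega_M+\beta_M\dot\Omega_M-\gamma\dot M_M$ involve only the fixed Lunar rates $\dot\omega_M=0.164$, $\dot\Omega_M=-0.053$, $\dot M_M=13.06$ deg/day, and therefore contribute a single constant, namely the block $0.164\alpha_M-0.053\beta_M-13.06\gamma$, which is precisely $-\Gamma$ for $\Gamma=-0.164\alpha_M+0.053\beta_M+13.06\gamma$ as defined in the statement. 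Collecting these pieces, \equ{resL} takes the form
$$
5\alpha A\cos^2 i-2\beta A\cos i-(\alpha A+\Gamma)=0\ ,
$$
which is the equation obtained in the proof of Proposition~\ref{pro1S} with $\gamma$ (there equal to $\gamma\dot M_S$ with $\dot M_S=1$) replaced by $\Gamma$.

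Since every conclusion of Proposition~\ref{pro1S} depends only on the coefficients of this quadratic in $\cos i$, the trichotomy transfers unchanged. For $\alpha=0$ one obtains at most one admissible inclination, under $|\Gamma/(2\beta A)|\le 1$; for $\alpha\ne 0$ the roots are $\cos i=(\beta A\pm\sqrt{\Delta_L})/(5\alpha A)$ with $\Delta_L=\beta^2A^2+5\alpha(\alpha A+\Gamma)A$, and imposing reality $\Delta_L\ge 0$ together with $|\cos i|\le 1$ reproduces the cases $(i)$, $(ii)$, $(iii)$, the admissibility of the two roots being encoded separately by \equ{C2L} and \equ{C3L}. The only point requiring care—rather than a genuine obstacle—is the bookkeeping that collapses all Lunar rate contributions into the single additive constant $\Gamma$, together with the sign convention that turns $0.164\alpha_M-0.053\beta_M-13.06\gamma$ into $-\Gamma$; once this identification is verified there is no analytic content beyond invoking Proposition~\ref{pro1S}. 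I would note in closing that, because $\Gamma$ now depends on the three free integers $\alpha_M,\beta_M,\gamma$ rather than on a single resonance index, the admissible inclinations shift with $\beta_M$, in agreement with the contour behavior displayed in Figure~\ref{f:lss}.
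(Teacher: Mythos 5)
Your proposal is correct and follows exactly the paper's (implicit) argument: the paper proves Proposition~\ref{pro:1L} precisely by absorbing the constant Lunar rates $0.164\alpha_M-0.053\beta_M-13.06\gamma$ into the quadratic $5\alpha A\cos^2 i-2\beta A\cos i-(\alpha A+\Gamma)=0$ from the proof of Proposition~\ref{pro1S} and then transferring its case analysis verbatim under $\gamma\mapsto\Gamma$. Your explicit verification of the sign bookkeeping (including $\dot M_S=1$ deg/day in the Solar case and $9.97\simeq 2\cdot 4.98$) is exactly the content the paper leaves to the reader.
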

In a similar way, one can prove the following result.
\begin{proposition}\label{pro:2L}
Within the quadrupolar approximation \equ{quad}, the resonance relation \equ{resL} with
given $\alpha$, $\beta$, $\gamma$ admits solutions for
$$
a=R_E ({A\over\Gamma})^{2\over 7}
$$
with
$$
A=4.98\alpha(1-e^2)^{-2}(5\cos^2i-1)-9.97\beta(1-e^2)^{-2}\cos i\ ,\qquad
\Gamma=-0.164\alpha_M+0.053\beta_M+13.06\gamma\ ,
$$
provided
$$
4.98\alpha(1-e^2)^{-2}(5\cos^2i-1)-9.97\beta(1-e^2)^{-2}\cos i>\Gamma\ .
$$
\end{proposition}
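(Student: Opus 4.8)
The plan is to collapse the resonance relation \equ{resL} to a single algebraic equation in the semimajor axis and then invert it explicitly, in complete parallel with the Solar computation of Proposition~\ref{pro2S}. First I would note that, with $e$ and $i$ held fixed, the two inclination-dependent terms of \equ{resL} carry the common factor $(R_E/a)^{7/2}$ and combine into $A\,(R_E/a)^{7/2}$, where
$$A=4.98\alpha(1-e^2)^{-2}(5\cos^2i-1)-9.97\beta(1-e^2)^{-2}\cos i$$
is a constant. The remaining altitude-independent term of \equ{resL} is $0.164\alpha_M-0.053\beta_M-13.06\gamma$, which is nothing but $-\Gamma$ for $\Gamma=-0.164\alpha_M+0.053\beta_M+13.06\gamma$. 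Thus \equ{resL} is equivalent to $A\,(R_E/a)^{7/2}=\Gamma$.

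Next I would solve for $a$. The equation is linear in the single unknown $(R_E/a)^{7/2}$, so whenever $A/\Gamma>0$ one may extract the real positive $2/7$ power and obtain $R_E/a=(\Gamma/A)^{2/7}$, that is
$$a=R_E\Bigl(\frac{A}{\Gamma}\Bigr)^{2/7}\ ,$$
which is the claimed formula. Here $\Gamma$ absorbs the constant Lunar rates $\dot\omega_M$, $\dot\Omega_M$, $\dot M_M$ through the free integers $\alpha_M,\beta_M,\gamma$, exactly as $\gamma$ did in Proposition~\ref{pro2S}.

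The one step I would treat with care is the existence proviso, since it is the only place where anything more than bookkeeping enters. A solution is physically admissible only if the orbit lies above the Earth's surface, i.e. $a>R_E$; by the displayed formula this is equivalent to $(A/\Gamma)^{2/7}>1$, hence to $A/\Gamma>1$. Under the natural sign assumption $\Gamma>0$ --- which holds in the regime considered, where the term $13.06\gamma$ with $\gamma\neq0$ dominates $\Gamma$ --- this reduces to $A>\Gamma$, namely the stated condition
$$4.98\alpha(1-e^2)^{-2}(5\cos^2i-1)-9.97\beta(1-e^2)^{-2}\cos i>\Gamma\ .$$
I expect no genuine obstacle: as the authors remark for Propositions~\ref{ref:pro1}--\ref{ref:pro3} and \ref{pro2S}, the argument is elementary, amounting to substituting the quadrupolar rates \equ{MomeagaOmega_var} into the resonance condition \equ{LLLR} and isolating $a$, with $A$ and $\Gamma$ serving only to package the eccentricity--inclination dependence and the constant Lunar frequencies, respectively.
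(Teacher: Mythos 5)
Your proposal is correct and takes essentially the same route as the paper, which offers no separate proof of Proposition~\ref{pro:2L} beyond the remark that it follows ``in a similar way'' to Propositions~\ref{pro1S} and \ref{pro2S} --- namely, substituting the quadrupolar rates \equ{MomeagaOmega_var} and the constant Lunar frequencies into \equ{LLLR} to get \equ{resL}, packaging the inclination--eccentricity dependence into $A$ and the Lunar constants into $\Gamma$, and solving the resulting linear equation in $(R_E/a)^{7/2}$. Your reading of the proviso $A>\Gamma$ as the physical admissibility condition $a>R_E$, under the sign convention $\Gamma>0$ (equivalently $\gamma>0$, which the paper also implicitly assumes in the analogous proviso of Proposition~\ref{pro2S}), matches the intended argument.
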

We conclude this section with the following Proposition.
\begin{proposition}\label{pro:3L}
Within the quadrupolar approximation \equ{quad}, the resonance relation \equ{resL} with
given $\alpha$, $\beta$, $\gamma$ admits solutions for
$$
e=\sqrt{1-{\sqrt{A\over\Gamma}}}
$$
with
$$
A=(4.98\alpha(5\cos^2 i-1)-9.97\beta\cos i)({R_E\over a})^{7\over 2}\ ,\qquad
\Gamma=-0.164\alpha_M+0.053\beta_M+13.06\gamma\ ,
$$
provided
$$
(4.98\alpha(5\cos^2i-1)-9.97\beta\cos i)\ ({R_E\over a})^{7\over 2}<\Gamma\ .
$$
\end{proposition}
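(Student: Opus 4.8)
The plan is to mirror the proof of Proposition~\ref{pro3S}, since the Lunar resonance relation \equ{resL} has exactly the same algebraic structure as the Solar relation \equ{SSSR}, with the single constant $\gamma$ of the Solar case now replaced by the combination $\Gamma=-0.164\alpha_M+0.053\beta_M+13.06\gamma$ that collects the (nearly constant) Lunar rates $\dot\omega_M$ and $\dot\Omega_M$ together with $\dot M_M$. First I would fix $a$ and $i$ and regard \equ{resL} as a single scalar equation in the unknown eccentricity $e$.

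The key step is to factor out the common eccentricity dependence of the two $a,i$-terms in \equ{resL}. Both of them carry the factor $(1-e^2)^{-2}({R_E/a})^{7/2}$; collecting them yields $A\,(1-e^2)^{-2}$, where $A=(4.98\alpha(5\cos^2 i-1)-9.97\beta\cos i)({R_E/a})^{7/2}$ is precisely the quantity named in the statement. The remaining three terms of \equ{resL} are independent of $e$ and sum to $0.164\alpha_M-0.053\beta_M-13.06\gamma=-\Gamma$. Hence \equ{resL} collapses to the one-line equation
$$
A\,(1-e^2)^{-2}=\Gamma\ .
$$

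Solving is then elementary: rewriting as $(1-e^2)^2=A/\Gamma$ and using $1-e^2>0$ to select the positive root gives $1-e^2=\sqrt{A/\Gamma}$, whence $e=\sqrt{1-\sqrt{A/\Gamma}}$, the claimed formula. The only delicate point — and the nearest thing to an obstacle — is admissibility, namely ensuring $0\le e<1$. Reality of the inner square root requires $A/\Gamma>0$, while the outer square root together with $e^2<1$ requires $\sqrt{A/\Gamma}\le 1$, i.e. $0<A\le\Gamma$ when $\Gamma>0$. This is exactly what the stated proviso $(4.98\alpha(5\cos^2 i-1)-9.97\beta\cos i)({R_E/a})^{7/2}<\Gamma$ guarantees, the strict inequality also excluding the degenerate boundary case $e=0$. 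I would close by remarking that, because $\Gamma$ depends on the free integers $\alpha_M$, $\beta_M$, $\gamma$, this single inequality decides whether a prescribed Lunar semi-secular commensurability is realizable at the given $(a,i)$, in direct analogy with the Solar result of Proposition~\ref{pro3S}.
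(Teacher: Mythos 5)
Your proposal is correct and takes essentially the same route as the paper: the paper gives no separate proof of Proposition~\ref{pro:3L}, stating only that it follows ``in a similar way'' to the Solar case (Proposition~\ref{pro3S}), namely by substituting the quadrupolar rates \equ{MomeagaOmega_var} into \equ{resL}, collapsing the eccentricity dependence to $A\,(1-e^2)^{-2}=\Gamma$, and solving for $e$ --- which is exactly your computation, with $\gamma$ of the Solar case replaced by $\Gamma$. The only caveat is that your claim that the proviso $A<\Gamma$ ``exactly'' guarantees admissibility inherits a looseness already present in the paper's own statement (the inequality alone does not force $A>0$, which the inner square root requires), so this is a faithful reproduction rather than a gap of your own.
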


\section{A characterization of secular resonances}\label{sec:secular}
Secular resonances depend on the rates of variation of slowly varying quantities, typically
the argument of perigee and the longitude of the ascending node. We distinguish again
between Solar and Lunar resonances which are, respectively, described in Sections~\ref{sec:solarsec}
and \ref{sec:lunarsec}.
\red{
In Section~\ref{sec:solarsec} we obtain that the commensurability relation describing the Solar
secular resonances turns out to be a condition on the inclination in terms of the coefficients
entering the resonance relation. In Section~\ref{sec:lunarsec} we discuss the Lunar secular resonances,
which can be treated as in Propositions~\ref{pro:1L}, \ref{pro:2L}, \ref{pro:3L} describing the Lunar
semi-secular resonances.
}

\subsection{Solar secular resonances}\label{sec:solarsec}
Given that the Solar rates of variation of the argument of perigee
and the longitude of the ascending node are zero, the Solar
secular resonances occur whenever \beq{SSRES}
\alpha\dot\omega+\beta\dot\Omega=0 \eeq for some $\alpha$,
$\beta\in \mathbb{Z}$. It might look strange to refer to \equ{SSRES} as
\sl Solar \rm secular resonances, since the elements of the Sun do
not occur at all in this expression, due to the fact that we have
set to zero both $\dot\omega_S$ and $\dot\Omega_S$. However, we
will keep this terminology for consistency with the previous and
following sections.

The relation \equ{SSRES} can be written as
$$
({R_E\over a})^{7\over 2}(1-e^2)^{-2}\ [4.98\alpha(5\cos^2i-1)-9.97\beta\cos i]=0\ .
$$
The above expression depends just on the inclination and it is
satisfied whenever we have
$$
\cos i={{9.97\beta\pm\sqrt{99.4009\beta^2+496.506\alpha^2}}\over {49.8\alpha}}\ .
$$
Since $99.4009\beta^2+496.506\alpha^2\geq 0$, we have two real
solutions, which have a physical meaning provided that $|\cos
i|\leq 1$, which gives a condition on $\alpha$, $\beta$ to have
Solar secular resonances.

For a detailed investigation of Solar secular resonances depending
just on the inclination, we refer the reader to \cite{CGPSIADS}
and \cite{HughesI}. Here, it is worthwhile to mention that the
resonances \equ{SSRES} include the cases $\dot\omega=0$,
corresponding to the critical inclinations ($i=63.4^o$,
$i=116.4^o$), and $\dot\Omega=0$, corresponding to polar orbits.

\subsection{Lunar secular resonances}\label{sec:lunarsec}
The Lunar secular resonances occur whenever
$$
\alpha\dot\omega+\beta\dot\Omega+0.164\alpha_M-0.053\beta_M=0
$$
for some $\alpha$, $\beta, \alpha_M, \beta_M\in \mathbb{Z}$; this relation can be written as
$$
({R_E\over a})^{7\over 2}(1-e^2)^{-2}\ [4.98\alpha(5\cos^2i-1)-9.97\beta\cos i]=0.053\beta_M-0.164\alpha_M\ .
$$
The same result of Proposition~\ref{pro:1L} holds with $\gamma=0$ to find the inclination, having fixed $a$, $e$,
Proposition~\ref{pro:2L} with $\gamma=0$ to determine the semimajor axis,
having fixed $e$, $i$, Proposition~\ref{pro:3L} with $\gamma=0$ to find the eccentricity, having fixed $a$, $i$.

An example of the location of the Lunar secular resonances is
given in Figure~\ref{f:lunsec} as a function of the orbital
elements $a$, $e$, $i$. In the example we choose $\alpha=2$,
$\beta=1$, $\alpha_M=0$ and $\beta_M=\pm1$. In the upper plots
dark grey marks regions where two distinct solutions of above
equations are possible (in the sense of Proposition~\ref{pro:1L}).
For our parameters and the case $\beta_M=-1$ this region can be
found in the space $(a,e)$ for values up to $a\simeq4.5-5.0R_E$.
For the case $\beta=1$ this region of two distinct solutions is
increased up to values $a\simeq6.0-7.0R_E$. Moreover, an
additional region in the space $(a,e)$ is present, starting from
$a\simeq6.0$ where only one solution to the condition for Lunar
secular resonances exists (shown in light grey in the upper right
plot of Figure~\ref{f:lunsec}). The reason for the different
topologies in the solution space becomes clear when looking at the
regions of possible solutions in the space $(a,i)$ (shown at the
bottom of Figure~\ref{f:lunsec}). Two symmetric solutions exist
for the case $\beta=-1$ up to a certain value in $a$, while in
case of $\beta=1$ the two distinct solution branches are
asymmetric. At the moment, when the lower branch stops at some
value in $a$ (that depends on the specific value chosen for
eccentricity $e$) the upper branch still extends farther to larger
values in $a$, which yields the region in solution space with one
solution, according to Proposition~\ref{pro:1L}.


\begin{figure}
\centering
\includegraphics[width=0.45\textwidth]{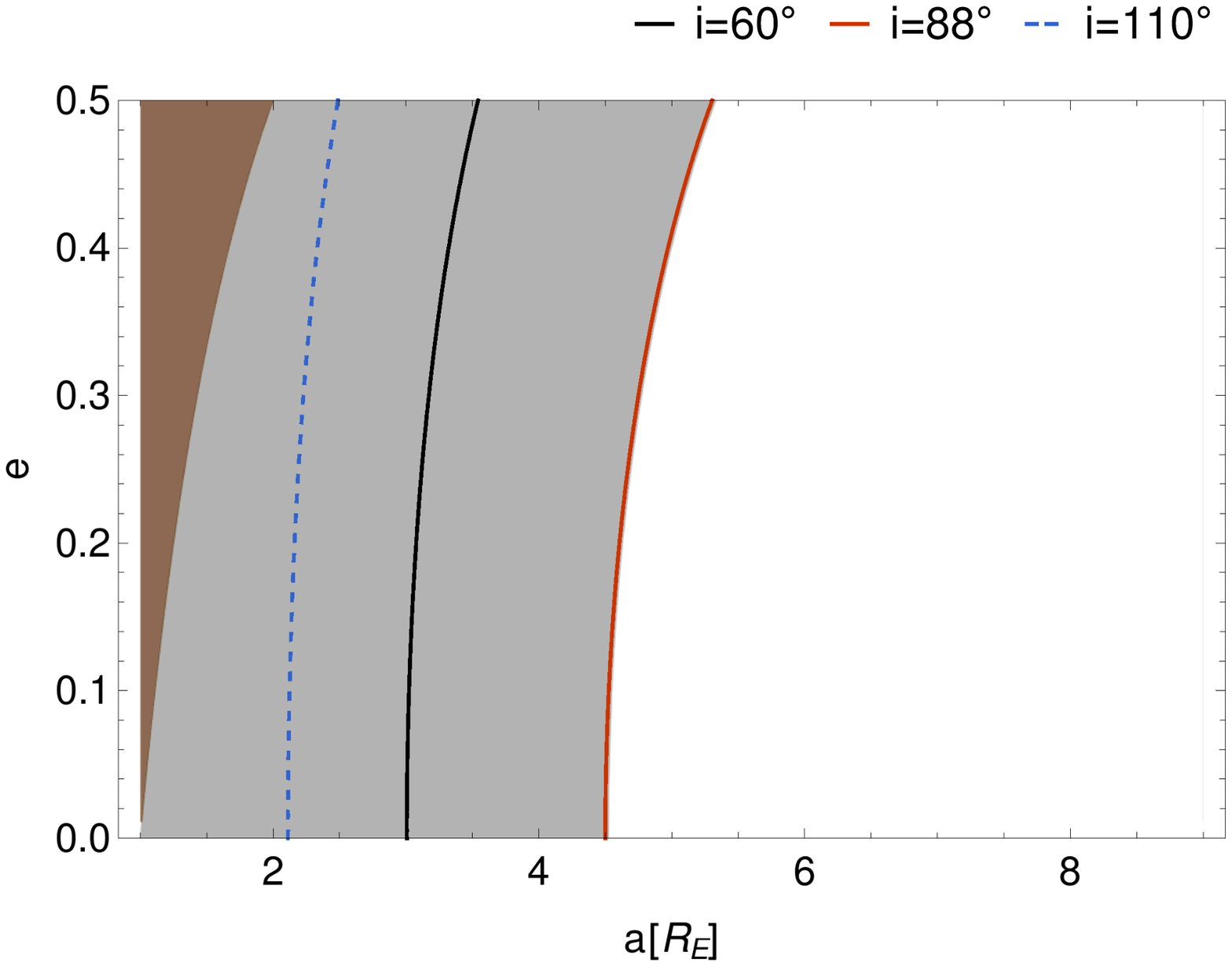}
\includegraphics[width=0.45\textwidth]{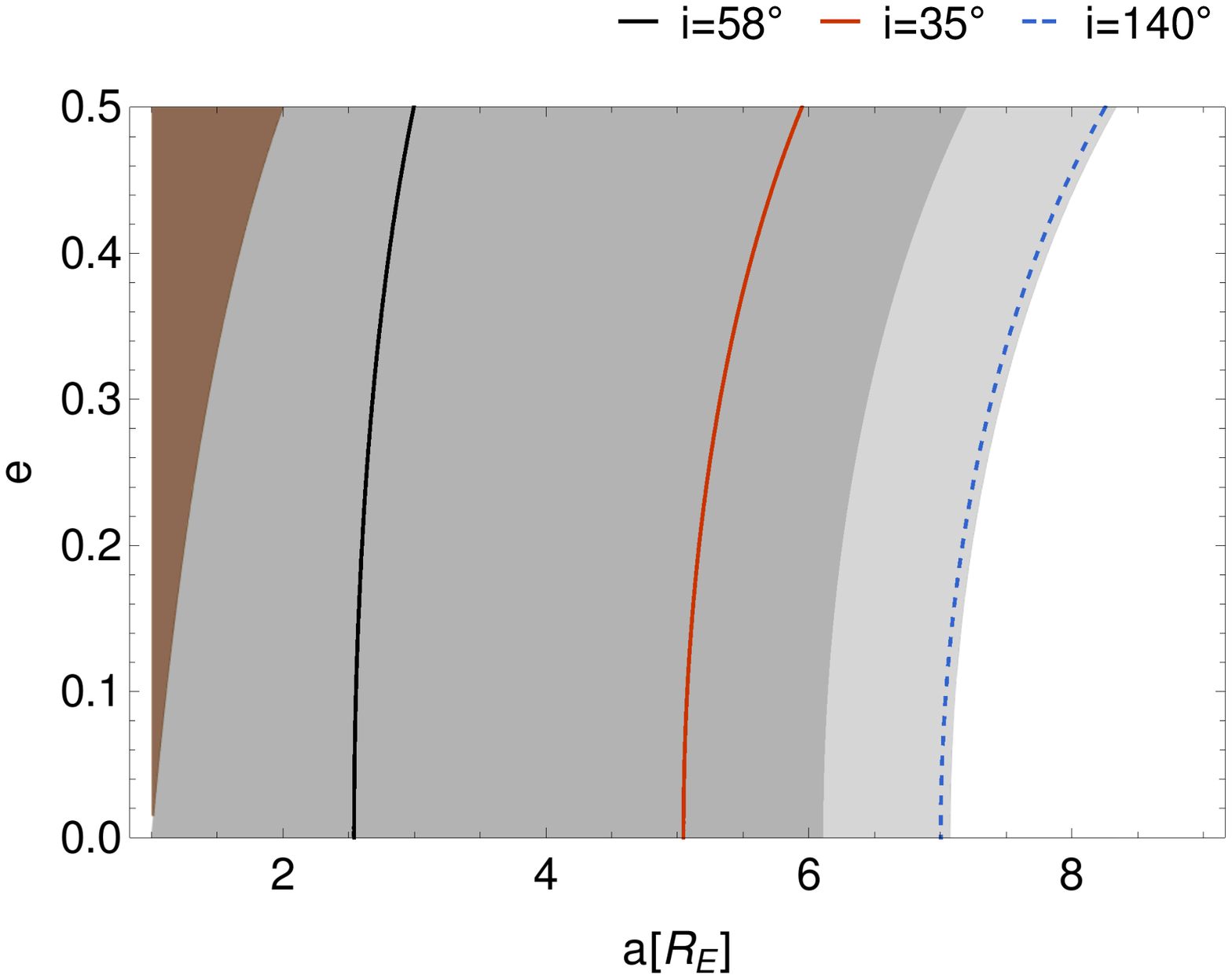}

    \vspace{0.3cm}

\includegraphics[width=0.45\textwidth]{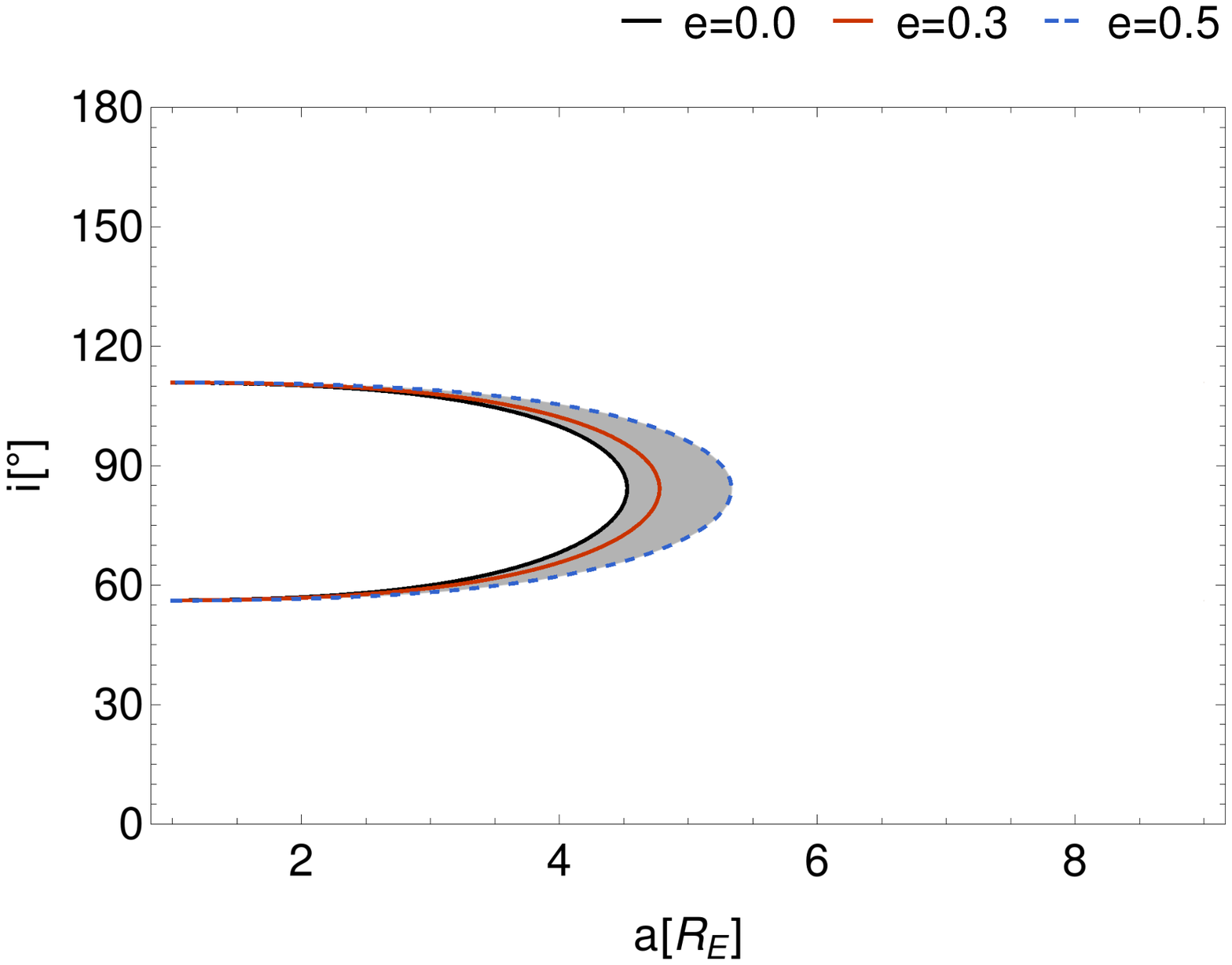}
\includegraphics[width=0.45\textwidth]{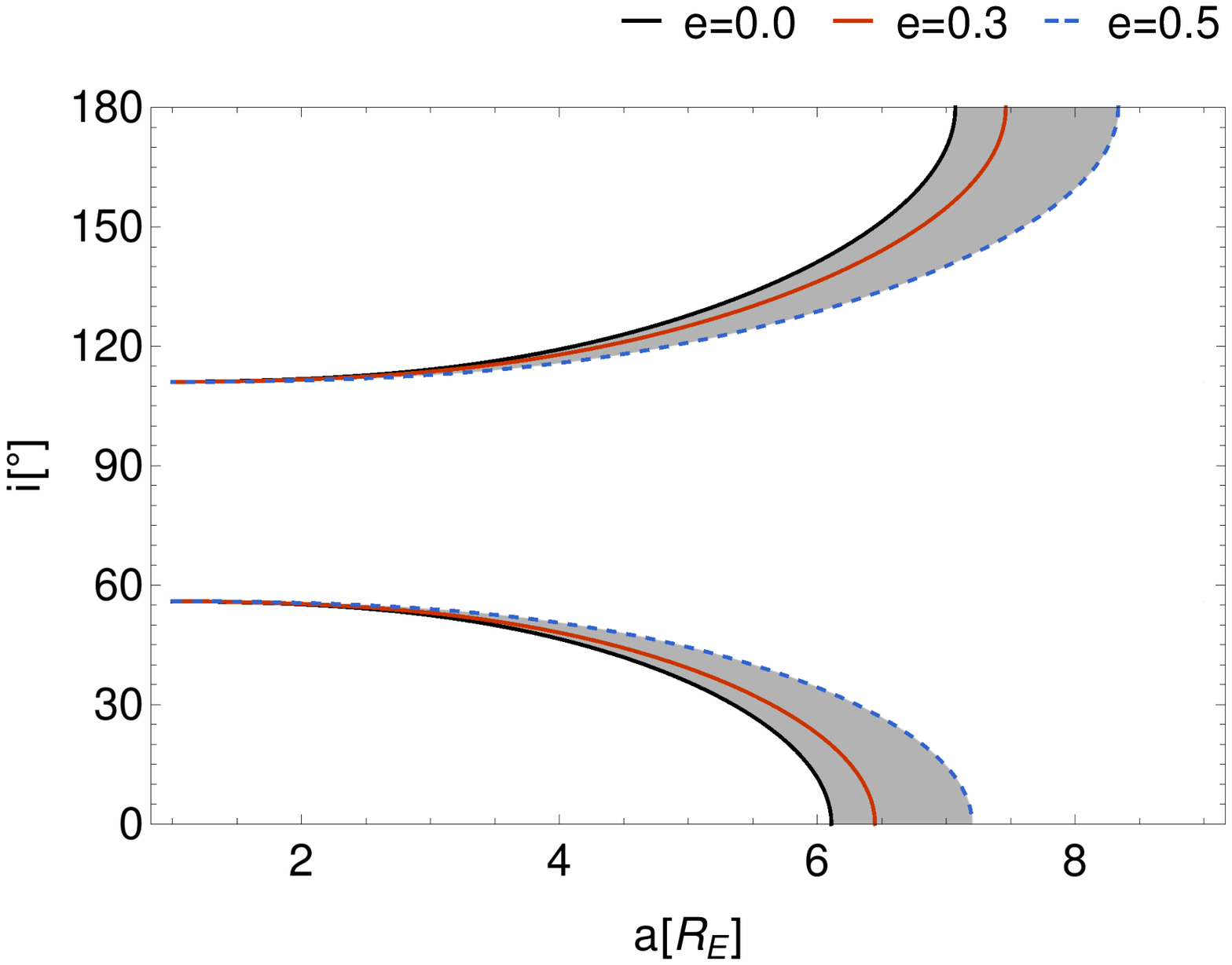}
\caption{Location of Lunar secular resonances in $(a,e)$-space
(top) and $(a,i)$-space (bottom) for $\alpha=2$, $\beta=1$,
$\alpha_M=0$, and $\beta_M=-1$ (left) and respectively
$\beta_M=+1$ (right). In the top panel, dark grey marks regions
with two solutions, light grey marks regions with one solution. At
the bottom panel, grey defines the space where solutions can be
found. Contours are shown for specific parameters provided in the
plot legends. \red{Parameters that lead to collisions are
highlighted in brown}.} \label{f:lunsec}
\end{figure}


\vskip.1in

We remark that another class of resonances is given by the \sl
mean motion resonances between the orbital period of the object and
the Moon, \rm which correspond to solutions of the
equation $\alpha\dot M+\beta\dot M_M=0$ for suitable
$\alpha,\beta\in \mathbb{Z}$; however, here we will not be interested to
such resonances, since they typically occur far from the Earth,
well outside the geostationary orbit.

\red{
\section{Conclusions}\label{sec:conclusions}
}

\red{
Despite the fact that several decades have lapsed since the first satellite was launched in space,
the dynamics of an object moving around the Earth is still an intriguing subject, especially when
considering the large number of space debris populating our sky. Within such context, it is
remarkable that the problem presents different time scales, since the Hamiltonian function describing the dynamics
depends on short-period angles (the mean anomaly of the object and the sidereal time),
secular angles (the arguments of perigee and the longitudes of the ascending node of the object, Sun and Moon),
semi-secular angles given by the mean anomalies of Moon and Sun (which have, respectively, a period of one
month and one year). This remark motivates the study of different types of resonances,
according to the quantities which are involved in the commensurability relation defining the resonance.
Based on elementary computations, our aim is to characterize in the quadrupolar approximation the
different resonances according to the values of the orbital elements, namely semimajor axis, eccentricity
and inclination. This \red{allows us} to specify the orbital elements regions where the resonances
can be found. We summarize below our main results.
}

\red{
        In Section~\ref{sec:tesseral} we study tesseral resonances of \red{order $j:\ell$:}
\begin{itemize}
    \item for fixed values of $a$, $e$, we obtain an expression for the inclination (Proposition~\ref{ref:pro1});
    \item for fixed values of $e$, $i$, we obtain an equation for the semimajor axis (Proposition~\ref{ref:pro2});
    \item for fixed values of $a$, $i$, we get an equation for the eccentricity (Proposition~\ref{ref:pro3}).
\end{itemize}
}

\red{
In Section~\ref{sec:solarsemi} we study the Solar semi-secular resonances and we obtain the following results:
\begin{itemize}
    \item for given values of $a$, $e$, we obtain an equation for $\cos i$, whose discussion leads to
    obtain bounds on the elements, ensuring the existence of zero, one, two solutions (Proposition~\ref{pro1S});
    \item for given values of $e$, $i$, we obtain a value for the semimajor axis and bounds on $e$, $i$,
    ensuring the existence of solutions (Proposition~\ref{pro2S});
    \item for given values of $a$, $i$, we obtain an expression for the eccentricity and bounds on $a$, $i$,
    ensuring the existence of solutions (Proposition~\ref{pro3S}).
\end{itemize}
}

\red{
Similar results are obtained in Section~\ref{sec:lunarsemi} for the Lunar semi-secular resonances as provided
by Propositions~\ref{pro:1L}, \ref{pro:2L}, \ref{pro:3L}.
}

\red{
Finally, in Section~\ref{sec:secular} we shortly analyze the Solar (Section~\ref{sec:solarsec}) and Lunar
(Section~\ref{sec:lunarsec}) secular resonances, which are extensively treated in \cite{CGPSIADS}
and \cite{HughesI}.
}

\vglue1cm

\red{
\section*{Appendix: The Fast Lyapunov Indicator (FLI)} \label{app:chaosIndicators}}

\red{FLI is a familiar chaos tool used to numerically investigate the stability of a dynamical system.
Comparing the values of the FLIs as the initial conditions or parameters are varied, one can distinguish between regular,
resonant or chaotic motions (see \cite{froes},  \cite{GL2018}). Here we briefly recall the definition of FLI and we describe how it is computed for our particular system.}

\red{The Lyapunov Characteristic Exponent (see \cite{BGS}) provides evidence of the chaotic character of the dynamics of a given dynamical system, since it measures the divergence of nearby trajectories. For a phase space of dimension $N$, there exist $N$
Lyapunov exponents, although the largest one is the most significative and is what we
refer to as the \sl Lyapunov exponent. \rm
This choice is motivated by the exponential rate of divergence, since
the greatest exponent dominates the overall separation.}

\red{Lyapunov exponents can be computed as follows (\cite{LAVW}):
let ${\underline \xi}=(L,G,H,M,\omega,\Omega)$ be the phase state associated with the Hamiltonian,
say $\mathcal{H}$ (see Section~\ref{sec:hamiltonian}). We can generically denote the evolution
in phase space as determined by the vector field
$$
\dot{\underline \xi}={\underline f}({\underline \xi})\ ,\qquad {\underline{\xi}}\in \mathbb{R}^6\ ,
$$
and the evolution on the tangent space by the corresponding variational equations
$$
\dot{\underline \eta}=\left({{\partial \underline{f}(\underline{\xi})} \over {\partial
\underline{\xi}}}\right)\ {\underline \eta}\ ,\qquad {\underline{\eta}}\in \mathbb{R}^6\ .
$$
We can assign the initial conditions by choosing $\underline{\xi}(0)$ and
each component of $\underline{\eta}(0) = {\eta}_j  (0) \, \underline{{\hat e}}_j$ in a basis $\underline{{\hat e}}_j$ of the tangent space. Then,
we can compute the quantities
$$
\chi_j \equiv \lim_{t\to\infty}  \lim_{|| \underline{\eta} (0)|| \to 0}{1\over t}
\log {{|{\eta}_j (t)|}\over {|{\eta}_j (0)|}}\ , \qquad j=1,...,6 \ ,
$$
where $\|\cdot\|$ denotes the Euclidean norm.}

\red{When dealing with a {\it Hamiltonian dynamical system} only $N/2$ of the $\chi_j$ are actually meaningful, so in our case we would have three exponents. In view of the exponential rate of divergence, we can concentrate on the greatest of them and estimate it by means of the formula
$$
\chi \equiv \lim_{t\to\infty} {1\over {t}} \log {{|| \underline{\eta} (t)||}\over {|| \underline{\eta} (0)||}}\ ,
$$
where $|| \underline{\eta} (t)||$ is the phase-space Euclidean distance at time $t$ between
trajectories at initial distance $|| \underline{\eta} (0)||$.}

\red{In order to investigate the stability of the dynamics for the models described
in the previous sections, we compute the so-called Fast Lyapunov Indicator, which is defined as the value of the largest Lyapunov characteristic exponent
{\it at a fixed time} (see \cite{froes}). By comparing the values of the FLIs as initial conditions or parameters are varied, one obtains
an indication of the dynamical character of the phase-space trajectories as well as of their chaoticity/regularity behaviour.
The explicit computation of the FLI proceeds as follows: the FLI at a given time $T\geq 0$ is obtained by the expression
$$
{\rm FLI}(\underline{\xi}(0),  \underline{\eta}(0), T) \equiv \sup _{0 < t\leq T}
 \log || \underline{\eta}(t)||\ .
$$
In practice, a \sl reasonable \rm choice of $T$ makes faster the computation of the FLI when compared with
previous expressions for the $\chi$'s where, in principle, very long integration times are required
to obtain a reliable convergence process. Analyzing the frequencies of the model, one can set the integration time $T$ as a good compromise between the accuracy of the computations and the length of the computer programs. For instance, the plots of Figure~\ref{f:tescart} are obtained by integrating, using
a Runge-Kutta fourth order method, the canonical equations and the associated variational equations for an interval of 5\,000 sidereal days. Such a value was calibrated against the frequency $\dot{M}$ of the fastest angle of the system (or equivalently the number of revolutions of the satellite around the Earth),
and it is comparable with the number of revolutions that were taken in
other papers, e.g. \cite{CGmajor, CGminor}.}

\end{document}